\tikzset{
mid arrow/.style={postaction={decorate, decoration={
markings,
mark=at position .6 with {\arrow{Straight Barb}}
}}},
}
\newtheorem{theorem}{Theorem}[section]
\newtheorem{lemma}[theorem]{Lemma}
\newtheorem{corollary}[theorem]{Corollary}
\theoremstyle{definition}
\newtheorem{definition}[theorem]{Definition}
\newtheorem{example}[theorem]{Example}
\theoremstyle{remark}
\newtheorem{remark}[theorem]{Remark}
\numberwithin{equation}{section}
\newcommand{\as}[1]{\left\langle #1\right\rangle}
\newcommand{\ov}[1]{\overline{ #1}}
\newcommand{\ow}[1]{\widetilde{ #1}}
\newcommand{\N}{\mathbb{N}}
\newcommand{\Z}{\mathbb{Z}}
\newcommand{\R}{\mathbb{R}}
\newcommand{\cp}{\textup{cap}}
\newcommand{\vp}{\varphi}
\newcommand{\si}{\sigma}
\newcommand{\Deg}{\textup{Deg}}
\newcommand{\pt}{\partial}
\newcommand{\De}{\Delta}
\newcommand{\al}{\alpha}
\newcommand{\be}{\beta}
\newcommand{\ka}{\kappa}
\newcommand{\Ga}{\Gamma}
\newcommand{\Q}{\mathcal{Q}}
\newcommand{\D}{\mathcal{D}}
\newcommand{\F}{\mathcal{F}}
\newcommand{\A}{\mathscr{A}}
\newcommand{\QD}{Q^{(D)}}
\newcommand{\QN}{Q^{(N)}}
\newcommand{\PD}{P_t^{(D)}}
\newcommand{\PN}{P_t^{(N)}}
\providecommand{\eat}[1]{}
\newcommand{\Hm}[1]{\leavevmode{\marginpar{\tiny%
$\hbox to 0mm{\hspace*{-0.5mm}$\leftarrow$\hss}%
\vcenter{\vrule depth 0.1mm height 0.1mm width \the\marginparwidth}%
\hbox to 0mm{\hss$\rightarrow$\hspace*{-0.5mm}}$\\\relax\raggedright
#1}}}
\begin{document}

\title[Form uniqueness]{Form uniqueness for graphs 
with weakly spherically symmetric ends}

\author[Hernandez]{Luis Hernandez}
\address{L.~Hernandez, York College of the City University of New York, 94-20 Guy R. Brewer Blvd., Jamaica, NY 11451}
\email{luis.hernandez2@yorkmail.cuny.edu}

\author[Ku]{Sean Ku}
\address{S.~Ku, UCLA Department of Mathematics, 520 Portola Plaza, Los Angeles, CA 90095}
\email{sk8980@nyu.edu, seanku@math.ucla.edu}

\author[Masamune]{Jun Masamune}
\address{J.~Masamune, Mathematical Institute, Tohoku University, 6-3, Aramaki Aza-Aoba, Aoba-ku, Sendai 980-8578, Japan}
\email{jun.masamune.c3@tohoku.ac.jp}

\author[Romanelli]{Genevieve Romanelli}
\address{G.~Romanelli, Graduate Center of the City University of New York, 365 Fifth Avenue, New York, NY, 10016}
\email{gromanelli@gradcenter.cuny.edu}

\author[Wojciechowski]{Rados{\l}aw K. Wojciechowski}
\address{R.~K.~Wojciechowski, Graduate Center of the City University of New York, 365 Fifth Avenue, New York, NY, 10016}
\address{York College of the City University of New York, 94-20 Guy R. Brewer Blvd., Jamaica, NY 11451}
\email{rwojciechowski@gc.cuny.edu}

\subjclass[2020]{Primary 39A12; Secondary 31C20, 47B39, 58J35, 60J27}

\date{\today}
\thanks{This work is supported by the Queens Experiences in 
Discrete Mathematics (QED) REU program funded by the 
National Science Foundation, Award Number DMS 2150251.}
\thanks{J.~M.~is additionally supported 
by JSPS KAKENHI Grant Number 23H03798 and LUPICIA CO., LTD}
\thanks{R.~K.~W.~is additionally supported by 
a Travel Support for Mathematicians Gift funded by the Simons Foundation
and a Department Chair Research Account established by the PSC and RF CUNY}

\begin{abstract}
We give characterizations for the failure of form uniqueness on
weakly spherically symmetric graphs. 
The first characterization is in terms of the graph structure, 
the second involves the capacity
of a Cauchy boundary. We also discuss the stability of form uniqueness and
give characterizations for graphs which, following the removal of a set,
consist of a disjoint union of weakly spherically symmetric graphs.
\end{abstract} 

\maketitle
\tableofcontents

\section{Introduction}
Form (equivalently, Markov) uniqueness is one of the fundamental questions
for infinite weighted graphs. This property has to do with boundary conditions
at infinity. For the form uniqueness viewpoint, one considers various
closed restrictions of the energy form to subspaces of the square summable functions and asks under which conditions all form restrictions agree, 
see \cites{HKLW12, HKMW13, KLW21, LSW21, KMW25, Sch20b}. 
An equivalent viewpoint is given
via consideration of Markov realizations of the associated Laplacian, that is,
realizations of the operator whose corresponding form is a Dirichlet form. 
For Markov uniqueness, 
one asks the question of when there is a unique Markov realization of
the Laplacian restricted to the finitely supported functions.
See \cite{HKLW12, Sch20a, Sch20b}
for the equivalence of the form and Markov uniqueness viewpoints for weighted graphs.
For a third viewpoint, we note that Markov uniqueness also means that the space 
has a unique Markov process associated with the 
energy, see \cite{FOT94} for general background.

In this paper we study form uniqueness for weakly spherically symmetric graphs as 
introduced in \cite{KLW13} and generalized in \cite{BG15}. 
In parallel to transience and stochastic incompleteness, we 
give a characterization for the failure of form uniqueness for such graphs in terms
of the edge weights, killing term, and vertex measure (Theorem~\ref{t:form_uniqueness_weights}). It is worth
noting that, for graphs of finite measure with no killing term, the
notions of transience, stochastic incompleteness and failure
of form uniqueness coincide, see \cite{Sch17}. Theorem~\ref{t:form_uniqueness_weights} 
shows that, for weakly spherically symmetric graphs with no killing term,
the only way that form uniqueness fails is when the graph has 
finite measure and is transient (equivalently, stochastically incomplete). 

We also look at form uniqueness 
in the context of a capacity of a boundary at infinity
as studied in \cite{HKMW13} and generalized in \cite{Sch20b}. Here,
we prove that the failure of form uniqueness for weakly spherically symmetric
graphs is equivalent to the existence
of a Cauchy boundary which has positive and finite capacity (Theorem~\ref{t:capacity}). 
For this result, we have to work
with strongly intrinsic path metrics.

We then discuss the stability of the failure of form uniqueness.
In this context we show that if a graph can be decomposed
into two subgraphs which are not too strongly connected, then
the graph is not form unique if and only if one of the subgraphs
is not form unique (Theorem~\ref{t:stability}).
For related results for stochastic incompleteness see \cite{Hua11, KL12},
for the failure of essential self-adjointness see \cite{IKMW25}.
We then apply this to the case of graphs which, following the removal
of a set of vertices, are disjoint unions of weakly spherically symmetric
graphs, i.e., graphs with weakly spherically symmetric ends. 
For these graphs, we prove that the failure of form uniqueness means that
at least one of the ends is not form unique (Theorem~\ref{t:ends}).

A more widely studied property on weighted graphs 
is essential self-adjointness.
This concerns the uniqueness
of self-adjoint extensions of the restriction of the Laplacian to the finitely supported function. As 
the uniqueness of self-adjoint extensions will imply
the uniqueness of Markov realizations, this is a stronger property than form uniqueness.
Essential self-adjointness of the Laplacian on graphs 
has been extensively studied in, for example, \cite{deTT11, JP11, HKMW13, HMW21, IKMW25, KL10, KL12, KMW25, Ku, Mil11,  
Sch20b, Tor10, Web10, Woj08}, among many other works.

In the recent paper \cite{IKMW25}, essential 
self-adjointness is characterized via the graph structure and another notion of capacity for 
birth--death chains which are the easiest class of weakly spherically symmetric graphs.
These characterizations for essential self-adjointness, however, 
do not extend to all weakly spherically symmetric
graphs as we are able to do for form uniqueness in this paper. One technical issue is that, 
for essential self-adjointness, one is not able to restrict to working 
with positive $\al$-harmonic functions but rather has to consider all such functions.

We also mention that Markov uniqueness
and essential self-adjointness are well-studied properties for manifolds, quantum graphs,
and general Dirichlet forms, see \cite{Che73, FOT94, Gaf51, Gaf54, GM13, HKLMS17, KMN22a, KMN22b, KN21, KN22, KN23, Mas99, Mas05, RS1, RS2, Str83 } 
for a sampling of the literature.

We now discuss the structure of the paper.
In Section~\ref{s:background} we introduce weakly spherically symmetric graphs,
energy forms, Laplacian and averaging operators and state some
general criteria for form uniqueness as well as basic facts
concerning weakly spherically symmetric graphs.
In Section~\ref{s:form_uniqueness} we prove our characterizations of form uniqueness via both the graph structure and capacity. We also 
show how a characterization for the Feller property of the Neumann
semigroup for birth--death chains from \cite{KMW25} extends to all 
weakly spherically symmetric graphs and discuss how the 
criteria for form uniqueness 
are related to characterizations for other properties.

In Section~\ref{s:stability} we discuss stability theory. There we show
how failure of form uniqueness is stable under decomposition of the graph
and then apply this to graphs with symmetric ends. We also give examples
to show how stability can fail, i.e., we give examples of graphs which
are not form unique and then attach a graph so that the entire
graph satisfies form uniqueness.

\section{Background material}\label{s:background}
We start by introducing the setting and giving some general background results.
First, we introduce graphs, then focus on weakly spherically
symmetric graphs and give several classes of examples. We then
introduce forms and graph Laplacians as well as the averaging operator. 
This gives another viewpoint on weak
spherical symmetry. Along the way, we introduce some general
theory for studying form uniqueness via $\al$-harmonic functions.

\subsection{Graphs}
We introduce the general setting following \cite{KL12, KLW21}.
Let $X$ denote a countable set whose elements are called \emph{vertices}. Let $b\colon X \times X \to [0,\infty)$ denote the \emph{edge 
weight} function which satisfies 

\begin{itemize}
\item{\makebox[3cm]{$b(x,x)=0$ \hfill} for all  $x \in X$,}
\item{\makebox[3cm]{$b(x,y)=b(y,x)$ \hfill} for all $x, y \in X,$}
\item{\makebox[3cm]{$\sum_{y \in X}b(x,y)<\infty$ \hfill} for all $x \in X.$}
\end{itemize}
Furthermore, let $c \colon X \to [0,\infty)$ denote the \emph{killing
term} and let $m \colon X \to (0,\infty)$ denote the \emph{vertex measure}.
We call $(b,c)$ a \emph{graph} over $(X,m)$. When $c=0$, we say
that $b$ is a \emph{graph} over $(X,m)$.
In what follows $c$, $m$ and their sum 
will be extended to all subsets of $X$ by additivity. That is, if $A \subseteq X$, then
$$c(A)=\sum_{x \in A}c(x), \qquad m(A) = \sum_{x \in A} m(x), \qquad \textup{and} \qquad (c+m)(A)=\sum_{x \in A}(c+m)(x).$$

Two vertices $x$ and $y$ are \emph{connected} or are \emph{neighbors} or \emph{form an edge} if $b(x,y)>0$. 
We write $x \sim y$ in this case. 
For a vertex $x \in X$, let 
$$\Deg(x) = \frac{1}{m(x)}\left(\sum_{y \in X}b(x,y) + c(x)\right)$$
denote the \emph{weighted vertex degree} which is
the total edge weight of edges incident to $x$ plus the killing
term normalized by the measure.

A graph is \emph{connected} if for any two
vertices $x, y \in X$ there exists a sequence of vertices $(x_k)$ which starts at $x$,
ends at $y$, and so that all subsequent vertices are connected. 
Such a sequence is called
a \emph{path} connecting $x$ and $y$.
A graph is \emph{locally finite} if every vertex has only finitely many neighbors. 

If $(x_k)$ is a path consisting of $n+1$ vertices for $n \in \N$, 
then the \emph{length} of $(x_k)$ is $n$,
i.e., the number of edges in the path. For connected graphs,
we denote the least length of a path connecting
$x$ and $y$ for $x \neq y$ by $d(x,y)$, let $d(x,x)=0$, and call $d$ the \emph{combinatorial graph distance}.

For a subset $O \subseteq X$ and $r \in \N_0$, let
$S_r(O) = \{ x \in X \mid d(x,O)=r\}$ denote the sphere of radius $r$ about $O$ and
$B_r(O) = \{ x \in X \mid d(x,O)\leq r \}$ denote the ball of radius $r$ about $O$.
Generally, we mask the dependence on $O$ and just write $S_r$ for $S_r(O)$
and $B_r$ for $B_r(O)$.
Let
$$\pt B(r) = \sum_{x \in S_r}\sum_{y \in S_{r+1}} b(x,y)$$
denote the \emph{edge boundary of the ball of radius $r$}, which is the sum
of the weights of all edges leaving the ball of radius $r$.

\subsection{Weakly spherically symmetric graphs}
We now focus on the class of graphs that will be of primary interest. 
For this class, we assume that all graphs are connected and locally finite. 
In particular, local finiteness implies that 
all balls and spheres about a finite set are finite. Fix a finite subset
$O \subseteq X$.
The definitions that follow depend on the choice
of $O$, though we will generally suppress this.

Let $C(X)= \{f \colon X \to \R\}$ denote the space of all functions.
A function $f \in C(X)$ is \emph{spherically symmetric with respect
to $O$} if $f(x)=f(y)$ for all $x, y \in S_r=S_r(O)$ and all $r\in \N_0$. We then
write $f(r)$ for the common value of $f$ on $S_r$.

For $x \in S_r$, $r \in \N_0$, let
$$\ka_\pm(x) = \frac{1}{m(x)}\sum_{y \in S_{r \pm 1}}b(x,y)$$
denote the \emph{outer} and \emph{inner vertex degree}, respectively,
where $S_{-1}=\emptyset$.
Furthermore, let $q(x) = c(x)/m(x)$.

\begin{definition}[Weakly spherically symmetric graphs]
A locally finite connected graph $(b,c)$ over $(X,m)$
is \emph{weakly spherically symmetric} if there exists a finite set $O \subseteq X$ such that
$\ka_\pm$ and $q$ are spherically symmetric functions with respect to $O$.
\end{definition}

\begin{remark}[On the definition of weakly spherically symmetric graphs]
Letting $\ka_0(x) = \frac{1}{m(x)}\sum_{y \in S_{r}}b(x,y)$ for $x \in S_r$, 
the weighted vertex degree of $x$ can be decomposed as
$$\Deg(x) = \ka_+(x)+\ka_-(x)+\ka_0(x) + q(x).$$
Thus, the assumption on weak spherical symmetry can be understood as an assumption on
parts of the weighted vertex degree when thinking of the graph as spheres
about $O$.
Furthermore, since we allow for a general measure and edge weight, the notion
of symmetry introduced above is quite weak. In fact, even
in the unweighted setting, this notion is quite weak, see Figure 1 in \cite{KLW13}
for some examples illustrating this.
\end{remark}


\begin{example}[Examples of weakly spherically symmetric graphs] 
We give three standard examples for weakly spherically symmetric graphs
below. 

(1) \textbf{Birth--death chains:}
A graph is called a \emph{birth--death chain}
if $X = \N_0=\{0, 1, 2, \ldots\}$, 
$b(x,y)>0$ if and only if $|x-y|=1$ and $c=0$. Such graphs
are clearly weakly spherically symmetric with respect to $O=\{0\}$ 
as each sphere consists of only one vertex. 
For birth--death chains, note that
$$\ka_+(r)=\frac{b(r,r+1)}{m(r)}, \quad  \ka_-(r)=\frac{b(r,r-1)}{m(r)},
\qquad \textup{and} \qquad \pt B(r)=b(r,r+1).$$

(2) \textbf{Spherically Symmetric Trees:} Let $b(x,y) \in \{0,1\}$,
$m=1$, and $c=0$. Fix a vertex $x_0 \in X$, let $O=\{x_0\}$
and define $b$ so that 
$$\ka_-(r) =1 \qquad \textup{ and } \qquad \ka_+(r)=k(r)$$ 
where $k\colon \N_0 \to \N$ is an arbitrary function which can be thought of as the branching number of vertices on a sphere.
Furthermore, let
$b \vert_{S_r \times S_r}=0$ for $r \in \N$, 
i.e., $\ka_0(x)=0$ for all $x \in X$, thus
ensuring that the resulting graph is a tree and call such graphs
\emph{weakly spherically symmetric trees}.
For such graphs, note that
$$\pt B(r) = \prod_{i=0}^r k(i).$$

(3) \textbf{Anti-trees:} Let $b(x,y) \in \{0,1\}$,
$m=1$, and $c=0$. Fix a vertex $x_0 \in X$ and let $O=\{x_0\}$. Choose an arbitrary
sphere growth function $s \colon \N_0 \to \N$ and define $b$ so that
$$\ka_-(r)=s(r-1) \qquad \textup{ and } \qquad \ka_+(r)=s(r+1).$$
In other words, every vertex in the sphere $S_r$ is connected to 
all vertices in the next sphere $S_{r+1}$ for all $r \in \N_0$. 
We call such graphs \emph{anti-trees}.
The idea for these graphs goes back 
to at least \cite{DK87} and was generalized to the form above in \cite{Woj11}. Anti-trees
have served as useful (counter-)examples
in various places, e.g., \cite{AS23, BKW15, BK13, CLMP20, GHM12, GS13, KR24, 
KS15, KMN22b, KN21, MUW12, Sad16, Web10, Woj11, Woj21}. For anti-trees, note that
$$\pt B(r)=s(r)s(r+1).$$

\end{example}

\subsection{Energy form and Laplacian}
We next introduce the necessary function spaces, forms and associated Laplacians.
Consider the following subspaces of $C(X)= \{ f \colon X \to \R\}$:
\begin{align*}
    C_c(X) &= \{ f \in C(X) \mid \textup{ the support of } f \textup{ is finite} \} \\
    \ell^p(X,m) &= \{ f\in C(X) \mid \sum_{x \in X} |f(x)|^p m(x) < \infty \} \quad \textup{ for } p \in [1,\infty) \\
    \ell^\infty(X) & = \{ f \in C(X) \mid \sup_{x \in X}|f(x)| <\infty \}.
\end{align*}
For $p \in [1,\infty)$, $\ell^p(X,m)$ is a Banach space
with norm $\|f\|_p^p = \sum_{x \in X}|f(x)|^pm(x)$.
For $p=2$, $\ell^2(X,m)$ is a Hilbert space with inner product
$$\as{f,g} = \sum_{x \in X}f(x)g(x)m(x)$$
and associated norm $\|f\|^2= \as{f,f}$. Finally, $\ell^\infty(X)$ is a Banach
space with norm
$$\|f\|_\infty = \sup_{x \in X}|f(x)|.$$

For $f \in C(X)$, let $\Q= \Q_{b,c}$ be given by
$$\Q(f) = \frac{1}{2}\sum_{x,y \in X}b(x,y)(f(x)-f(y))^2 + \sum_{x \in X}c(x)f^2(x)$$
and let
$$\D = \{ f \in C(X) \mid \Q(f)<\infty\}$$
denote the space of \emph{functions of finite energy}. For $f, g \in \D$,
let
$$\Q(f,g) =\frac{1}{2} \sum_{x,y \in X}b(x,y)(f(x)-f(y))(g(x)-g(y))+\sum_{x \in X}c(x)f(x)g(x)$$
be the \emph{energy form} and $\Q(f)$ the \emph{energy} of $f$.
For a spherically symmetric function $f$, the energy
is given by
$$\Q(f) = \sum_{r=0}^\infty \pt B(r)(f(r+1)-f(r))^2 + \sum_{r=0}^\infty c(S_r)f^2(r).$$

We consider two restrictions of $\Q$. The domain of the first restriction is the closure
of the finitely supported functions in the form norm. More precisely,
let $\QD$ denote the restriction of $\Q$ to
$$D(\QD)= \ov{C_c(X)}^{\|\cdot\|_\Q}$$
where $\| \vp \|^2_\Q= \|\vp\|^2 + \Q(\vp) $
denotes the form norm. For the second form, we let $\QN$ denote the restriction
of $\Q$ to
$$D(\QN)= \D \cap \ell^2(X,m).$$
A graph satisfies \emph{form uniqueness} if $\QD=\QN$.
In this case, there is a unique closed restriction of $\Q$ whose domain
contains the finitely supported functions, see Exercise~1.10 in \cite{KLW21}.

We now reformulate both form uniqueness and weak spherical symmetry
in terms of operators. Let
$$\F = \{ f \in C(X) \mid \sum_{y \in X} b(x,y) |f(y)|<\infty 
\textup{ for all } x \in X\}$$ 
and, for $f \in \F$ and $x \in X$, let
$$\De f(x) = \frac{1}{m(x)} \sum_{y \in X}b(x,y)(f(x)-f(y)) + \frac{c(x)}{m(x)} f(x)$$
denote the \emph{formal Laplacian}. 
For locally finite graphs, note that $\F = C(X)$.
Furthermore, it turns out that the restriction of $\De$ to
$\ell^p(X,m)$ is bounded for some (equivalently, all) $p \in [1,\infty]$
if and only if the weighted vertex degree function $\Deg$ is bounded,
see \cite{HKLW12, KLW21}.

For $\vp, \psi \in C_c(X)$ it is easy to see that
the following Green's formula holds:
$$\Q(\vp, \psi) = \as{\De \vp, \psi}= \as{\vp, \De \psi}$$
giving a connection between the energy form and Laplacian.
In particular, the restriction of the energy form to any
$\ell^p(X,m)$ space is bounded if and only if the restriction
of $\De$ to that $\ell^p(X,m)$ space is bounded. This boundedness is
equivalent to the boundedness of the weighted vertex degree as
noted above.

We now give an equivalent formulation of form uniqueness in terms of operators.
Let $L$ be a positive operator with associated closed form $Q$. Then
$L$ is a \emph{realization} of $\De$ if $L$ is a restriction of $\De$ and if
the domain of $Q$ contains $C_c(X)$. Furthermore, $L$ is a \emph{Markov realization}
of $\De$ if $L$ is a realization of $\De$ and $Q$ is additionally a Dirichlet form, 
i.e., a positive closed form which is compatible with normal contractions,
see \cite{FOT94, KLW21} for more on Dirichlet forms.

A graph satisfies \emph{Markov uniqueness} if there exists a unique Markov realization
of $\De$. The forms $\QD$ and $\QN$ are both Dirichlet forms and the arising operators
are Markov realizations of $\De$, see Example~3.10 in \cite{KLW21}. 
Markov uniqueness is equivalent 
to form uniqueness by results found in \cite{Sch20a, Sch20b}, see also \cite{HKLW12}
for the locally finite case and Theorem~3.12 in \cite{KLW21} for another presentation of this result.

Whenever $\De(C_c(X)) \subseteq \ell^2(X,m)$, we say that
$\De_c=\De \vert_{C_c(X)}$ is \emph{essentially self-adjoint}
if $\De_c$ has a unique self-adjoint extension. In this case, it is 
clear that $\De$ has a unique Markov realization and thus
form uniqueness holds. We will discuss further connections
to this property later.

For us, the most important viewpoint for form uniqueness will be that of $\alpha$-harmonic
functions. A function $u \in \F$ is \emph{$\al$-harmonic} for $\al \in \R$ if
$$(\De + \al)u = 0.$$
Our main technical tool is that the following general statement.
\begin{lemma}[Form uniqueness and $\al$-harmonic functions]\label{l:al-harmonic}
Let $(b,c)$ be a graph over $(X,m)$. Then the following statements are equivalent:
\begin{itemize}
\item[(i)] $\QD \neq \QN.$
\item[(ii)] There exists a non-zero $u \geq 0$ and $\al>0$ 
with $u \in \D \cap \ell^2(X,m)$ such that
$$(\De + \al)u=0.$$
\end{itemize}
\end{lemma}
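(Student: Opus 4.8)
The plan is to exploit the standard abstract framework connecting the two Dirichlet forms $\QD$ and $\QN$ with the resolvent of the Dirichlet Laplacian $\LD$. First I would observe that $D(\QD) \subseteq D(\QN)$ always holds, so $\QD = \QN$ fails precisely when the inclusion is proper, i.e., when there exists $u \in D(\QN) \setminus D(\QD)$. The key structural fact is that for $\al > 0$ one has the orthogonal decomposition $D(\QN) = D(\QD) \oplus_{\QN + \al} \mathcal{H}_\al$, where $\mathcal{H}_\al = \{ u \in \D \cap \ell^2(X,m) : (\De + \al) u = 0 \}$ is the space of $\ell^2$ functions of finite energy that are $\al$-harmonic. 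This is because $u \in D(\QN)$ is $(\QN+\al)$-orthogonal to $C_c(X)$ if and only if $\QN(u,\vp) + \al \as{u,\vp} = 0$ for all $\vp \in C_c(X)$, which by Green's formula $\Q(u,\vp) = \as{\De u, \vp}$ (valid since $\vp \in C_c(X)$ and $u \in \F = C(X)$) is equivalent to $\as{(\De + \al) u, \vp} = 0$ for all $\vp \in C_c(X)$, hence to $(\De + \al) u = 0$ pointwise. Since $C_c(X)$ is $\|\cdot\|_\Q$-dense in $D(\QD)$, $(\QN+\al)$-orthogonality to $C_c(X)$ is the same as $(\QN+\al)$-orthogonality to $D(\QD)$. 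Thus $\QD \neq \QN$ if and only if $\mathcal{H}_\al \neq \{0\}$ for some (equivalently every) $\al > 0$.

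The remaining work is to arrange the $\al$-harmonic function to be nonnegative. Given a nonzero $u \in \mathcal{H}_\al$, I would pass to $|u|$ and argue that one can extract a nonnegative nonzero element. The cleanest route is via the semigroup or resolvent: the form $\QN$ is a Dirichlet form (Example~3.10 in \cite{KLW21}), so its resolvent $G_\al^{(N)} = (\LN + \al)^{-1}$ is positivity preserving and sub-Markovian, and likewise for $\QD$. The orthogonal projection onto $\mathcal{H}_\al$ is $G_\al^{(N)} - G_\al^{(D)}$ up to the factor $\al$; more precisely, for $f \geq 0$ the function $u = (G_\al^{(N)} - G_\al^{(D)}) f$ lies in $\mathcal{H}_\al$ and satisfies $u \geq 0$ because $G_\al^{(N)} \geq G_\al^{(D)}$ pointwise (the Dirichlet form is the smallest, so its resolvent is the largest among Markov realizations — this is a standard domain-monotonicity statement for Dirichlet forms). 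If $\mathcal{H}_\al \neq \{0\}$, then $G_\al^{(N)} \neq G_\al^{(D)}$, so there is some $f \geq 0$, say $f \in C_c(X)$, with $(G_\al^{(N)} - G_\al^{(D)}) f \neq 0$, giving the desired nonzero $u \geq 0$ with $u \in \D \cap \ell^2(X,m)$ and $(\De + \al) u = 0$.

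For the converse direction, given a nonzero $u \geq 0$ in $\mathcal{H}_\al$ one simply reads the above decomposition backwards: $u \in D(\QN)$ is $(\QN+\al)$-orthogonal to $D(\QD)$ and nonzero, so $u \notin D(\QD)$, hence $D(\QD) \subsetneq D(\QN)$ and $\QD \neq \QN$. I would also note that the independence of the statement from the particular $\al > 0$ follows from the resolvent identity, so it suffices to produce such a $u$ for one value of $\al$.

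The main obstacle is the positivity upgrade: turning an arbitrary $\ell^2$ $\al$-harmonic function into a nonnegative one. Working directly with $|u|$ is delicate because $|u|$ need not be $\al$-harmonic and applying a normal contraction only gives a Kato-type inequality $\Q(|u|,\vp) \leq$ something rather than an identity. The clean resolution is to not start from an arbitrary $u$ at all, but to build $u$ from the outset as a difference of resolvents applied to a nonnegative $f$; the positivity is then automatic from the sub-Markovian property together with domain monotonicity of Dirichlet forms. So the real content of the proof is (a) identifying $\mathcal{H}_\al$ as the $(\QN+\al)$-orthogonal complement of $D(\QD)$ in $D(\QN)$ via Green's formula, and (b) invoking domain monotonicity of Dirichlet forms to get $G_\al^{(N)} \geq G_\al^{(D)} \geq 0$; everything else is bookkeeping.
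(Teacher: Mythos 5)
Your argument is correct and is essentially the standard proof of this statement: the paper itself gives no argument but defers to Theorem~3.2 in \cite{KLW21}, whose proof likewise runs through the $(\QN+\al)$-orthogonal decomposition $D(\QN)=D(\QD)\oplus\mathcal{H}_\al$ obtained from Green's formula (valid here because $u\in\D$, not merely $u\in\F$), and gets positivity by applying the resolvent difference to nonnegative finitely supported functions. The one soft spot is your justification of the pointwise domination $G^{(D)}_\al f\le G^{(N)}_\al f$ for $f\ge 0$: this does not follow from abstract domain monotonicity (which only yields the inequality in the quadratic-form sense) but from the Dirichlet-form structure --- the ideal property of $D(\QD)$ in $D(\QN)$, i.e.\ minimality of the Dirichlet resolvent among Markov realizations --- and your parenthetical even states the direction backwards (``the Dirichlet resolvent is the largest''), although the inequality you actually use is the correct one, so the proof stands once this standard fact is quoted properly.
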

See Theorem~3.2 in \cite{KLW21} for a proof.
Note that, on a connected graph, if $u \geq 0$ is non-zero and satisfies
$(\De+\al)u =0$, then $u >0$ by a Harnack principle, see Corollary~4.2
in \cite{KLW21}. 
We call functions satisfying $u \geq 0$ \emph{positive} and functions
satisfying $u>0$ \emph{strictly positive}. Thus, on connected graphs,
we may always  work with strictly positive $\al$-harmonic functions
when investigating form uniqueness.

Furthermore, note that, for any vertex $x_0 \in X$
such that $u(x_0)>0$, the equation $(\De+\al)u(x_0) =0$ for $\al>0$ implies
that $x_0$ must have a neighbor $x_1 \sim x_0$ such that $u(x_1)>u(x_0)$.
This basic observation has been used to obtain criteria for 
form uniqueness previously and 
will be used repeatedly in what follows. For example, it yields triviality
of $\alpha$-harmonic functions in $\ell^2(X,m)$ whenever the vertex measure
does not decay too strongly. More specifically, if every infinite path
has infinite measure or, more strongly, if the measure is uniformly bounded
from below, then essential self-adjointness and, thus, 
Markov uniqueness follow easily, see \cite{Woj08, KL12, KLW21}.

\subsection{Averaging operator}
Consider now locally finite connected graphs with $O \subseteq X$ finite and let $S_r = S_r(O)$. 
Define the 
\emph{formal averaging operator} with respect to $O$ as 
$$\A f(x) = \frac{1}{m(S_r)} \sum_{y \in S_r}f(y)m(y)$$
for $x \in S_r$ and $f \in C(X)$.
It is clear that $\A f$ is a spherically symmetric function
for any $f \in C(X)$.
We first show that restricting $\A$ to any $\ell^p(X,m)$
space produces a bounded operator. For $p=2$, this was observed
in \cite{KLW13}.
\begin{lemma}[$A$ is bounded]\label{l:average}
Let $(b,c)$ be a locally finite connected graph with $O \subseteq X$ finite.
If $\A$ is the averaging operator with respect to $O$, then
the restriction of $\A$ to $\ell^p(X,m)$ for $p \in [1,\infty]$
gives a bounded operator with norm $1$.
\end{lemma}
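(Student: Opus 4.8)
The plan is to prove boundedness with norm exactly $1$ by proving two things: that $\aV{\A}_{p \to p} \leq 1$ for each $p \in [1,\infty]$, and that equality holds. For the upper bound, I would treat the two extremal cases $p = 1$ and $p = \infty$ directly and then interpolate, or — more in keeping with the elementary spirit of the excerpt — just estimate $\aV{\A f}_p$ for general $p \in [1,\infty)$ by hand using Jensen's inequality, handling $p = \infty$ separately.

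For a fixed radius $r$, write $c_r = \frac{1}{m(S_r)}\sum_{y \in S_r} f(y) m(y)$ for the common value of $\A f$ on $S_r$. First I would observe that the probability measure $\mu_r$ on $S_r$ assigning mass $m(y)/m(S_r)$ to $y$ makes $c_r = \int f \, d\mu_r$, so by Jensen's inequality applied to the convex function $t \mapsto |t|^p$ (for $p \in [1,\infty)$) we get $|c_r|^p \leq \int |f|^p \, d\mu_r = \frac{1}{m(S_r)}\sum_{y \in S_r}|f(y)|^p m(y)$. Multiplying by $m(S_r)$ gives $\sum_{x \in S_r} |\A f(x)|^p m(x) = m(S_r) |c_r|^p \leq \sum_{y \in S_r}|f(y)|^p m(y)$. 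Summing over all $r \in \N_0$ — using that the spheres $S_r$ partition $X$ since the graph is connected and $O$ is finite — yields $\aV{\A f}_p^p \leq \aV{f}_p^p$, i.e. $\aV{\A}_{p\to p} \leq 1$. For $p = \infty$ the estimate is even simpler: $|c_r|$ is an average of values of $f$ on $S_r$, hence $|c_r| \leq \sup_{y \in S_r}|f(y)| \leq \aV{f}_\infty$, so $\aV{\A f}_\infty \leq \aV{f}_\infty$. For the reverse inequality, I would simply test on any nonzero spherically symmetric function, for instance $f = \mathbf{1}_{S_0}$, which is fixed by $\A$ (indeed $\A$ is the identity on spherically symmetric functions), so $\aV{\A f}_p = \aV{f}_p$; this shows the norm is at least $1$, hence exactly $1$.

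I do not expect a serious obstacle here — the statement is genuinely elementary. The one point requiring a word of care is the bookkeeping in the $p = \infty$ case: strictly speaking one wants $\aV{\A f}_\infty = \sup_r |c_r| \leq \sup_r \sup_{y \in S_r}|f(y)| = \aV{f}_\infty$, which is immediate. A second minor point is making sure $\A f$ really lies in $\ell^p(X,m)$ and not merely that the formal bound holds — but this is automatic from the finite bound $\aV{\A f}_p^p \leq \aV{f}_p^p < \infty$, so there is nothing further to check. The whole argument is a two-line application of Jensen plus summation over spheres, together with the trivial observation that $\A$ fixes constants on spheres to get sharpness.
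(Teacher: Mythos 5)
Your proposal is correct, and it takes a mildly but genuinely different route from the paper. The paper only handles the two extremal exponents by hand---the $p=1$ bound via an interchange of sums and the $p=\infty$ bound via the trivial averaging estimate---and then obtains all intermediate $p$ by invoking the Riesz--Thorin interpolation theorem (citing \cite{SW71}); sharpness comes, as in your argument, from the fact that $\A$ fixes spherically symmetric functions. You instead prove the bound $\aV{\A f}_p \leq \aV{f}_p$ directly for every $p \in [1,\infty)$ by viewing $\A f$ on $S_r$ as the integral of $f$ against the probability measure $m(\cdot)/m(S_r)$ on $S_r$, applying Jensen's inequality to $t \mapsto |t|^p$, multiplying back by $m(S_r)$, and summing over the spheres (which do partition $X$ by connectedness, and are finite by local finiteness of the graph and finiteness of $O$, so the probability measures are well defined); $p=\infty$ is handled separately as in the paper. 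Your argument buys self-containedness---no appeal to Riesz--Thorin---at essentially no extra cost, while the paper's version is shorter on the page because it outsources the interpolation step to a standard reference; note also that your first sentence explicitly offers the interpolation route as an alternative, so you have in effect subsumed the paper's proof. Your sharpness witness $\mathbf{1}_{S_0}$ is fine: it is spherically symmetric, lies in every $\ell^p(X,m)$ since $S_0=O$ is finite, and is fixed by $\A$.
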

\begin{proof}
Denote the restriction of $\A$
to $\ell^p(X,m)$ by $A^{(p)}$. We start with the case of $p=1$. Let $f\in \ell^1(X,m)$ 
and calculate as follows:
\begin{align*}
\|A^{(1)}f\|_1 &= \sum_{x \in X}|\A f(x)| m(x) = \sum_{r=0}^\infty 
\sum_{x \in S_r} |\A f(x)| m(x) \\
&= \sum_{r=0}^\infty \sum_{x \in S_r} \left| \frac{1}{m(S_r)} \sum_{y \in S_r} f(y) m(y)\right| m(x) \\
&\leq \sum_{r=0}^\infty  \sum_{y \in S_r} |f(y)| m(y) = \|f\|_1.
\end{align*}
This shows $\|A^{(1)}\| \leq 1$ and since $\A f=f$ for any spherically
symmetric function, $\|A^{(1)}\|=1$.

Consider now the case of $p=\infty$.
For $f \in \ell^\infty(X)$ and $x \in S_r$ for $r \in \N_0$
$$|\A f(x)| = \left| \frac{1}{m(S_r)} \sum_{y \in S_r}f(y)m(y) \right| 
\leq \|f\|_\infty$$
which implies $\|A^{(\infty)} f\|_\infty \leq \|f\|_\infty$ and thus, as for $p=1$,
$\A$ gives a bounded operator on $\ell^\infty(X)$ with operator
norm $1$. The statement for general $p \in [1,\infty]$ now follows
by the Riesz-Thorin interpolation theorem, see \cite{SW71}.
\end{proof}

It tuns out that a locally finite connected graph $(b,c)$ over $(X,m)$ is weakly spherically symmetric if and only if
the formal Laplacian and averaging operators commute,
i.e.,
$$\De \A = \A \De$$
on $C(X) =\F$,  see Lemma~3.3 in \cite{KLW13} or Lemma~9.8 in \cite{KLW21}
for a proof of this equivalence.
This gives an operator-theoretic viewpoint
on weak spherical symmetry.

For the convenience of the reader, we now state some basic
facts about weakly spherically symmetric graphs and the
averaging operator which will be used in what
follows.
\begin{lemma}[Basic facts about weakly spherically symmetric graphs]\label{l:basic_facts_wss}
Let $(b,c)$ be a weakly spherically symmetric graph over $(X,m)$
with formal averaging operator $\A$.
\begin{itemize}
\item[(a)] For $r \in \N_0$,
$$\pt B(r)= \ka_+(r)m(S_r) = \ka_-(r+1) m(S_{r+1})$$
and 
$$q(r)m(S_r)=c(S_r).$$
\item[(b)] If $f \in \F$ is spherically symmetric and $r \in \N_0$,
then $\De f$ is spherically symmetric with
$$\Delta f(r) = \ka_+(r)(f(r)-f(r+1)) + \ka_-(r)(f(r)-f(r-1)) + q(r)f(r)$$
or, equivalently,
$$\Delta f(r)m(S_r)= \pt B(r)(f(r)-f(r+1)) + \pt B(r-1)(f(r)-f(r-1))+c(S_r)f(r).$$
\item[(c)] If $v \in \F$ is $\al$-harmonic, then $u=\A v$ is spherically
symmetric and $\al$-harmonic.
\end{itemize}
\end{lemma}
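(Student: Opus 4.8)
The plan is to prove the three items of Lemma~\ref{l:basic_facts_wss} in order, with (c) being the main point while (a) and (b) are essentially bookkeeping that sets up the key identity $\De\A = \A\De$.

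For (a), I would start from the definitions $\ka_\pm(x) = \frac{1}{m(x)}\sum_{y \in S_{r\pm1}}b(x,y)$ and $\pt B(r) = \sum_{x \in S_r}\sum_{y \in S_{r+1}}b(x,y)$. Summing the defining relation for $\ka_+$ against the measure over $S_r$ gives $\sum_{x\in S_r}\ka_+(x)m(x) = \pt B(r)$; by weak spherical symmetry $\ka_+$ is constant on $S_r$, equal to $\ka_+(r)$, so the left side is $\ka_+(r)m(S_r)$, yielding $\pt B(r) = \ka_+(r)m(S_r)$. The identity $\pt B(r) = \ka_-(r+1)m(S_{r+1})$ is the same computation with $\ka_-$ on $S_{r+1}$, using that every edge counted in $\pt B(r)$ joins $S_r$ to $S_{r+1}$. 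The relation $q(r)m(S_r) = c(S_r)$ is immediate from $q(x)=c(x)/m(x)$, weak spherical symmetry of $q$, and additivity of $c$ over $S_r$.

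For (b), I would take $f \in \F = C(X)$ spherically symmetric and compute $\De f(x)$ at $x \in S_r$ directly from the definition $\De f(x) = \frac{1}{m(x)}\sum_{y}b(x,y)(f(x)-f(y)) + \frac{c(x)}{m(x)}f(x)$. Splitting the neighbor sum according to whether $y \in S_{r-1}$, $y \in S_r$, or $y \in S_{r+1}$ (the only possibilities for a neighbor of a vertex in $S_r$), and using that $f$ is constant on each sphere, the $S_r$ contribution vanishes since $f(x)-f(y)=0$ there, and the remaining two terms collapse to $\ka_+(r)(f(r)-f(r+1)) + \ka_-(r)(f(r)-f(r-1)) + q(r)f(r)$. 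Since the right side depends only on $r$, $\De f$ is spherically symmetric. Multiplying through by $m(S_r)$ and invoking part (a) converts $\ka_\pm(r)m(S_r)$ into $\pt B(r)$ and $\pt B(r-1)$ and $q(r)m(S_r)$ into $c(S_r)$, giving the second form.

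For (c), the cleanest route is to invoke the operator identity $\De\A = \A\De$ on $C(X)$, which holds precisely because the graph is weakly spherically symmetric (Lemma~3.3 in \cite{KLW13}, quoted in the excerpt). Given $v \in \F$ that is $\al$-harmonic, set $u = \A v$. Then $u$ is spherically symmetric because $\A$ always produces spherically symmetric functions, as noted right after the definition of $\A$. For $\al$-harmonicity, I would compute $(\De + \al)u = \De\A v + \al\A v = \A\De v + \al\A v = \A(\De v + \al v) = \A(0) = 0$, using linearity of $\A$ and the commutation identity. The only subtlety to check is that $u \in \F$, but since the graph is locally finite we have $\F = C(X)$, so this is automatic; one should also note $\A$ preserves $C(X)$, which is clear since $m(S_r) < \infty$ for all $r$. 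The main obstacle, if any, is purely organizational: making sure that (a) is established before it is used in the second displayed formula of (b), and that the commutation identity $\De\A = \A\De$ is the right tool for (c) rather than attempting a direct computation with the formula from (b) — though that direct route also works, writing $\De(\A v)(r)$ via (b) and then recognizing $\A v(r) = \frac{1}{m(S_r)}\sum_{y\in S_r}v(y)m(y)$, one can verify $\De(\A v)(r)m(S_r) = \A(\De v)(r)m(S_r)$ by a spheres computation, but this is more laborious than quoting the known identity.
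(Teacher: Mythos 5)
Your proposal is correct and follows the paper's proof essentially verbatim: (a) by definitions and an interchange of sums, (b) by the definition of $\De$ split over spheres together with (a), and (c) by invoking the equivalence of weak spherical symmetry with the commutation $\De \A = \A \De$ on $C(X)=\F$ and computing $(\De+\al)\A v = \A(\De+\al)v = 0$. The extra details you supply (e.g.\ $\F=C(X)$ by local finiteness and finiteness of spheres) are consistent with what the paper leaves implicit.
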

\begin{proof}
Property (a) follows from definitions and an interchange
of sums while (b) follows by definitions and by (a).

Statement (c) follows since weak spherical symmetry is equivalent
to the fact that $\De$ and $\A$ commute on $C(X)=\F$, see Lemma~9.8
in \cite{KLW21}. Thus, if $v \in C(X)$ satisfies $(\De +\al)v=0$,
then $u=\A v$ is spherically symmetric and satisfies
$$(\De + \al)u = (\De + \al)\A v = \A (\De+\al)v = 0.$$
This completes the proof.
\end{proof}

\section{Form uniqueness}\label{s:form_uniqueness}
In this section we prove our characterizations for form
uniqueness on weakly spherically symmetric graphs.
We first prove a characterization
in terms of the graph structure. Here, the key steps are to show that energy
is well-behaved with respect to averaging and to study when $\al$-harmonic
functions have finite energy. We then use this result to give a characterization via capacity as well as studying the Feller property for the Neumann semigroup. Finally, we compare our criteria to those
for other properties in the case of no killing term.

\subsection{Form uniqueness via the graph structure}
In this subsection we will prove the characterization of form uniqueness
via the graph structure. For this, there are two essential ingredients. The first
allows us to compare the energy of an averaged function to the original function.
The second gives a characterization for when spherically symmetric $\al$-harmonic
functions have finite energy. We start with the first ingredient.

\begin{lemma}[Averaging and energy]\label{l:averaging_energy}
Let $(b,c)$ be a weakly spherically symmetric graph over $(X,m)$. If $v \in C(X)$, then
$$\Q(\A v) \leq \Q(v).$$
In particular, if $v \in \D$, then $\A v \in \D$.
\end{lemma}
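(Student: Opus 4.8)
The plan is to prove the energy inequality $\Q(\A v) \leq \Q(v)$ by exploiting the spherical symmetry of $\A v$ together with the Cauchy--Schwarz inequality applied on each sphere. Recall from Lemma~\ref{l:basic_facts_wss} that $\A v$ is spherically symmetric, so its energy can be written via the boundary sums as
\begin{equation*}
\Q(\A v) = \sum_{r=0}^\infty \pt B(r)\big(\A v(r+1) - \A v(r)\big)^2 + \sum_{r=0}^\infty c(S_r) (\A v)^2(r),
\end{equation*}
where $\A v(r) = \frac{1}{m(S_r)}\sum_{x \in S_r} v(x) m(x)$ is the measure-weighted average of $v$ over $S_r$. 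The goal is to bound each of the two sums separately by the corresponding pieces of $\Q(v)$.

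For the killing term, I would first observe that by Cauchy--Schwarz (in the form $(\sum a_i)^2 \le (\sum c_i)(\sum a_i^2/c_i)$, or more directly Jensen applied to the probability measure $m/m(S_r)$ on $S_r$), one has $(\A v)^2(r) \le \frac{1}{m(S_r)}\sum_{x \in S_r} v^2(x) m(x)$. Multiplying by $c(S_r) = q(r) m(S_r)$ and using spherical symmetry of $q$ gives $c(S_r)(\A v)^2(r) \le \sum_{x \in S_r} q(x) v^2(x) m(x) = \sum_{x \in S_r} c(x) v^2(x)$, and summing over $r$ bounds the killing part of $\Q(\A v)$ by the killing part of $\Q(v)$.

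For the edge part, the main point is to control $\pt B(r)\big(\A v(r+1) - \A v(r)\big)^2$. I would write $\A v(r+1) - \A v(r)$ as a weighted combination of the differences $v(y) - v(x)$ over edges $x \in S_r$, $y \in S_{r+1}$: using that $\sum_{y \in S_{r+1}} b(x,y) = \ka_+(x) m(x) = \ka_+(r) m(x)$ and the analogous identity for $\ka_-$, one can express $m(S_r)\A v(r) = \sum_x v(x) m(x)$ and similarly for $r+1$ in a way that the difference becomes $\frac{1}{\pt B(r)}$ times a sum over the edges between $S_r$ and $S_{r+1}$ of $b(x,y)(v(y)-v(x))$, using part (a) of Lemma~\ref{l:basic_facts_wss} to match coefficients. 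Then Cauchy--Schwarz with weights $b(x,y)$ yields
\begin{equation*}
\pt B(r)\big(\A v(r+1) - \A v(r)\big)^2 \le \sum_{x \in S_r}\sum_{y \in S_{r+1}} b(x,y)(v(y)-v(x))^2,
\end{equation*}
and summing over $r$ bounds the edge part of $\Q(\A v)$ by the part of $\Q(v)$ coming from edges between consecutive spheres. Since $\Q(v)$ also contains nonnegative contributions from edges within a sphere (the $\ka_0$ terms), which $\A v$ does not see at all, this suffices. Adding the two estimates gives $\Q(\A v) \le \Q(v)$, and the final sentence ("if $v \in \D$ then $\A v \in \D$") is then immediate from the definition $\D = \{f : \Q(f) < \infty\}$.

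The step I expect to be the main obstacle is verifying the algebraic identity that rewrites $\A v(r+1) - \A v(r)$ purely as an average of edge-differences $v(y) - v(x)$ with the correct normalization $1/\pt B(r)$; this requires carefully using $\pt B(r) = \ka_+(r) m(S_r) = \ka_-(r+1) m(S_{r+1})$ so that the "telescoping" of the two sphere-averages is compatible with a single weighted sum over the crossing edges. Once that bookkeeping is done, the two applications of Cauchy--Schwarz are routine.
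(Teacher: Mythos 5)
Your proposal is correct and follows essentially the same route as the paper: split $\Q$ into the edge and killing parts, use Jensen/Cauchy--Schwarz on each sphere for the killing term, and rewrite $\A v(r+1)-\A v(r)$ as a $b$-weighted sum over crossing edges via $\pt B(r)=\ka_+(r)m(S_r)=\ka_-(r+1)m(S_{r+1})$ before applying Cauchy--Schwarz (the paper phrases this step as Sedrakyan's inequality). The ``main obstacle'' you flag is exactly the identity the paper verifies, and it goes through precisely because weak spherical symmetry makes $\ka_\pm$ constant on spheres, as you indicate.
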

\begin{proof}
We write $\Q_{b,c}=\Q_{b,0} + \Q_{0,c}$ and work with each piece separately. If $v \not \in \D$,
there is nothing to show. Thus, we may assume that $v \in \D$.

We first show the result for $\Q_b = \Q_{b,0}$. Here we use the identity
$$\pt B(r)= \ka_+(r)m(S_r) = \ka_-(r+1) m(S_{r+1})$$
from Lemma~\ref{l:basic_facts_wss}~(a) as well as the following variant on Cauchy--Schwarz called Sedrakyan's inequality
which holds for $\al_i,\be_i \in \R$ with $\be_i > 0$:
$$\frac{\left(\sum_{i} \al_i \right)^2}{\sum_{i} \be_i} \leq \sum_{i} \frac{\al_i^2}{\be_i}.$$
We calculate as follows:
\begin{align*}
\Q_b(\A v) &= \sum_{r=0}^\infty \pt B(r)(\A v(r) - \A v(r+1))^2 \\
&= \sum_{r=0}^\infty \pt B(r)\left(\frac{1}{m(S_r)}\sum_{x \in S_r} v(x)m(x) - \frac{1}{m(S_{r+1})} \sum_{y \in S_{r+1}} v(y)m(y) \right)^2 \\
&= \sum_{r=0}^\infty \pt B(r)\left(\frac{\ka_+(r)}{\pt B(r)}\sum_{x \in S_r} v(x)m(x) - \frac{\ka_-(r+1)}{\pt B(r)} \sum_{y \in S_{r+1}} v(y)m(y) \right)^2 \\
&= \sum_{r=0}^\infty \frac{1}{\pt B(r)}\left(\sum_{x \in S_r}\sum_{y \in S_{r+1}} b(x,y)v(x) - \sum_{y \in S_{r+1}} \sum_{x \in S_r} b(x,y) v(y) \right)^2 \\
&= \sum_{r=0}^\infty \frac{1}{\sum_{x \in S_r}\sum_{y \in S_{r+1}}b(x,y)}\left(\sum_{x \in S_r}\sum_{y \in S_{r+1}} b(x,y)(v(x) - v(y)) \right)^2 \\
&\leq \sum_{r=0}^\infty \sum_{x \in S_r}\sum_{y \in S_{r+1}} b(x,y)(v(x) - v(y))^2 \\
&\leq \Q_b(v)
\end{align*}
which shows the result for $\Q_b$.

For $\Q_c= \Q_{0,c}$, this follows by Cauchy--Schwarz directly as
\begin{align*}
\Q_c(\A v) &= \sum_{x \in X}\frac{c(x)}{m(x)} (\A v(x))^2 m(x) \\
&= \sum_{r=0}^\infty q(r) (\A v(r))^2 m(S_r) \\
&= \sum_{r=0}^\infty q(r) \left(\frac{1}{m(S_r)}\sum_{x \in S_r}v(x)m(x)\right)^2 m(S_r) \\
&\leq \sum_{r=0}^\infty q(r) \sum_{x \in S_r} v^2(x)m(x) \\
&= \sum_{r=0}^\infty \sum_{x \in S_r} c(x)v^2(x) = \Q_c(v).
\end{align*}
Combining the two inequalities completes the proof.
\end{proof}

The second ingredient
is a characterization for when spherically symmetric
$\al$-harmonic functions have finite energy.
We first recall a recurrence formula which 
gives the existence of such functions
as well as criterion for them to be bounded.
\begin{lemma}[Recurrence relation]\label{l:recurrence}
Let $(b,c)$ be a weakly spherically symmetric graph over $(X,m)$.
Then $u$ is spherically symmetric and satisfies $(\De + \al) u =0$
for $\al \in \R$ if and only if
$$u(r+1)-u(r) = \frac{1}{\pt B(r)} \sum_{k=0}^r (q(k) + \al)m(S_k)u(k).$$
In particular, for $\al>0$, $u$ is positive and non-zero if and only if $u(0)>0$
in which case $u$ is monotonically increasing. Analogous statements
hold for the case of $u(0)<0$.

Furthermore, $u \in \ell^\infty(X)$ if and only if 
$$\sum_{r=0}^\infty \frac{(c+m)(B_r)}{\pt B(r)} < \infty. $$

\end{lemma}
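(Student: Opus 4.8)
The plan is to establish the three assertions of the lemma in turn: the first-order recurrence, the positivity/monotonicity dichotomy for $\al>0$, and the $\ell^\infty$ criterion (we may assume the graph is infinite, the finite case being trivial).

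\emph{The recurrence.} I would start from the formula for $\De$ on spherically symmetric functions in Lemma~\ref{l:basic_facts_wss}~(b), together with $c(S_r)=q(r)m(S_r)$ from part~(a): for a spherically symmetric $u$, and with the conventions $S_{-1}=\emptyset$, $\pt B(-1)=0$, the equation $(\De+\al)u(r)=0$ is equivalent to
$$\pt B(r)(u(r+1)-u(r))-\pt B(r-1)(u(r)-u(r-1))=(q(r)+\al)m(S_r)u(r).$$
Writing $g(r):=\pt B(r)(u(r+1)-u(r))$, so that $g(-1)=0$, this reads $g(r)-g(r-1)=(q(r)+\al)m(S_r)u(r)$; telescoping from $0$ to $r$ gives $g(r)=\sum_{k=0}^r (q(k)+\al)m(S_k)u(k)$, and dividing by $\pt B(r)$ — which is positive because the graph is connected and infinite, so each $S_{r+1}$ is non-empty and any of its vertices has a neighbor in $S_r$ — is exactly the claimed recurrence. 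Conversely, a real sequence $(u(r))$ satisfying the recurrence extends to a spherically symmetric function on $X$ for which the same computation, read in reverse, recovers $(\De+\al)u(r)=0$ for every $r$; since $\De u$ is spherically symmetric by Lemma~\ref{l:basic_facts_wss}~(b), this gives $(\De+\al)u=0$ on all of $X$.

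\emph{Positivity and monotonicity.} For $\al>0$ I would induct on the recurrence: as $q(k)\ge 0$, $\al>0$ and $m(S_k)>0$, if $u(0)>0$ and $u(0),\dots,u(r)>0$ then the right-hand side is strictly positive, so $u(r+1)>u(r)>0$; hence $u(0)>0$ yields $u$ strictly positive and monotonically increasing. If $u(0)=0$, the recurrence forces $u(1)=u(0)=0$ and inductively $u\equiv 0$; the case $u(0)<0$ follows by applying the above to $-u$. This is the stated dichotomy (for a non-zero spherically symmetric $\al$-harmonic $u$: positive iff $u(0)>0$, then increasing; negative iff $u(0)<0$, then decreasing). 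For the $\ell^\infty$ criterion I take $\al>0$ and a non-zero spherically symmetric $\al$-harmonic $u$; replacing $u$ by $-u$ if necessary, assume $u(0)>0$, so $u$ is increasing and $\aV{u}_\infty=\lim_r u(r)\in(0,\infty]$. Using $\sum_{k=0}^r(q(k)+\al)m(S_k)=c(B_r)+\al m(B_r)$ (Lemma~\ref{l:basic_facts_wss}~(a)), which is comparable to $(c+m)(B_r)$ up to the fixed factors $\min(1,\al)$ and $\max(1,\al)$, monotonicity of $u$ turns the recurrence into
$$u(0)\,\frac{c(B_r)+\al m(B_r)}{\pt B(r)}\le u(r+1)-u(r)\le u(r)\,\frac{c(B_r)+\al m(B_r)}{\pt B(r)}.$$
The lower bound and telescoping show that boundedness of $u$ forces $\sum_r \tfrac{(c+m)(B_r)}{\pt B(r)}<\infty$. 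For the converse, set $\gamma_r:=\tfrac{c(B_r)+\al m(B_r)}{\pt B(r)}$; the upper bound gives $u(r+1)\le u(r)(1+\gamma_r)$, hence $u(r+1)\le u(0)\prod_{k=0}^r(1+\gamma_k)$, and if $\sum_k \tfrac{(c+m)(B_k)}{\pt B(k)}<\infty$ then $\sum_k\gamma_k<\infty$, so $\prod_k(1+\gamma_k)<\infty$ (since $\sum_k\log(1+\gamma_k)\le\sum_k\gamma_k$), and $u$ is bounded.

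\emph{Expected main obstacle.} The recurrence and the monotonicity are routine telescoping and induction, so they pose no real difficulty. The only step that needs a genuine idea is the converse half of the $\ell^\infty$ criterion: the crude upper bound for $u(r+1)-u(r)$ still contains $u(r)$, so boundedness cannot be read off directly and one is forced into the discrete Gr\"onwall / convergent-product argument above. Beyond that, the only care required is bookkeeping — the convention $\pt B(-1)=0$ (so that the $r=0$ case of the recurrence is not exceptional) and the comparison constants relating $c(B_r)+\al m(B_r)$ to $(c+m)(B_r)$.
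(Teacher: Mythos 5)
Your proof is correct and complete, and it takes the expected route: the paper itself only sketches the argument by citing Lemmas~9.16 and~9.26 of \cite{KLW21} (induction for the recurrence, ``basic estimates'' for the boundedness criterion). Your telescoping of $\pt B(r)(u(r+1)-u(r))$ using Lemma~\ref{l:basic_facts_wss}, the induction giving positivity and monotonicity, and the two-sided estimate combined with the discrete Gr\"onwall/convergent-product argument (valid since $c+\al m$ and $c+m$ are comparable for fixed $\al>0$) are precisely the standard details behind those citations.
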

\begin{proof}
The recurrence relation follows by induction, see Lemma~9.16 in \cite{KLW21}. 
The ``in particular'' statements are easy consequences of the recurrence relation.
Finally, the boundedness criterion follows from the recurrence formula
and some basic estimates, see Lemma~9.26 in \cite{KLW21}.
\end{proof}

We now give a characterization for when $\al$-harmonic spherically
symmetric functions have finite energy.

\begin{lemma}[Spherically symmetric energy]\label{l:harmonic_energy}
Let $(b,c)$ be a weakly spherically symmetric graph over $(X,m)$.
Let $u$ be spherically symmetric, non-zero and satisfying $(\De + \al) u =0$
for $\al>0$. Then $u \in \D$ if and only if
$$c(X)<\infty
\qquad and \qquad \sum_{r=0}^\infty \frac{\left( m(B_r) \right)^2}{\pt B(r)} < \infty.$$
\end{lemma}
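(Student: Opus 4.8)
The plan is to compute through the spherically symmetric energy formula $\Q(u)=\sum_{r=0}^\infty \pt B(r)(u(r+1)-u(r))^2+\sum_{r=0}^\infty c(S_r)u^2(r)$ together with the recurrence relation of Lemma~\ref{l:recurrence}. Since $u$ is non-zero, that lemma forces $u(0)\neq 0$, so after replacing $u$ by $-u$ if necessary we may assume $u(0)>0$; then $u$ is strictly positive and monotonically increasing. We may also assume the graph is infinite, since on a finite graph $\De+\al$ is positive definite and hence admits no non-zero $\al$-harmonic function; thus $\pt B(r)>0$ for all $r$, and of course $m(B_r)\geq m(O)>0$ for all $r$.

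The technical core is a two-sided bound on the increments $u(r+1)-u(r)$. Starting from
$$u(r+1)-u(r)=\frac{1}{\pt B(r)}\sum_{k=0}^r (q(k)+\al)m(S_k)u(k),$$
discarding $q(k)\geq 0$ and using $u(k)\geq u(0)$ gives a lower bound, while monotonicity $u(k)\leq u(r)$ together with $q(k)m(S_k)=c(S_k)$ from Lemma~\ref{l:basic_facts_wss}(a) gives an upper bound:
$$\frac{\al u(0)\, m(B_r)}{\pt B(r)}\;\leq\;u(r+1)-u(r)\;\leq\;\frac{u(r)\,(c(B_r)+\al m(B_r))}{\pt B(r)}.$$

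For the forward implication, assume $\Q(u)<\infty$. Since $u(r)\geq u(0)>0$, the killing sum dominates $u(0)^2\sum_r c(S_r)=u(0)^2 c(X)$, forcing $c(X)<\infty$; squaring the lower increment bound, the edge sum dominates $\al^2 u(0)^2\sum_r (m(B_r))^2/\pt B(r)$, which is therefore finite. For the converse, assume $c(X)<\infty$ and $\sum_r (m(B_r))^2/\pt B(r)<\infty$. Since $m(B_r)\geq m(O)>0$, this last condition also yields $\sum_r 1/\pt B(r)<\infty$ and $\sum_r m(B_r)/\pt B(r)<\infty$, hence $\sum_r (c+m)(B_r)/\pt B(r)\leq c(X)\sum_r 1/\pt B(r)+\sum_r m(B_r)/\pt B(r)<\infty$, so $u$ is bounded by the $\ell^\infty$ criterion of Lemma~\ref{l:recurrence}; write $M=\sup_r u(r)<\infty$. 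Then the killing sum is at most $M^2 c(X)<\infty$, and squaring the upper increment bound and using $(a+b)^2\leq 2a^2+2b^2$ bounds the edge sum by $2M^2(c(X)^2\sum_r 1/\pt B(r)+\al^2\sum_r (m(B_r))^2/\pt B(r))<\infty$. Thus $\Q(u)<\infty$.

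I expect the converse to be the main obstacle: one must first parlay the two hypotheses into boundedness of $u$ (via the $\ell^\infty$ criterion of Lemma~\ref{l:recurrence}) before the upper increment bound becomes usable, and this hinges on the elementary observation that $m(B_r)$ is bounded away from $0$, which converts $\sum_r (m(B_r))^2/\pt B(r)<\infty$ into control of $\sum_r 1/\pt B(r)$ and $\sum_r m(B_r)/\pt B(r)$. The forward direction and the derivation of the increment bounds are routine once the recurrence relation is available.
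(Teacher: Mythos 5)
Your proposal is correct and follows essentially the same route as the paper: two-sided bounds on the increments $u(r+1)-u(r)$ derived from the recurrence relation of Lemma~\ref{l:recurrence}, with the converse direction first establishing boundedness of $u$ via the $\ell^\infty$ criterion of that lemma (using that $m(B_r)$, resp.\ $(c+m)(B_r)$, is bounded away from zero) before summing the upper bound. The only differences are cosmetic, e.g.\ you split $c(B_r)$ and $\al m(B_r)$ via $(a+b)^2\leq 2a^2+2b^2$ where the paper keeps the combined term $(c+\al m)(B_r)$.
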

\begin{proof}
As pointed out previously, the energy of 
a spherically symmetric function $u$ is given by
\begin{equation}\label{e:energy}
\Q(u) = \sum_{r=0}^\infty \pt B(r)(u(r+1)-u(r))^2 + \sum_{r=0}^\infty c(S_r)u^2(r).
\end{equation}

We now estimate the first summand in \eqref{e:energy}, i.e., the energy
coming from the edge boundary of the ball.
From the recurrence relation in Lemma~\ref{l:recurrence} we get
$$\pt B(r)(u(r+1)-u(r))^2 = \frac{1}{\pt B(r)} \left(\sum_{k=0}^r (q(k) + \al)m(S_k)u(k)\right)^2.$$
Assuming that $u(0)>0$ so that $u$ is strictly increasing by Lemma~\ref{l:recurrence}, we
estimate this from above and from below as follows
\begin{align*}
\frac{1}{\pt B(r)}\left(\sum_{k=0}^r (q(k) + \al)m(S_k)\right)^2 u^2(0)  &\leq  \pt B(r)(u(r+1)-u(r))^2 \\
 &\leq \frac{1}{\pt B(r)}\left(\sum_{k=0}^r (q(k) + \al)m(S_k)\right)^2 u^2(r).
\end{align*}
Equivalently,
\begin{equation}\label{e:inequalities}
\frac{\left((c + \al m)(B_r)\right)^2}{\pt B(r)}  u^2(0) \leq \pt B(r)(u(r+1)-u(r))^2
\leq \frac{\left((c + \al m)(B_r)\right)^2}{\pt B(r)} u^2(r).
\end{equation}

If $c(X)<\infty$ and 
$\sum_r \frac{\left(m(B_r) \right)^2}{\pt B(r)} < \infty$, then 
$\sum_r \frac{((c + m)(B_r))^2}{\pt B(r)} < \infty$.
This implies $\sum_r \frac{(c + m)(B_r)}{\pt B(r)} < \infty$
and, thus, $u\in \ell^\infty(X)$ by Lemma~\ref{l:recurrence}. 
Therefore, it follows from the upper bound in \eqref{e:inequalities} that 
$$\pt B(r)(u(r+1)-u(r))^2
\leq \frac{\left((c + \al m)(B_r)\right)^2}{\pt B(r)} \|u\|_\infty^2$$
for all $r \in \N_0$ and thus $\sum_r \pt B(r)(u(r+1)-u(r))^2< \infty$.
Furthermore, $\sum_r c(S_r)u^2(r) \leq c(X)\|u\|_\infty^2$
so that $ u \in \D$ by \eqref{e:energy}.

Conversely, if $c(X)=\infty$, then $\sum_r c(S_r)u^2(r)=\infty$ since $u$ is strictly increasing  and thus
$u \not \in \D$ by \eqref{e:energy}.
On the other hand,
if $\sum_r \frac{\left(m(B_r) \right)^2}{\pt B(r)} = \infty$, then
$\sum_r \frac{\left((c + m)(B_r) \right)^2}{\pt B(r)} = \infty$ and thus
$\sum_r \pt B(r)(u(r+1)-u(r))^2= \infty$ from the lower bound in \eqref{e:inequalities}.
Therefore, $u \not \in \D$ by \eqref{e:energy}. 

The same argument works if $u(0)<0$. This completes the proof.
\end{proof}

As a direct consequence, 
we highlight that spherically symmetric $\al$-harmonic functions of finite energy 
are bounded. This was already used in the proof above.

\begin{corollary}[Finite energy implies bounded]\label{c:bounded}
Let $(b,c)$ be a weakly spherically symmetric graph over $(X,m)$.
Let $u$ be spherically symmetric, non-zero and satisfying $(\De + \al) u =0$
for $\al>0$. If $u \in \D$, then $u \in \ell^\infty(X).$
\end{corollary}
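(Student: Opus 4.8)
The plan is to read the conclusion off the estimates already assembled for Lemma~\ref{l:harmonic_energy} rather than to start afresh; indeed the remark ``This was already used in the proof above'' points exactly to this. By replacing $u$ with $-u$ if necessary we may assume $u(0)>0$, so that $u$ is (strictly) monotonically increasing by Lemma~\ref{l:recurrence}; the case $u(0)<0$ is symmetric. The idea is: $u\in\D$ forces the two growth conditions of Lemma~\ref{l:harmonic_energy}, those conditions force $\sum_r ((c+m)(B_r))^2/\pt B(r)<\infty$, and this in turn upgrades to $\sum_r (c+m)(B_r)/\pt B(r)<\infty$, which is precisely the boundedness criterion of Lemma~\ref{l:recurrence}.

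Concretely, first I would invoke Lemma~\ref{l:harmonic_energy}: from $u\in\D$ it yields $c(X)<\infty$ and $\sum_{r=0}^\infty (m(B_r))^2/\pt B(r)<\infty$. Adding $c(X)^2<\infty$ (or the same convexity bound used inside the proof of Lemma~\ref{l:harmonic_energy}) gives $\sum_{r=0}^\infty ((c+m)(B_r))^2/\pt B(r)<\infty$. To pass from the square to the first power, note that $O=B_0$ is a nonempty finite set with $m>0$, so $(c+m)(B_r)\ge (c+m)(B_0)=(c+m)(O)>0$ for every $r$, and therefore
$$\frac{(c+m)(B_r)}{\pt B(r)}=\frac{1}{(c+m)(B_r)}\cdot\frac{\left((c+m)(B_r)\right)^2}{\pt B(r)}\le\frac{1}{(c+m)(O)}\cdot\frac{\left((c+m)(B_r)\right)^2}{\pt B(r)}.$$
Summing over $r$ shows $\sum_{r=0}^\infty (c+m)(B_r)/\pt B(r)<\infty$, and by the boundedness criterion of Lemma~\ref{l:recurrence} this is exactly the condition equivalent to $u\in\ell^\infty(X)$. (In the degenerate case that $X$ is finite there is nothing to prove, and when $X$ is infinite and connected one has $\pt B(r)>0$ for all $r$, so all the quotients above are well defined.)

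I do not anticipate a real obstacle: the corollary is a repackaging of inequalities already in hand, and the only mildly substantive point is the ``drop one power'' step, which works precisely because the numerators $(c+m)(B_r)$ are bounded away from $0$. If one prefers to avoid routing through Lemma~\ref{l:harmonic_energy}, essentially the same argument runs directly from the lower bound in \eqref{e:inequalities}: $u\in\D$ and \eqref{e:energy} give $\sum_r \pt B(r)(u(r+1)-u(r))^2<\infty$, dividing the lower bound by $u^2(0)>0$ gives $\sum_r ((c+\al m)(B_r))^2/\pt B(r)<\infty$, and since $\al>0$ one has $(c+m)(B_r)\le C_\al\,(c+\al m)(B_r)$ with $(c+\al m)(B_r)\ge\al m(O)>0$, so again $\sum_r (c+m)(B_r)/\pt B(r)<\infty$ and Lemma~\ref{l:recurrence} concludes.
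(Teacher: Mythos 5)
Your argument is correct and is exactly the paper's proof: the paper also deduces the corollary by combining Lemma~\ref{l:recurrence} with Lemma~\ref{l:harmonic_energy}, passing from $c(X)<\infty$ and $\sum_r (m(B_r))^2/\pt B(r)<\infty$ to $\sum_r ((c+m)(B_r))^2/\pt B(r)<\infty$ and then to $\sum_r (c+m)(B_r)/\pt B(r)<\infty$, which is the boundedness criterion. Your ``drop one power'' step (using $(c+m)(B_r)\geq (c+m)(O)>0$) simply makes explicit a justification the paper leaves tacit inside the proof of Lemma~\ref{l:harmonic_energy}.
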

\begin{proof}
This follows directly by combining Lemmas~\ref{l:recurrence}~and~\ref{l:harmonic_energy}.
\end{proof}

We now put the pieces together to prove our first major result.
\begin{theorem}[Form uniqueness and graph structure]\label{t:form_uniqueness_weights}
Let $(b,c)$ be a weakly spherically symmetric graph over $(X,m)$.
Then the following statements are equivalent:
\begin{itemize}
\item[(i)] $\QD \neq \QN.$
\item[(ii)] 
$(c+m)(X) < \infty$ \qquad \textup{and} \qquad
$\displaystyle{\sum_{r=0}^\infty \frac{1}{\pt B(r)} < \infty}.$
\end{itemize}
\end{theorem}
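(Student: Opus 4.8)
The plan is to reduce everything to spherically symmetric $\al$-harmonic functions via the averaging operator, and then read off the conditions from Lemma~\ref{l:harmonic_energy} and Corollary~\ref{c:bounded}. By Lemma~\ref{l:al-harmonic}, statement (i) is equivalent to the existence of a non-zero $u \geq 0$ in $\D \cap \ell^2(X,m)$ with $(\De+\al)u = 0$ for some $\al > 0$. First I would observe that such a witness can be taken to be spherically symmetric: given $u$ as above, set $w = \A u$; then $w$ is spherically symmetric and $\al$-harmonic by Lemma~\ref{l:basic_facts_wss}(c), $w \in \D$ by Lemma~\ref{l:averaging_energy}, $w \in \ell^2(X,m)$ by Lemma~\ref{l:average}, $w \geq 0$, and $w$ is non-zero since $u(x_0) > 0$ for some $x_0$ forces $\A u(x_0) > 0$. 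Conversely, a spherically symmetric non-zero $w \geq 0$ in $\D \cap \ell^2(X,m)$ with $(\De+\al)w = 0$, $\al > 0$, is itself a witness for (i). Hence (i) is equivalent to the existence of such a spherically symmetric $w$.

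Next I would translate the membership conditions on $w$ into graph data. By Lemma~\ref{l:recurrence}, a spherically symmetric non-zero $w \geq 0$ with $(\De+\al)w=0$, $\al>0$, necessarily has $w(0) > 0$ and is monotonically increasing, and conversely prescribing $w(0) > 0$ produces exactly one such $w$ for each $\al$. By Lemma~\ref{l:harmonic_energy}, $w \in \D$ holds if and only if $c(X) < \infty$ and $\sum_{r} (m(B_r))^2/\pt B(r) < \infty$. Granting $w \in \D$, Corollary~\ref{c:bounded} gives $w \in \ell^\infty(X)$, so $\|w\|^2 \leq \|w\|_\infty^2\, m(X)$, and hence $m(X) < \infty$ implies $w \in \ell^2(X,m)$; conversely $w \geq w(0) > 0$ everywhere gives $\|w\|^2 \geq w(0)^2\, m(X)$, so $w \in \ell^2(X,m)$ forces $m(X) < \infty$. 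Putting these together, (i) is equivalent to the three conditions $c(X) < \infty$, $m(X) < \infty$, and $\sum_{r} (m(B_r))^2/\pt B(r) < \infty$.

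It remains to match these with (ii). Clearly $c(X) < \infty$ and $m(X) < \infty$ together amount to $(c+m)(X) < \infty$. Assuming $m(X) < \infty$, and using $0 < m(O) = m(B_0) \leq m(B_r) \leq m(X)$ for every $r$ (recall $O$ is finite, non-empty, and $m>0$), one has
\[
m(O)^2 \sum_{r=0}^\infty \frac{1}{\pt B(r)} \leq \sum_{r=0}^\infty \frac{(m(B_r))^2}{\pt B(r)} \leq m(X)^2 \sum_{r=0}^\infty \frac{1}{\pt B(r)},
\]
so the middle series converges if and only if $\sum_r 1/\pt B(r) < \infty$. This gives the equivalence of (i) and (ii).

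The only genuinely delicate point is the averaging reduction in the first paragraph, where one must check that $\A u$ retains \emph{all} the needed properties at once — non-negativity, non-triviality, finite energy, square summability, and $\al$-harmonicity — but each of these is supplied by a lemma already in hand; after that the argument is the elementary sandwich $m(O) \leq m(B_r) \leq m(X)$ combined with the energy and boundedness criteria.
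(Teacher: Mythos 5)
Your proposal is correct and takes essentially the same route as the paper: reduce to a spherically symmetric $\al$-harmonic witness via the averaging operator (using Lemmas~\ref{l:al-harmonic}, \ref{l:average}, \ref{l:averaging_energy}, \ref{l:basic_facts_wss}), then read off $(c+m)(X)<\infty$ and the convergence of $\sum_r 1/\pt B(r)$ from Lemmas~\ref{l:recurrence} and \ref{l:harmonic_energy} together with Corollary~\ref{c:bounded} and the sandwich $m(O)\leq m(B_r)\leq m(X)$. The only cosmetic difference is that you organize it as a single chain of equivalences rather than two separate implications, and you make explicit the bound $m(B_r)\geq m(O)>0$ that the paper uses implicitly.
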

\begin{proof}
(i) $\Longrightarrow$ (ii): Since $\QD \neq \QN$, there exists a positive non-trivial
$\al$-harmonic function $v \in \D \cap \ell^2(X,m)$ for $\al>0$ 
by Lemma~\ref{l:al-harmonic}. 

Let $u =A v$ where $A=A^{(2)}$ is the restriction of the averaging operator $\A$ to $\ell^2(X,m)$. 
By Lemma~\ref{l:average}, $A$ is bounded
on $\ell^2(X,m)$ and thus $u \in \ell^2(X,m)$.
As $\A$ commutes with $\De$ by spherical symmetry, it follows that $u$ is $\al$-harmonic by Lemma~\ref{l:basic_facts_wss}~(c).
As $u$ is spherically symmetric, positive and non-trivial, it follows that
$u$ is increasing by Lemma~\ref{l:recurrence} and
since $u \in \ell^2(X,m)$, $m(X)<\infty$ now follows.

By Lemma~\ref{l:averaging_energy}, $u \in \D$. Hence,
by Lemma~\ref{l:harmonic_energy}, we now obtain
$$c(X)<\infty 
\qquad \textup{ and } \qquad \sum_{r=0}^\infty \frac{\left(m(B_r) \right)^2}{\pt B(r)} < \infty$$
and, since we have already shown $m(X)<\infty$, it now follows that
$$(c+m)(X)<\infty 
\qquad \textup{ and } \qquad \sum_{r=0}^\infty \frac{1}{\pt B(r)} < \infty $$
as claimed.

(ii) $\Longrightarrow$ (i): 
Let $u \in C(X)$ be spherically symmetric with $u(0)>0$
and satisfying $(\Delta + \al)u=0$ for $\al>0$. Such a function exists 
by Lemma~\ref{l:recurrence}.
The assumptions $(c+m)(X) < \infty$ and 
$\sum_{r=0}^\infty \frac{1}{\pt B(r)} < \infty$
together imply
$$c(X)<\infty
\qquad \textup{and} \qquad \sum_{r=0}^\infty \frac{\left( m(B_r) \right)^2}{\pt B(r)} < \infty.$$
Therefore, $u \in \D$ by Lemma~\ref{l:harmonic_energy}. Furthermore, $u \in \ell^\infty(X)$
by Corollary~\ref{c:bounded}.
As $m(X)<\infty$, this implies $u \in \D \cap \ell^2(X,m)$ and thus $\QD \neq \QN$
by Lemma~\ref{l:al-harmonic}. This completes the proof.
\end{proof}

\begin{remark}
For the case of $c=0$, the implication (ii) $\Longrightarrow$ (i)
in the above is known by a result found in \cite{GHKLW15, Sch17} along 
with a characterization for transience of weakly spherically symmetric
graphs. More specifically,
in the case of graphs where $c=0$ and $m(X)<\infty$, 
$\QD \neq \QN$ is equivalent to the graph being transient or
stochastically incomplete. In the case of weakly
spherically symmetric graphs with $c=0$, this
is equivalent to $\sum_r 1/\pt B(r)<\infty,$ see
further discussion in Subsection~\ref{ss:connections} below.

Furthermore, if $m(X)<\infty$, by the 
equivalence between $\QD \neq \QN$, transience and stochastic
incompleteness, it is possible
to establish comparison theorems between weakly spherically symmetric
graphs and general locally finite graphs 
as in Theorem~6 of \cite{KLW13}, see also
Theorems~9.24~and~9.30 in \cite{KLW21}. More specifically, for $m(X)<\infty$, if a connected
locally finite graph has stronger degree growth than a weakly spherically symmetric
graph which is not form unique, then the general graph is not form unique.
Similarly, if a locally
finite graph has weaker degree growth than a weakly spherically symmetric
graph which is form unique, then the general graph is form unique.
See also, \cite{Woj17} for analogous statements for the Feller property.
For the case of $m(X)=\infty$, it is not clear how to establish
such comparison results. For an analogous proof to the results mentioned above,
we would need to be able
to use $\al$-sub/super harmonic functions in the criterion 
for form uniqueness in Lemma~\ref{l:al-harmonic}.
\end{remark}

\subsection{Form uniqueness via capacity}\label{ss:capacity}
We now give a characterization of form uniqueness
for weakly spherically symmetric graphs via capacity
of the Cauchy boundary. 
For general locally finite graphs with $c=0$, the
connection between form uniqueness and capacity was studied
in \cite{HKMW13}. 
For more general operators, which allow
for a killing term, and a generalized notion of capacity which involves
$1$-excessive functions, see \cite{Sch20b}.
In order to make our presentation more self-contained, we extend
the required results from \cite{HKMW13} to the case
of general $c$ in Appendix~\ref{s:appendix}.

In this subsection we assume that all graphs are locally finite and connected.
We start by recalling some definitions.
Let $\Ga_{x,y}$ denote the
set of paths starting at $x$ and ending at $y$ for $x,y \in X$. Let
$\si \colon X\times X \to [0,\infty)$ be symmetric such that $\si(x,y)>0$ if and only if
$x \sim y$ and call $\si(x,y)$ the \emph{length} of an edge $x \sim y$. For
$(x_k) \in \Ga_{x,y}$, let
$$l_\si((x_k))= \sum_k \si(x_k, x_{k+1})$$
denote the \emph{length} of the path $(x_k)$. For $x,y \in X$
with $x \neq y$, let
$$d_\si(x,y) = \inf_{(x_k)\in \Ga_{x,y}} l_\si((x_k))$$
with $d_\si(x,x)=0$, and call $d_\si$ a \emph{path metric}.
Note that, by local finiteness, $d_\si$ is a metric for every $\si$, not
just a pseudometric, see Lemma~11.8 in \cite{KLW21}.
Let $\ov{X}^\si$ denote the metric completion of $X$ with respect to $d_\si$ and
let 
$$\pt_\si X = \ov{X}^\si \setminus X$$
denote the \emph{Cauchy boundary} of the graph.
For locally finite graphs and path metrics, note that points in the
Cauchy boundary may be identified with geodesics of finite length,
see Appendix~A~ in \cite{HKMW13} or Section~11.2 in \cite{KLW21} for
more details on this viewpoint.

A pseudometric $\varrho\colon X \times X \to [0,\infty)$ 
is said to be \emph{intrinsic} if
$$\sum_{y \in X}b(x,y)\varrho^2(x,y) \leq m(x)$$
for all $x \in X$
and is said to be a \emph{path metric} if $\varrho=d_\si$ for
some length function $\si$. Furthermore,
a path metric $d_\si$ is called \emph{strongly intrinsic} if
$$\sum_{y \in X}b(x,y) \si^2(x,y) \leq m(x)$$
for all $x \in X$. 

\begin{example}\label{ex:intrinsic} We give two examples to illustrate the
notion of an intrinsic path metric.

(1) \textbf{Combinatorial graph distance:}
If $\si(x,y)=1$ for every $x\sim y$, then we recover the 
combinatorial graph distance $d$.
In this case, $\pt_\si X =\emptyset$
and $d$ is equivalent to a strongly intrinsic metric if and only if
the weighted vertex degree without the killing term is bounded.
As all forms and operators are bounded in this case, form uniqueness follows easily.

(2) \textbf{Degree path metric:}
Letting
$$\si(x,y) = \min\left\{\Deg(x)^{-1/2},\Deg(y)^{-1/2}\right\}$$
for $x \sim y$, where $\Deg$ is the weighted
vertex degree, gives a strongly intrinsic path metric
as can be checked by a direct
calculation. See \cite{Hua11b, FLW14} for
early appearances and \cite{Kel15, KLW21} for more discussion.
\end{example}

\begin{remark}[On the use of intrinsic metrics]
Intrinsic metrics are used to control the energy of cut-off
functions defined via the metric, see Appendix~\ref{s:appendix}
for an example of this. They are especially useful in situations
where the vertex degree is unbounded and the combinatorial graph distance
does not give results parallel to those for manifolds or strongly local
Dirichlet forms.
See \cite{FLW14} for general theory, \cite{Stu94} for strongly local Dirichlet
forms, and Part~3 in \cite{KLW21} for a variety
of uses for graphs.

For us, the most relevant use is that metric completeness, i.e., 
when $\pt_\si X = \emptyset$,
for locally finite graphs and intrinsic path metrics
implies that $\De_c=\De \vert_{C_c(X)}$ is essentially self-adjoint
which then implies form uniqueness. In fact, the statement is more general, in that
if the weighted vertex degree is bounded on distance balls
with respect to any intrinsic pseudometric, then $\De_c$
is essentially self-adjoint, see Theorem~1 in \cite{HKMW13}. 
For locally finite graphs and path metrics, metric completeness
translates into finiteness of distance balls, giving 
a Hopf-Rinow type theorem, see \cite{HKMW13,
KM19, KLW21}.
Thus, combining these results gives an analogue to a theorem of Gaffney on Riemannian
manifolds stating that geodesic completeness
implies Markov uniqueness \cite{Gaf51, Gaf54}, see also \cite{Che73, Str83}.
However, note that, unlike in the case of manifolds or strongly
local Dirichlet forms, there is no maximal intrinsic metric for graphs.
This was already observed in \cite{FLW14}, see also further developments
in \cite{LSS24}.

The notion of a strongly intrinsic path metric is required
as we want to study graphs that remain following the removal
of a neighborhood of the Cauchy boundary. 
For such graphs it is important that path metrics remain intrinsic
following the removal of some edges.
For details, see \cite{HKMW13} and also 
Appendix~\ref{s:appendix}.
\end{remark}

We now recall a notion of size for a set that is based
on both the measure and the energy.
For a set $K \subseteq X$, let
$$\cp(K) = \inf\{ \|u\|_\Q \mid u \in D(\QN) \textup{ with } u \geq 1 \textup{ on } K\}$$
where $\cp(K)=\infty$ if there is no such function $u$.
Note that $\cp(\emptyset)=0$ by letting $u=0$ in the definition.
For $K \subseteq \ov{X}^\si$, let
$$\cp(K) = \inf\{ \cp(U \cap X) \mid K \subseteq U \textup{ with } U \subseteq \ov{X}^\si 
\textup{ open} \}.$$

Given these definitions, we now state the following general result from \cite{HKMW13}.
\begin{theorem}[Theorem~3 in \cite{HKMW13}]\label{t:capacity_general}
Let $(b,c)$ be a locally finite graph over $(X,m)$. Let $d_\si$
be a strongly intrinsic path metric with Cauchy
boundary $\pt_\si X$. If $\cp(\pt_\si X)<\infty$, then
the following statements are equivalent:
\begin{itemize}
\item[(i)] $\QD \neq \QN$.
\item[(ii)] $0< \cp(\pt_\si X)$.
\end{itemize}
\end{theorem}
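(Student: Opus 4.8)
The plan is to reduce the statement to a question about the equilibrium potential of $\pt_\si X$, using that $\QD$ and $\QN$ are both restrictions of $\Q$, so that $\QD=\QN$ is the same as $D(\QD)=D(\QN)$. Since $\cp$ is built from $\|\cdot\|_\Q^2=\|\cdot\|^2+\Q(\cdot)$ it is the $1$-capacity, and $\al=1$ is the relevant parameter whenever one appeals to Lemma~\ref{l:al-harmonic}. For $c=0$ the assertion is Theorem~3 of \cite{HKMW13}; the work here is to rerun that proof with a killing term present, which amounts to re-establishing, for general $c$, the existence and a comparison principle for equilibrium and $\QD$-potentials, the fact that elements of $D(\QD)$ have vanishing capacity-theoretic trace on $\pt_\si X$, and the stability of intrinsic path metrics and of metric completeness under removal of a boundary neighbourhood. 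These are collected in Appendix~\ref{s:appendix}.

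For (ii) $\Longrightarrow$ (i), which is the softer direction, assume $0<\cp(\pt_\si X)<\infty$. Finiteness furnishes an equilibrium potential $e\in D(\QN)$ for $\pt_\si X$, namely a minimizer of $\|\cdot\|_\Q$ among functions of $D(\QN)$ that are $\geq 1$ on some neighbourhood of $\pt_\si X$; applying a normal contraction we may assume $0\le e\le 1$, and $\cp(\pt_\si X)>0$ forces $e\neq 0$. On the other hand, the capacity-theoretic trace on $\pt_\si X$ of any element of $D(\QD)$ vanishes, whereas that of $e$ is $\geq 1$ there, so $e\notin D(\QD)$; hence $D(\QD)\subsetneq D(\QN)$ and $\QD\neq\QN$. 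Equivalently, subtracting from $e$ the $\QD$-potential of the positive measure $(\De+1)e$ (which is supported near $\pt_\si X$) produces a non-zero positive solution of $(\De+1)u=0$ lying in $\D\cap\ell^2(X,m)$, and Lemma~\ref{l:al-harmonic} applies.

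For (i) $\Longrightarrow$ (ii) I would argue contrapositively and show that $\cp(\pt_\si X)=0$ implies $\QD=\QN$. Vanishing capacity produces cut-offs $\eta_n\in D(\QN)$ with $0\le\eta_n\le 1$, $\eta_n\equiv 1$ on open neighbourhoods $U_n$ of $\pt_\si X$ in $\ov X^\si$, and $\|\eta_n\|_\Q\to 0$; passing to a subsequence, $\eta_n\to 0$ pointwise. Since $\QN$ is a Dirichlet form its bounded elements are dense in $D(\QN)$, so it is enough to approximate a bounded $u\in D(\QN)$ in form norm by functions in $D(\QD)$. Put $u_n=(1-\eta_n)u$; the product rule for $\Q$ together with $\|u\|_\infty<\infty$, $u\in\D$ and $\|\eta_n\|_\Q\to 0$ give $u_n\to u$ in $\|\cdot\|_\Q$. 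Each $u_n$ vanishes on $U_n$, hence is supported in $X_n=X\setminus U_n$; the restriction of $d_\si$ to $X_n$ is still a strongly intrinsic path metric — this is precisely where strong intrinsicity is used, since removing edges only decreases $\sum_{y}b(x,y)\si^2(x,y)$ — and $X_n$ is metrically complete, because a $d_\si$-Cauchy sequence in $X_n$ converges in $\ov X^\si$ to a point which, lying outside the open set $U_n\supseteq\pt_\si X$, must be a vertex of $X_n$. By the Hopf--Rinow theorem for locally finite graphs the distance balls of $X_n$ are then finite, so $\De$ on $C_c(X_n)$ is essentially self-adjoint (with the deleted edges absorbed into the killing term), and therefore $u_n$ is a form-norm limit of functions in $C_c(X_n)\subseteq C_c(X)$, i.e.\ $u_n\in D(\QD)$. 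Hence $D(\QD)=D(\QN)$.

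The main obstacle is the third paragraph: controlling the energy of the metric cut-offs $\eta_n$ (for which the strongly intrinsic hypothesis is designed), checking that the truncated graphs $X_n$ remain metrically complete for an intrinsic path metric, and transferring essential self-adjointness on $X_n$ back to membership of $u_n$ in $D(\QD)$ on $X$. Throughout, the killing term $c$ must be carried through every estimate — the construction and comparison of potentials, the trace property of $D(\QD)$, Green's formula and the cut-off bounds — which is the content of Appendix~\ref{s:appendix}; once those ingredients are in place, the direction (ii) $\Longrightarrow$ (i) is essentially formal, its only delicate point being $e\notin D(\QD)$.
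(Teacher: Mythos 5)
Your contrapositive treatment of (i) $\Longrightarrow$ (ii) is essentially the paper's proof of Lemma~\ref{l:polar_capacity}: equilibrium potentials of shrinking neighbourhoods give $(1-\eta_n)u\to u$ in $\|\cdot\|_\Q$ for bounded $u\in D(\QN)$, and the piece supported in $Y=X\setminus U_n$ is handled through the path metric of the truncated graph, which dominates $d_\si$, stays strongly intrinsic after deleting edges, and is complete, so balls are finite by the discrete Hopf--Rinow theorem. The one divergence is the final step: the paper does not pass through essential self-adjointness of the truncated graph but shows directly that the metric cut-offs satisfy $\eta_r v\to v$ in $\|\cdot\|_\Q$, which avoids quoting an essential self-adjointness criterion that must itself be checked for nonzero killing term; your route works, but you would additionally have to cite such a criterion valid for general $c$, verify the identification of the full form with the truncated form (deleted edges absorbed into the killing term) on functions vanishing off $Y$, and say a word about disconnected $Y$, all of which the paper's direct estimate sidesteps.

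The genuine gap is in (ii) $\Longrightarrow$ (i). You invoke an equilibrium potential $e$ ``for $\pt_\si X$'', defined as a minimizer of $\|\cdot\|_\Q$ over functions in $D(\QN)$ that are $\geq 1$ on \emph{some} neighbourhood of $\pt_\si X$. That admissible class is a union over neighbourhoods of closed convex sets and is neither convex nor closed, so a minimizer need not exist in it; a minimizing sequence only yields a limit whose boundary values are $\geq 1$ in a quasi-everywhere sense, and making that precise (quasi-continuous extensions of elements of $D(\QN)$ to $\ov{X}^\si$, vanishing of such traces for elements of $D(\QD)$) is exactly the content of this direction, not bookkeeping that can be deferred; your alternative of subtracting the $\QD$-part of $e$ to produce a positive $1$-harmonic function hinges on the same unproved fact $e\notin D(\QD)$. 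The paper's Lemma~\ref{l:capacity_alternative} replaces all of this with a short direct argument which you should adopt: assuming $\QD=\QN$ and $\cp(\pt_\si X)<\infty$, choose an open $U\supseteq\pt_\si X$ with $\cp(U)<\infty$ and its equilibrium potential $e$ (this exists, being a minimization over a single closed convex set), and take $\vp_n\in C_c(X)$ with $\|e-\vp_n\|_\Q\to 0$; since each $\vp_n$ has finite support and vertices sit at positive $d_\si$-distance from $\pt_\si X$, one has $e-\vp_n\geq 1$ on $U_n\cap X$ for some neighbourhood $U_n$ of $\pt_\si X$, whence $\cp(\pt_\si X)\leq\|e-\vp_n\|_\Q\to 0$, contradicting $\cp(\pt_\si X)>0$. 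With that substitution your proposal matches the paper's proof.
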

\begin{remark}
Note that \cite{HKMW13} only allows for $c=0$, however, the arguments
carry over easily to the case of a non-zero killing term, see
Appendix~\ref{s:appendix} for details.
See also Theorem~6.9 in \cite{Sch20b} for an extension using a notion
of capacity which uses 1-excessive functions. In order to obtain Theorem~\ref{t:capacity_general}
as a special case, the $1$-excessive
function is the constant function $h=1$. In \cite{Sch20b}, the assumption
that $\cp(\pt_\si X)<\infty$ is reflected in the requirement that $1 \in D(\QN)$
which is equivalent to $(c+m)(X)<\infty$ and implies $\cp(\pt_\si X)<\infty$.
For related results for manifolds, see \cite{GM13, Mas99, Mas05}.
\end{remark}

Note that if $\cp(\pt_\si X) =\infty$, then a graph may or may not 
be form unique, see Examples~5.2~and~5.4 in \cite{HKMW13}.
For weakly spherically symmetric graphs, we now sharpen the result above by
showing that
infinite capacity of the boundary implies form uniqueness.

\begin{theorem}[Form uniqueness and capacity]\label{t:capacity}
Let $(b,c)$ be a weakly spherically symmetric graph over $(X,m)$. Let $d_\si$
be a strongly intrinsic path metric with Cauchy
boundary $\pt_\si X$. Then the following statements are equivalent:
\begin{itemize}
\item[(i)] $\QD \neq \QN$.
\item[(ii)] $0< \cp(\pt_\si X) <\infty$.
\end{itemize}
\end{theorem}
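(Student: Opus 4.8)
The plan is to deduce Theorem~\ref{t:capacity} from Theorem~\ref{t:form_uniqueness_weights} and Theorem~\ref{t:capacity_general}, the main point being to show that for weakly spherically symmetric graphs the condition $\cp(\pt_\si X)<\infty$ is \emph{automatic} whenever $\QD \neq \QN$, so that the hypothesis of Theorem~\ref{t:capacity_general} costs us nothing. Concretely, I would argue as follows.

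First, suppose (i) holds, i.e.\ $\QD \neq \QN$. By Theorem~\ref{t:form_uniqueness_weights} this gives $(c+m)(X)<\infty$. In that case the constant function $1$ lies in $\D \cap \ell^2(X,m) = D(\QN)$, since $\Q(1) = \sum_x c(x) = c(X) < \infty$ and $\|1\|^2 = m(X) < \infty$. Testing the definition of $\cp$ with $u = 1$ on any $K \subseteq X$ gives $\cp(K) \leq \|1\|_\Q = (\Q(1) + \|1\|^2)^{1/2} = (c(X)+m(X))^{1/2} < \infty$, and passing to the infimum over open neighborhoods yields $\cp(\pt_\si X) < \infty$. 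Thus the hypothesis of Theorem~\ref{t:capacity_general} is satisfied, and that theorem immediately gives $0 < \cp(\pt_\si X)$, hence $0 < \cp(\pt_\si X) < \infty$, which is (ii).

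Conversely, suppose (ii) holds, so in particular $\cp(\pt_\si X) < \infty$. Then Theorem~\ref{t:capacity_general} applies directly, and the equivalence there together with $\cp(\pt_\si X) > 0$ gives $\QD \neq \QN$, which is (i). (One could alternatively note that $\cp(\pt_\si X)<\infty$ already forces $1 \in D(\QN)$, hence $(c+m)(X)<\infty$, so that in the weakly spherically symmetric setting (ii) is genuinely stronger than merely having finite capacity; but this observation is not needed for the equivalence.)

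The only real content beyond bookkeeping is the passage from $\QD \neq \QN$ to $\cp(\pt_\si X)<\infty$ via finiteness of $(c+m)(X)$ — this is where weak spherical symmetry enters, through Theorem~\ref{t:form_uniqueness_weights}, and it is precisely the improvement over the general Theorem~\ref{t:capacity_general}, which must assume finite boundary capacity as a hypothesis. I expect the main obstacle to be purely expository: making sure that the definition of $\cp$ on subsets of $\ov{X}^\si$ (as an infimum over open neighborhoods) interacts correctly with the estimate $\cp(U \cap X) \leq \|1\|_\Q$, which holds uniformly in $U$ since $1 \geq 1$ on every subset of $X$; this is routine. No new ingredient about the metric $d_\si$ or the structure of $\pt_\si X$ is required — strong intrinsicness is used only inside the cited Theorem~\ref{t:capacity_general}.
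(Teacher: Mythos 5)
Your proposal is correct and uses essentially the same ingredients as the paper's proof: Theorem~\ref{t:form_uniqueness_weights} to get $(c+m)(X)<\infty$, the constant function $1\in D(\QN)$ to bound $\cp(\pt_\si X)$, and Theorem~\ref{t:capacity_general} for the equivalence. The only difference is organizational — the paper proves (i) $\Rightarrow$ (ii) by contraposition, splitting into the cases $\cp(\pt_\si X)=0$ and $\cp(\pt_\si X)=\infty$, whereas you argue directly — so the two arguments are the same in substance.
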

\begin{proof}
(i) $\Longrightarrow$ (ii): 
We argue by contraposition.
If $\cp(\pt_\si X)=0$, then the graph satisfies form uniqueness
by Theorem~\ref{t:capacity_general} directly above.
If $\cp(\pt_\si X) =\infty$ and $(c+m)(X)<\infty$, then $1 \in \D \cap \ell^2(X,m)$
where $1$ is the constant function which is $1$ on every vertex.
This would imply $\cp(\pt_\si X) \leq (c+m)(X)<\infty$, giving a contradiction. 
Hence, $(c+m)(X)=\infty$, which gives $\QD=\QN$ by Theorem~\ref{t:form_uniqueness_weights}.

(ii) $\Longrightarrow$ (i): This is part of Theorem~\ref{t:capacity_general}, 
see Lemma~\ref{l:capacity_alternative} in Appendix~\ref{s:appendix} for a proof.
\end{proof}

\subsection{Form uniqueness and the Feller property for the Neumann semigroup}
In this subsection we characterize the Feller property for the Neumann semigroup extending a result from \cite{KMW25}
for birth--death chains. For this, we let $c=0$ throughout.

We first introduce the relevant concepts. Denote the semigroups associated
to $\QD$ and $\QN$ by $\PD$ and $\PN$ for $t \geq 0$, respectively, 
and call them the
\emph{Dirichlet} and \emph{Neumann semigroups}. $\PN$ satisfies the 
\emph{Feller property} or is \emph{Feller} if
$$\PN(C_c(X)) \subseteq C_0(X)$$
where $C_0(X) = \ov{C_c(X)}^{\|\cdot\|_\infty}$
denotes the space of functions vanishing at infinity, with an analogous
definition for $\PD$. 

For weakly spherically symmetric graphs, a characterization for $\PD$ to be
Feller is given in \cite{Woj17} in analogy to the case of spherically
symmetric manifolds presented in \cite{PS12}. 
Furthermore, it turns out that $\PN$ is 
Feller if and only if $\PD$ is Feller and $\PD = \PN$, see Corollary~5.2 in \cite{KMW25}.
Given these two ingredients along with Theorem~\ref{t:form_uniqueness_weights}, we now present a characterization of
the Feller property for the Neumann semigroup on weakly spherically symmetric
graphs.
Let $B_r^c$ denote the complement of the ball of radius $r$ about $O \subseteq X$, i.e.,
$$B_r^c = \{ x \in X \mid d(x,O) >r \}.$$

\begin{theorem}[Feller property of the Neumann semigroup]\label{t:Feller}
Let $b$ be a weakly spherically symmetric graph over $(X,m)$.
Then the following statements are equivalent:
\begin{itemize}
\item[(i)] $\PN$ is not Feller.
\item[(ii)] $\displaystyle{\sum_{r=0}^\infty \frac{m(B_r^c)}{\pt B(r)} < \infty}.$
\item[(iii)] There exists a non-zero $u \geq 0$ and $\al>0$ 
with $u \in \ell^1(X,m)$ such that
$$(\De + \al)u=0.$$ 
\end{itemize}
\end{theorem}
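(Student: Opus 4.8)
The plan is to prove the two equivalences (i)$\Longleftrightarrow$(ii) and (ii)$\Longleftrightarrow$(iii) separately: the first assembles the three cited ingredients, while the second is a direct computation with the recurrence relation of Lemma~\ref{l:recurrence} after reducing to spherically symmetric functions.

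For (i)$\Longleftrightarrow$(ii) I would combine: (a) Corollary~5.2 in \cite{KMW25}, which says $\PN$ is Feller if and only if $\PD$ is Feller and $\PD=\PN$, together with the fact that $\PD=\PN$ is equivalent to $\QD=\QN$ since the semigroup is determined by its form; (b) Theorem~\ref{t:form_uniqueness_weights} with $c=0$, which says $\QD\neq\QN$ is equivalent to $m(X)<\infty$ and $\sum_r 1/\pt B(r)<\infty$; and (c) the characterization from \cite{Woj17} of the Feller property of $\PD$ for weakly spherically symmetric graphs, according to which $\PD$ fails to be Feller precisely when (ii) holds (note that (ii) itself already forces $m(X)<\infty$, since otherwise $m(B_r^c)=\infty$ for every $r$). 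Assembling these: if (ii) holds then $\PD$ is not Feller by (c), hence $\PN$ is not Feller by (a); conversely, if $\PN$ is not Feller then either $\PD$ is not Feller, which gives (ii) by (c), or else $\QD\neq\QN$, in which case $m(X)<\infty$ and $\sum_r 1/\pt B(r)<\infty$ by (b), so $\sum_r m(B_r^c)/\pt B(r)\le m(X)\sum_r 1/\pt B(r)<\infty$ and (ii) again holds.

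For (ii)$\Longleftrightarrow$(iii) I would first reduce (iii) to spherically symmetric functions: by Lemma~\ref{l:average} the averaging operator $\A=A^{(1)}$ is a contraction on $\ell^1(X,m)$, by Lemma~\ref{l:basic_facts_wss}(c) it sends $\al$-harmonic functions to spherically symmetric $\al$-harmonic functions, and it maps a non-negative non-zero function to a non-zero function, so (iii) is equivalent to the existence of a spherically symmetric $u\ge 0$, $u\neq 0$, with $(\De+\al)u=0$ and $u\in\ell^1(X,m)$. By Lemma~\ref{l:recurrence} (with $q=0$ since $c=0$) such a $u$ has $u(0)>0$, is increasing, and satisfies $u(r+1)-u(r)=\frac{\al}{\pt B(r)}M_r$ with $M_r:=\sum_{k=0}^r m(S_k)u(k)$, so $\|u\|_1=\lim_r M_r$. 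Iterating the recurrence and interchanging the order of summation yields, for every $N$,
\[M_N = u(0)\,m(B_N) + \al\sum_{j=0}^{N-1}\frac{M_j\,\big(m(B_N)-m(B_j)\big)}{\pt B(j)}.\]
From this I read off both directions. If $\|u\|_1<\infty$, then $M_N\ge u(0)m(B_N)$ gives $m(B_N)\le\|u\|_1/u(0)$, so $m(X)<\infty$; and using $M_j\ge M_0=m(O)u(0)>0$ and letting $N\to\infty$ in $\|u\|_1\ge M_N\ge\al m(O)u(0)\sum_{j\le J}\frac{m(B_N)-m(B_j)}{\pt B(j)}$ (where $m(B_N)-m(B_j)\to m(B_j^c)$) yields $\sum_j m(B_j^c)/\pt B(j)<\infty$, i.e.\ (ii). Conversely, if (ii) holds — in particular $m(X)<\infty$ — then $M_N\le u(0)m(X)+\sum_{j<N}a_jM_j$ with $a_j:=\al m(B_j^c)/\pt B(j)$ summable, and a discrete Grönwall inequality gives $M_N\le u(0)m(X)\exp\big(\sum_j a_j\big)<\infty$ uniformly in $N$, so $u\in\ell^1(X,m)$ and (iii) holds.

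The conceptual heart is (i)$\Longleftrightarrow$(ii), but once the cited characterizations are in place that part is short. The main technical obstacle is the (ii)$\Longleftrightarrow$(iii) calculation: $M_j$ appears on both sides of the displayed identity for $M_N$, so one needs the two-sided control $m(O)u(0)\le M_j\le\|u\|_1$ to extract (ii) in one direction and a Grönwall-type argument to bound $M_N$ in the other, with a little care in passing to the limit $m(B_N)-m(B_j)\to m(B_j^c)$.
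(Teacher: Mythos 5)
Your reduction of (iii) to spherically symmetric functions and your recurrence computation are sound, but the assembly of (i)$\Leftrightarrow$(ii) contains a genuine error: ingredient (c) misquotes the result of \cite{Woj17}. Theorem~4.13 there states that $\PD$ is not Feller if and only if $\sum_r 1/\pt B(r)=\infty$ \emph{and} $\sum_r m(B_r^c)/\pt B(r)<\infty$; condition (ii) alone characterizes the failure of the Feller property for the \emph{Neumann} semigroup, which is precisely what is to be proved. With the correct statement, your forward step ``if (ii) holds then $\PD$ is not Feller'' is false: take any weakly spherically symmetric graph with $m(X)<\infty$ and $\sum_r 1/\pt B(r)<\infty$ (e.g.\ a finite-measure birth--death chain with rapidly growing $b(r,r+1)$); then (ii) holds, yet $\PD$ is Feller. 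The repair is a case distinction: if (ii) holds and $\sum_r 1/\pt B(r)=\infty$, then $\PD$ is non-Feller and hence $\PN$ is non-Feller by Corollary~5.2 of \cite{KMW25}; if instead $\sum_r 1/\pt B(r)<\infty$, then since (ii) forces $m(X)<\infty$, Theorem~\ref{t:form_uniqueness_weights} gives $\QD\neq\QN$, and the same corollary again yields that $\PN$ is not Feller. Your converse direction (from (i) to (ii)) goes through unchanged once (c) is stated correctly.

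The remainder is correct and takes a route partly different from the paper, in a useful way. The paper closes the cycle by citing Theorem~6.1 of \cite{KMW25} for (i)$\Rightarrow$(iii) and proving only (iii)$\Rightarrow$(ii) from the recurrence; your (iii)$\Rightarrow$(ii) is in substance the same estimate (your lower bound $M_j\geq m(O)u(0)$ plays the role of the paper's bound $\pt B(r)(u(r+1)-u(r))\geq \al u(0)m(O)$), while your (ii)$\Rightarrow$(iii), via the summation-by-parts identity for $M_N$ and a discrete Gr\"onwall bound $M_N\leq u(0)m(X)\prod_{j<N}(1+a_j)$, is a self-contained argument the paper does not give; it makes the equivalence (ii)$\Leftrightarrow$(iii) independent of the Feller machinery of \cite{KMW25}. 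I checked the identity, the two-sided control on $M_j$, the limit $m(B_N)-m(B_j)\to m(B_j^c)$, and the Gr\"onwall induction (base case $M_0=m(O)u(0)\leq u(0)m(X)$); they are all fine, as is the averaging step, since $\A$ is a contraction on $\ell^1(X,m)$, commutes with $\De$, and maps nonnegative nonzero functions to nonzero ones.
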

\begin{proof}
(i) $\Longleftrightarrow$ (ii):
Corollary~5.2 in \cite{KMW25} gives that $\PN$ is not Feller if and only
if $\PD$ is not Feller or $\QD \neq \QN$. Furthermore, Theorem~4.13 in \cite{Woj17}
gives that $\PD$ is not Feller if and only if
$\sum_r \frac{1}{\pt B(r)}=\infty$ and $\sum_r \frac{m(B_r^c)}{\pt B(r)}<\infty$.
Combining these results with Theorem~\ref{t:form_uniqueness_weights}
gives the equivalence between (i) and (ii).

(i) $\Longrightarrow$ (iii): This is given by Theorem~6.1 in \cite{KMW25} and does
not require weak spherical symmetry.

(iii) $\Longrightarrow$ (ii): By Lemma~\ref{l:average}, the restriction of $\A$
to $\ell^1(X,m)$ gives a bounded operator and preserves 
$\alpha$-harmonicity by Lemma~\ref{l:basic_facts_wss}~(c). 
Thus, we may assume that the positive $\al$-harmonic function
$u \in \ell^1(X,m)$ is spherically symmetric. As $u$ is non-zero, it follows
from Lemma~\ref{l:recurrence} that $u(0)>0$ and that $u$ is strictly increasing.
As $u \in \ell^1(X,m)$, it follows that $m(X)<\infty$.

Now, from $(\De + \al)u = 0$, we easily obtain
$$\pt B(r)(u(r+1)-u(r)) \geq \pt B(r-1) (u(r)-u(r-1))$$
see Lemma~\ref{l:basic_facts_wss}~(b) for the needed formula.
Iterating this down to $0$ gives
\begin{align*}
\pt B(r)(u(r+1)-u(r)) &\geq \pt B(0) (u(1)-u(0)) \\
&= -\De u(0)m(O) = \al u(0)m(O) = C >0
\end{align*}
so that $\pt B(r)(u(r+1)-u(r)) \geq C >0$
for all $r \in \N_0$. Hence,
$$u(r+1) \geq  u(r)+ \frac{C}{\pt B(r)} \geq C \sum_{k=0}^r \frac{1}{\pt B(k)}.$$
This yields
\begin{align*}
    \sum_{r=0}^\infty u(r)m(S_r) &\geq C \sum_{r=0}^\infty \left(\sum_{k=0}^{r-1} \frac{1}{\pt B(k)}\right) m(S_r) \\
    &= C \sum_{r=0}^\infty \frac{\sum_{k=r+1}^\infty m(S_k)}{\pt B(r)} =  C \sum_{r=0}^\infty \frac{m(B_r^C)}{\pt B(r)}.
\end{align*}
As $u \in \ell^1(X,m)$, this implies $\sum_{r=0}^\infty \frac{m(B_r^c)}{\pt B(r)} < \infty$ which completes the proof.
\end{proof}

\begin{remark}
The result above extends Theorem~8.2 in \cite{KMW25} from the case
of birth--death chains to all weakly spherically symmetric graphs.
Furthermore, note that (iii) does not imply (i) for general graphs,
see Example~6.5 in \cite{KMW25}.
\end{remark}

\subsection{Connections with other properties}\label{ss:connections}
In this subsection, we compare our characterizations of form uniqueness (Theorem~\ref{t:form_uniqueness_weights})
and the Feller property for the Neumann semigroup (Theorem~\ref{t:Feller})
to characterizations for other properties on weakly spherically symmetric graphs with no
killing term, i.e., when $c=0$.
 
We first define the other properties.
Recall that a connected graph $b$ over $(X,m)$
is called \emph{transient} if
$$\int^\infty \PD 1_x \ dt < \infty$$
for some (all) $x \in X$, where $1_x$ denotes the characteristic function of 
$\{x\}$. Furthermore, a connected graph is 
called \emph{stochastically incomplete} if 
$$\PD 1 < 1$$
where $1$ denotes the function that takes the value $1$ on every vertex and the
semigroup defined on $\ell^2(X,m)$ is extended to $\ell^\infty(X)$ via monotone 
limits. 

Given these definitions, we then have the following characterizations
for weakly spherically symmetric graphs with $c=0$:
\begin{align*}
\QD \neq \QN \qquad & \Longleftrightarrow \qquad  m(X) < \infty \quad \textup{ and } \quad \sum_{r=0}^\infty \frac{1}{\pt B(r)} < \infty. \\
\PD 1 < 1 \qquad & \Longleftrightarrow \qquad \sum_{r=0}^\infty \frac{m(B_r)}{\pt B(r)} < \infty. \\
\int^\infty \PD 1_x \ dt < \infty \qquad & \Longleftrightarrow \qquad \sum_{r=0}^\infty \frac{1}{\pt B(r)} < \infty. \\
\PD \textup{ is non-Feller} \qquad & \Longleftrightarrow \qquad \sum_{r=0}^\infty \frac{1}{\pt B(r)} = \infty \quad \textup{ and } \quad \sum_{r=0}^\infty \frac{m(B_r^c)}{\pt B(r)} < \infty.  \\
\PN \textup{ is non-Feller} \qquad & \Longleftrightarrow \qquad \sum_{r=0}^\infty \frac{m(B_r^c)}{\pt B(r)} < \infty. 
\end{align*}

For the result on stochastic incompleteness, see \cite{KLW13}, for transience, see 
\cite{Woe09, KLW21},
for the failure of the Feller property for the Dirichlet semigroup, see \cite{Woj17}.
In particular, we see that $\QD \neq \QN$ implies stochastic incompleteness, transience
and the failure of the Feller property for the Neumann semigroup. This has already
been observed in \cite{HKLW12, KMW25} and does not require the weak spherical
symmetry. Furthermore, as previously noted, when $m(X)<\infty$, the properties
of non-form uniqueness, stochastic incompleteness, and transience are all equivalent
for general graphs with $c=0$, see \cite{Sch17, GHKLW15}.

Failure of form uniqueness also implies the failure of essential self-adjointness
for general graphs as follows by general principles, e.g., \cite{HKLW12}. 
For birth--death chains, a result of Hamburger states that $\Delta_c$ is 
essentially self-adjoint if and only if
$$\sum_{r=0}^\infty \left( \sum_{k=0}^r \frac{1}{b(k,k+1)}\right)^2 m(r+1)<\infty$$
see \cite{Ham20a, Ham20b} for the original work of Hamburger and \cite{IKMW25} for
a recent exposition. However, it is not clear that this criterion extends 
to all weakly spherically symmetric graphs. One reason is that, while essential
self-adjointness can be characterized via $\al$-harmonic functions, it is 
not characterized via positive $\al$-harmonic functions. Thus, averaging
as in Lemma~\ref{l:averaging_energy} may produce a trivial function.

\section{Stability}\label{s:stability}
In this section, we study the stability of form uniqueness in a parallel
development to that for essential self-adjointness presented in \cite{IKMW25}.
More specifically, we show that if we can decompose a graph into two subgraphs
which are not too strongly connected, then form uniqueness of the entire graph
is equivalent to the form uniqueness of both graphs. See also \cite{Hua11, KL12}
for some results for stochastic completeness and \cite{HM24} for 
some recent work on recurrence. We then apply this stability result
to the case of graphs which consist of weakly spherically symmetric
components following the removal of a set of vertices, i.e., graphs with
weakly spherically symmetric ends. Finally, we give a series of examples
to illustrate our results.

\subsection{A stability criterion}
In this subsection, we decompose a graph into three pieces: two are induced subgraphs 
and one is a connecting graph.
If the connecting graph satisfies a boundedness property, 
we show that the graph satisfies form uniqueness if and only if
the two subgraphs satisfy form uniqueness as for essential self-adjointness in \cite{IKMW25}.

Let $(b,c)$ be a graph over $(X,m)$. We highlight that we do not
require local finiteness for the considerations of this subsection. Let $X_1 \subseteq X$ and
$X_2 = X \setminus X_1$. We then restrict $b$ to $X_i \times X_i$ and denote the restriction by $b_i$ for $i=1,2$. We further restrict $c$ and $m$ to $X_i$ and denote
the restrictions by $c_i$ and $m_i$ for $i=1,2$. This gives induced subgraphs
$(b_i,c_i)$ over $(X_i, m_i)$ for $i=1,2$.

Let $\pt X_i=\{ x \in X_i \mid \textup{ there exists } y \not \in X_i 
\textup{ with } y \sim x \}$ for $i=1,2$ and let
$X_3 = \pt X_1 \cup \pt X_2$. Let
$$b_\pt (x,y)= 
\begin{cases}
b(x, y) \quad & \textup{if } x \in \partial X_1, y \in X_2 \\
b(x, y) \quad & \textup{if } x \in \partial X_2, y \in X_1 \\
0 \quad &\text{otherwise}
\end{cases}$$
and restrict $m$ to $X_3$ to give $m_3$. This gives a graph $b_\pt$
over $(X_3,m_3)$.

We extend all functions by $0$ to the entire vertex set.
Note that the edge weight function decomposes as $b=b_1+b_2+b_\pt$
and the killing term as $c=c_1+c_2$.
On the energy form level, this implies
$$\Q_{b,c}=\Q_{b_1,c_1} + \Q_{b_2,c_2}+\Q_{b_\pt}$$
where all energy forms above can be thought
to act on the set $C(X)$.

In order to obtain our stability result, let
$\Deg_{\pt}\colon X_3 \to [0,\infty)$ be given by
$$\Deg_\pt(x)=\frac{1}{m(x)} \sum_{y \in X_3}b_\pt(x,y).$$
We assume that $\Deg_\pt$ is bounded. 
In particular, this implies that restrictions of $\Q_{b_\pt}$
are bounded, see Theorem~1.27 in \cite{KLW21}, so that
$$\QD_{b_{\pt}}=\QN_{b_{\pt}}$$
as the domain is the entire Hilbert space $\ell^2(X_3,m_3)$.

We will now prove the key lemma for our stability result. For this,
let $\Q_i=\Q_{b_i,c_i}$ and denote restrictions 
of these forms by $Q_i$ for $i=1,2$. Let
$$\D_i= \{ u \in C(X_i) \mid \Q_{i}(u)<\infty \}$$
so that $D(Q_i^{(N)})=\D_i \cap \ell^2(X_i,m_i)$
for $i=1,2$. 

As above, we extend all functions by $0$. In the other direction,
if $u \in C(X)$, then we let $u_i$ or  $u^{(i)}$ denote the restriction
of $u$ to $X_i$, i.e., $u|_{X_i}$ for $i=1,2$.

\begin{lemma}[Form decomposition]\label{l:stability}
Let $(b,c)$ be a graph over $(X,m)$. Let $X_1 \subseteq X$ and
$X_2=X \setminus X_1$ with 
$\Q_{b,c}=\Q_{1} + \Q_{2}+\Q_{b_\pt}$.
Let $\Deg_\pt$ be bounded. Then:
\begin{itemize}
\item[(a)] $u \in \ell^2(X, m)$ if and only if $u_i \in \ell^2(X_i,m_i)$ for $i=1,2$.
\item[(b)] $u \in D(\QN)$ if and only if $u_i \in D(\QN_i)$ for $i=1,2$.
\item[(c)] $\vp_n \to u$ 
in $\| \cdot \|_{\Q}$ for $\vp_n \in C_c(X)$ if and only
if $\vp_n^{(i)} \to u_i$ in $\|\cdot \|_{\Q_i}$ for $i=1,2$.
\end{itemize}
\end{lemma}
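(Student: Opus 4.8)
The plan is to prove the three statements in order, since (a) feeds into (b) and (c), and to exploit throughout the decomposition $\Q_{b,c}=\Q_1+\Q_2+\Q_{b_\pt}$ together with the boundedness of $\Deg_\pt$. For (a), the point is simply that $\ell^2(X,m)$ splits as an orthogonal direct sum $\ell^2(X_1,m_1)\oplus\ell^2(X_2,m_2)$ because $X=X_1\sqcup X_2$; writing $\|u\|^2=\|u_1\|^2+\|u_2\|^2$ makes the equivalence immediate. This requires no hypotheses beyond the partition of $X$.

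For (b), recall $D(\QN)=\D\cap\ell^2(X,m)$ and $D(\QN_i)=\D_i\cap\ell^2(X_i,m_i)$. Using (a) it remains to show that $u\in\D$ (i.e.\ $\Q_{b,c}(u)<\infty$) is equivalent to $u_1\in\D_1$ and $u_2\in\D_2$ whenever $u\in\ell^2(X,m)$. From the form decomposition we have $\Q_{b,c}(u)=\Q_1(u_1)+\Q_2(u_2)+\Q_{b_\pt}(u)$. Each summand is nonnegative, so if $\Q_{b,c}(u)<\infty$ then $\Q_i(u_i)<\infty$ for $i=1,2$. Conversely, if $u_i\in\D_i$ for $i=1,2$ and $u\in\ell^2(X,m)$, then $\Q_1(u_1),\Q_2(u_2)<\infty$ by assumption, while $\Q_{b_\pt}(u)\le 2\Deg_\pt\cdot\|u\|^2<\infty$ because the restriction of $\Q_{b_\pt}$ to $\ell^2(X_3,m_3)$ (equivalently, to $\ell^2(X,m)$ after extension by zero) is bounded with bound controlled by $\sup\Deg_\pt$, as noted after the definition of $\Deg_\pt$ in the excerpt; adding the three finite terms shows $u\in\D$. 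This is where the boundedness hypothesis on $\Deg_\pt$ is used — without it a function could have finite energy for $\Q_1$ and $\Q_2$ separately but infinite total energy through the connecting edges.

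For (c), suppose first $\vp_n\to u$ in $\|\cdot\|_\Q$ with $\vp_n\in C_c(X)$. Restriction to $X_i$ is linear, and $\|\vp_n^{(i)}-u_i\|_{\Q_i}^2=\|\vp_n^{(i)}-u_i\|^2+\Q_i(\vp_n^{(i)}-u_i)\le\|\vp_n-u\|^2+\Q_{b,c}(\vp_n-u)=\|\vp_n-u\|_\Q^2$, using that $\|\cdot\|^2$ and $\Q_i(\cdot)$ are dominated by their counterparts on $X$ (the latter again via the nonnegative splitting $\Q_{b,c}=\Q_1+\Q_2+\Q_{b_\pt}$); also $\vp_n^{(i)}\in C_c(X_i)$. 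Hence $\vp_n^{(i)}\to u_i$ in $\|\cdot\|_{\Q_i}$. Conversely, given approximating sequences $\psi_n^{(i)}\in C_c(X_i)$ with $\psi_n^{(i)}\to u_i$ in $\|\cdot\|_{\Q_i}$, set $\vp_n=\psi_n^{(1)}+\psi_n^{(2)}\in C_c(X)$ (extending by zero). Then $\|\vp_n-u\|^2=\|\psi_n^{(1)}-u_1\|^2+\|\psi_n^{(2)}-u_2\|^2\to0$, and for the energy part $\Q_{b,c}(\vp_n-u)=\Q_1(\psi_n^{(1)}-u_1)+\Q_2(\psi_n^{(2)}-u_2)+\Q_{b_\pt}(\vp_n-u)$, where the first two terms tend to $0$ by hypothesis and the last is bounded by $2\Deg_\pt\|\vp_n-u\|^2\to0$. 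Therefore $\vp_n\to u$ in $\|\cdot\|_\Q$.

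The only genuine subtlety — and the step I expect to require the most care — is controlling the cross term $\Q_{b_\pt}(\vp_n-u)$ in the converse direction of (c) and the term $\Q_{b_\pt}(u)$ in the converse of (b): one must justify the bound $\Q_{b_\pt}(w)\le 2\Deg_\pt\,\|w\|^2$ for $w\in\ell^2(X,m)$, which follows from the elementary estimate $(w(x)-w(y))^2\le 2w(x)^2+2w(y)^2$ summed against $b_\pt(x,y)$ and the symmetry of $b_\pt$, exactly the computation behind Theorem~1.27 in \cite{KLW21}. Everything else is bookkeeping with the orthogonal decomposition of $\ell^2(X,m)$ and the additivity of the energy forms.
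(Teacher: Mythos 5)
Your proposal is correct and follows essentially the same route as the paper: all three parts rest on the orthogonal splitting of $\ell^2(X,m)$, the decomposition $\Q_{b,c}=\Q_1+\Q_2+\Q_{b_\pt}$, and the boundedness of $\Q_{b_\pt}$ coming from the boundedness of $\Deg_\pt$ (which the paper obtains by citing Theorem~1.27 in \cite{KLW21}, while you spell out the underlying estimate $\Q_{b_\pt}(w)\le 2\sup\Deg_\pt\,\|w\|^2$). No gaps.
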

\begin{proof}
The proof of (a) is trivial, while (b) follows directly from
the decomposition $\Q_{b,c}=\Q_{1} + \Q_{2}+\Q_{b_\pt}$ and the fact
that $\Q_{b_\pt}$ is bounded since $\Deg_\pt$ is bounded and thus
all functions in $\ell^2(X_3,m_3)$ have finite energy.

For (c) note that if $\vp_n \to u$ in $\| \cdot \|_{\Q}$, then it is trivial
that $\vp_n^{(i)} \to u_i$ in $\|\cdot \|_{\Q_i}$ for $i=1,2$. Conversely,
if $\vp_n^{(i)} \to u_i$ in $\|\cdot \|_{\Q_i}$ for $i=1,2$, then clearly
$\vp_n^{(i)} \to u_i$ in $\ell^2(X_3, m_3)$ and, since $\QD_\pt$ is bounded on $\ell^2(X_3,m_3)$,
we get $\QD_\pt(\vp_n^{(i)} - u_i) \to 0$. Therefore, $\vp_n \to u$ in $\| \cdot \|_{\Q}$ from $\Q_{b,c}=\Q_{1} + \Q_{2}+\Q_{b_\pt}$.
\end{proof}

With the lemma above, it is now easy to give a stability result as follows.
\begin{theorem}[Stability of form uniqueness]\label{t:stability}
Let $(b,c)$ be a graph over $(X,m)$. Let $X_1 \subseteq X$ and
$X_2=X \setminus X_1$ with 
$\Q_{b,c}=\Q_{1} + \Q_{2}+\Q_{b_\pt}$.
Let $\Deg_\pt$ be bounded.
Then,
$$\QD=\QN \qquad \Longleftrightarrow \qquad \QD_i=\QN_i \quad \textup{ for } i=1,2.$$
\end{theorem}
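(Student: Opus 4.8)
The plan is to deduce Theorem~\ref{t:stability} directly from Lemma~\ref{l:stability}, which already contains all the analytic work. The theorem asserts that $\QD=\QN$ holds for the whole graph if and only if it holds for both induced subgraphs, so I would prove the two implications by translating each equality of forms into the statement ``$D(\QN)\subseteq D(\QD)$'' (the reverse inclusion is automatic since $C_c(X)$ is a core for $\QD$ and $C_c(X)\subseteq D(\QN)$, and the forms agree on their common domain as restrictions of the same $\Q$). Thus the real content is the equality of form domains, and Lemma~\ref{l:stability} parts (b) and (c) are exactly the tools needed to pass between the domains on $X$ and the domains on $X_1,X_2$.

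For the forward direction, suppose $\QD=\QN$ on $X$. Let $i\in\{1,2\}$ and take $u_i\in D(\QN_i)$. Extend $u_i$ by zero to a function $u\in C(X)$; I would first check that $u\in D(\QN)$ using Lemma~\ref{l:stability}~(b) (the zero extension of the other component trivially lies in $D(\QN_j)$). By hypothesis $u\in D(\QD)$, so there exist $\vp_n\in C_c(X)$ with $\vp_n\to u$ in $\|\cdot\|_\Q$. Restricting, Lemma~\ref{l:stability}~(c) gives $\vp_n^{(i)}\to u_i$ in $\|\cdot\|_{\Q_i}$ with $\vp_n^{(i)}\in C_c(X_i)$, so $u_i\in D(\QD_i)$. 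This shows $D(\QN_i)\subseteq D(\QD_i)$, hence $\QD_i=\QN_i$.

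For the converse, suppose $\QD_i=\QN_i$ for $i=1,2$ and take $u\in D(\QN)$. By Lemma~\ref{l:stability}~(b), $u_i\in D(\QN_i)=D(\QD_i)$ for each $i$, so there are $\vp_n^{(i)}\in C_c(X_i)$ with $\vp_n^{(i)}\to u_i$ in $\|\cdot\|_{\Q_i}$. Setting $\vp_n=\vp_n^{(1)}+\vp_n^{(2)}\in C_c(X)$ (the two pieces have disjoint supports), Lemma~\ref{l:stability}~(c) gives $\vp_n\to u$ in $\|\cdot\|_\Q$, so $u\in D(\QD)$. Hence $D(\QN)\subseteq D(\QD)$ and $\QD=\QN$. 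The only mild subtlety — which is really just bookkeeping — is being careful that the zero-extension/restriction operations are consistent and that ``$\vp_n\in C_c(X_i)$'' is preserved in both directions; there is no genuine obstacle, since the boundedness of $\Q_{b_\pt}$ needed to control the cross terms is already absorbed into Lemma~\ref{l:stability}. So I expect the proof to be short, with the one-line reduction to domain equality being the main conceptual step.
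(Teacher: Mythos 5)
Your proposal is correct and takes essentially the same route as the paper: both directions are deduced from Lemma~\ref{l:stability} via zero extension and restriction of functions and of approximating sequences in $C_c$, exactly as in the paper's proof. The only cosmetic difference is that you make explicit the preliminary reduction to the domain inclusion $D(\QN)\subseteq D(\QD)$, which the paper leaves implicit.
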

\begin{proof}   
$\Longrightarrow$: Suppose $\QD=\QN$ and let $u \in D(\QN_1)$. Let $\ow{u}$ be the extension of $u$
to $X$ by 0. It is obvious that $\ow{u} \in \ell^2(X,m)$ and 
by $\Q_{b,c}=\Q_{1} + \Q_{2}+\Q_{b_\pt}$ we get $\ow{u} \in \D$ so that
$\ow{u} \in D(\QN).$ As $\QD=\QN$, there exists a sequence $\ow{\vp}_n \in C_c(X)$
satisfying $\ow{\vp}_n \to \ow{u}$ in $\| \cdot \|_\Q$. By Lemma~\ref{l:stability}~(c), $\vp_n = \ow{\vp}_n|_{X_1}$ satisfies $\vp_n \to u$ in 
$\|\cdot\|_{\Q_1}$. 
Thus, $\QD_1=\QN_1$. An analogous argument works for $u \in D(\QN_2)$. 

$\Longleftarrow$: Suppose that $\QD_i=\QN_i$ for $i=1,2$ and let
$u \in D(\QN)$. Let $u_i = u|_{X_i}$ so that
$u_i \in D(\QN_i)$ for $i=1,2$ by Lemma~\ref{l:stability}~(b).
As $\QD_i=\QN_i$ for $i=1,2$, there exist sequences $\vp_n^{(i)} \in C_c(X_i)$
such that $\vp_n^{(i)} \to u_i$ in $\|\cdot\|_{\Q_i}$ for $i=1,2$. 
Extending $\vp_n^{(i)}$ by 0 and letting
$\vp_n=\vp_n^{(1)}+\vp_n^{(2)}$, we now get $\vp_n \to u$ in $\|\cdot\|_\Q$ 
by Lemma~\ref{l:stability}~(c) so that $u \in D(\QD)$. 
\end{proof}

\begin{remark}
Similar stability results can be found for essential self-adjointness, stochastic
completeness, recurrence, and the Feller property for graphs, quantum
graphs, and manifolds
in, e.g., \cite{deTT11, Gri99, Hua11, HM24, IKMW25, KL12, KLW21, KMN22b, PS12, Woe00}.
\end{remark}

\subsection{Graphs with symmetric ends}
Given the stability criterion presented in the previous subsection
we now extend our characterizations to graphs which, following the removal
of a subset of vertices, decompose into disjoint unions of weakly spherically
symmetric graphs. We start by giving a definition. As in the previous
subsection, let $X_1 \subseteq X$ with $X_2=X \setminus X_1$.

\begin{definition}[Graphs with symmetric ends]
A graph $(b,c)$ over $(X,m)$ is a 
\emph{graph with (weakly spherically) symmetric ends (with respect to $X_1$ 
chosen for form uniqueness)} 
if $X_1 \subseteq X$ can be chosen so that
\begin{itemize}
\item[(1)] $\QD_1=\QN_1$.
\item[(2)] $\Deg_\pt$ is bounded.
\item[(3)] $(b_2,c_2)$ over $(X_2,m_2)$ is a disjoint union of weakly spherically
symmetric graphs.
\end{itemize}
\end{definition}

\begin{remark}[On the name ``graphs with symmetric ends'']
Note that there are various notions of ends for graphs,
e.g., \cite{DK03, Fre31, Hal64, KMN22b, Woe00}. The notion above
is in the spirit of Riemannian manifolds
and is tailored to our problem of studying graphs which, following the removal
of a subset of vertices, have connected components that are weakly spherically
symmetric and for which the form uniqueness problem can be localized
to these connected components. To be more precise, the defining 
phrase should reflect both
the dependency on $X_1$ and that $X_1$ is chosen in this way.
However, the phrase
``a graph with weakly spherically symmetric ends with respect to $X_1$ chosen for form uniqueness''
is too long, thus, we generally
shorten it to just a ``graph with symmetric ends'' or some
variant thereof.
\end{remark}

Note that condition (1) in the definition above is satisfied
if, for example, there exists an intrinsic pseudometric for which
the weighted vertex degree function is bounded on distance balls in $X_1$ or 
if the measure is uniformly bounded from below in $X_1$,
see \cite{HKMW13, KL12}. In particular, this is always the case
when $X_1$ is finite. Furthermore, 
if $X_1$ is finite and the graph is locally finite,
then (2) clearly holds as well.

Write $X_2=\bigsqcup_{i\in I}X_2^{(i)}$ with $(b_2^{(i)}, c_2^{(i)})$ 
being a weakly spherically symmetric graph over $(X_2^{(i)},m_2^{(i)})$
for $i \in I$.
Let
$q_2^{(i)}=c_2^{(i)}/m_2^{(i)}$ and denote the
energy
forms by $\Q_{2,i}$ as well as their restrictions by $\QD_{2,i}$ and $\QN_{2,i}$. 
Disjoint here means that there are no
edges between the weakly spherically symmetric graphs, i.e.,
$b(x^{(i)}, x^{(j)})=0$ if $i \neq j$ where $x^{(k)} \in X_2^{(k)}$
for $k \in I$.
We then have the following characterization.

\begin{theorem}[Form uniqueness for graphs with symmetric ends]\label{t:ends}
Let $(b,c)$ over $(X,m)$ be a graph with symmetric ends for form uniqueness.
Then the following statements are equivalent:
\begin{itemize}
\item[(i)] $\QD \neq \QN$.
\item[(ii)] $\QD_{2,i}\neq\QN_{2,i}$ \quad for some $i \in I$.
\item[(iii)] $(c+m)(X_2^{(i)}) < \infty$ \quad {and} \quad
$\displaystyle{\sum_{r=0}^\infty \frac{1}{\pt B_2^{(i)}(r)} < \infty}$
\quad for some $i \in I$.
\item[(iv)] $0 < \cp(\pt X_2^{(i)}) < \infty$ \quad for some $i \in I$.
\end{itemize} 
\end{theorem}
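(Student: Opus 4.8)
The plan is to prove the three equivalences (i)$\Leftrightarrow$(ii), (ii)$\Leftrightarrow$(iii) and (iii)$\Leftrightarrow$(iv) by combining the stability result Theorem~\ref{t:stability} with the two characterizations for weakly spherically symmetric graphs, Theorems~\ref{t:form_uniqueness_weights} and~\ref{t:capacity}, applied separately to each component $X_2^{(i)}$. The only genuinely new work is to pass from the two-piece decomposition handled by Theorem~\ref{t:stability} to the full disjoint union $X_2=\bigsqcup_{i\in I}X_2^{(i)}$; the remaining implications are then routine consequences of the cited results.

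For (i)$\Leftrightarrow$(ii): the hypothesis that $(b,c)$ has symmetric ends with respect to $X_1$ gives $\QD_1=\QN_1$, boundedness of $\Deg_\pt$, and that $(b_2,c_2)$ over $(X_2,m_2)$ is the disjoint union of the weakly spherically symmetric graphs $(b_2^{(i)},c_2^{(i)})$. First I would apply Theorem~\ref{t:stability} to $X=X_1\sqcup X_2$ to conclude $\QD=\QN$ if and only if $\QD_1=\QN_1$ and $\QD_2=\QN_2$; since $\QD_1=\QN_1$ this reduces to $\QD\neq\QN\Leftrightarrow\QD_2\neq\QN_2$. Next I would show $\QD_2=\QN_2$ if and only if $\QD_{2,i}=\QN_{2,i}$ for every $i\in I$. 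Because there are no edges between components, the energy splits as $\Q_2(u)=\sum_{i\in I}\Q_{2,i}(u|_{X_2^{(i)}})$ and $\ell^2(X_2,m_2)=\bigoplus_{i\in I}\ell^2(X_2^{(i)},m_2^{(i)})$, so $u\in D(\QN_2)$ if and only if $u|_{X_2^{(i)}}\in D(\QN_{2,i})$ for all $i$ with $\sum_{i}\|u|_{X_2^{(i)}}\|_{\Q_{2,i}}^2<\infty$. The point is the identity $D(\QD_2)=\{u\in D(\QN_2):u|_{X_2^{(i)}}\in D(\QD_{2,i})\textup{ for all }i\in I\}$: the inclusion ``$\subseteq$'' is immediate, and for ``$\supseteq$'' one approximates each restriction $u|_{X_2^{(i)}}$ by finitely supported functions in $\|\cdot\|_{\Q_{2,i}}$ and truncates to finitely many components, which is justified since $\sum_{i\in I}\|u|_{X_2^{(i)}}\|_{\Q_{2,i}}^2=\|u\|_{\Q_2}^2<\infty$ forces the tail to vanish. (Alternatively one could iterate Theorem~\ref{t:stability}, peeling off one component at a time, but this requires a limiting argument for infinite $I$, so the direct orthogonal-sum argument is preferable.) Combining, $\QD\neq\QN$ if and only if $\QD_{2,i}\neq\QN_{2,i}$ for some $i\in I$.

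For (ii)$\Leftrightarrow$(iii): each $(b_2^{(i)},c_2^{(i)})$ over $(X_2^{(i)},m_2^{(i)})$ is weakly spherically symmetric, so Theorem~\ref{t:form_uniqueness_weights} gives $\QD_{2,i}\neq\QN_{2,i}$ if and only if $(c+m)(X_2^{(i)})<\infty$ and $\sum_{r=0}^\infty 1/\pt B_2^{(i)}(r)<\infty$, where balls and the edge boundary are computed with respect to a finite set witnessing weak spherical symmetry of the $i$-th component (the condition does not depend on this choice, since it is equivalent to the internal statement $\QD_{2,i}\neq\QN_{2,i}$). Together with (i)$\Leftrightarrow$(ii) this is (ii)$\Leftrightarrow$(iii). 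For (iii)$\Leftrightarrow$(iv): every weakly spherically symmetric graph admits a strongly intrinsic path metric, e.g.\ the degree path metric of Example~\ref{ex:intrinsic}(2) computed inside the component, so Theorem~\ref{t:capacity} applied to $(b_2^{(i)},c_2^{(i)})$ gives $\QD_{2,i}\neq\QN_{2,i}$ if and only if $0<\cp(\pt_\si X_2^{(i)})<\infty$; again the value is independent of the chosen strongly intrinsic metric because both conditions reduce to $\QD_{2,i}\neq\QN_{2,i}$. Chaining with (i)$\Leftrightarrow$(ii) then yields (i)$\Leftrightarrow$(iv).

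The step I expect to be the main obstacle is the reduction to the components of the disjoint union when $I$ is infinite, since Theorem~\ref{t:stability} only treats two summands; the delicate part is the approximation-and-truncation argument showing that a form-norm limit of finitely supported functions on $X_2$ can be reconstructed from component-wise approximations. Once $\QD_2$ is identified as the orthogonal direct sum of the $\QD_{2,i}$, the equivalences (ii)$\Leftrightarrow$(iii) and (iii)$\Leftrightarrow$(iv) follow directly from Theorems~\ref{t:form_uniqueness_weights} and~\ref{t:capacity}, respectively.
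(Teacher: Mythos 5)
Your proposal is correct, and its overall skeleton is exactly the paper's: apply Theorem~\ref{t:stability} to the decomposition $X=X_1\sqcup X_2$ (using $\QD_1=\QN_1$ and boundedness of $\Deg_\pt$) to reduce to $\QD_2\neq\QN_2$, then localize to the components and invoke Theorems~\ref{t:form_uniqueness_weights} and~\ref{t:capacity} for (ii)$\Leftrightarrow$(iii) and (ii)$\Leftrightarrow$(iv). The one place where you genuinely diverge is the reduction over the disjoint union $X_2=\bigsqcup_{i\in I}X_2^{(i)}$: the paper handles this via Lemma~\ref{l:al-harmonic}, observing that a positive $\al$-harmonic function $u\in\D\cap\ell^2(X_2,m_2)$ restricts (on any component where it is non-zero) to a positive $\al$-harmonic function of finite energy in $\ell^2$ on that component, and conversely that extension by zero of such a function remains $\al$-harmonic because no edges cross between components. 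You instead identify $D(\QD_2)$ directly as $\{u\in D(\QN_2)\mid u|_{X_2^{(i)}}\in D(\QD_{2,i})\text{ for all }i\}$ via the energy splitting and a finite-truncation plus componentwise approximation argument; your "$\supseteq$" argument (choose a finite $F\subseteq I$ capturing all but $\ve$ of $\|u\|_{\Q_2}^2$, approximate on each $i\in F$ by $C_c(X_2^{(i)})$ functions) is sound, and the failing direction follows by zero-extending a function in $D(\QN_{2,i})\setminus D(\QD_{2,i})$. The paper's route is shorter because the $\al$-harmonic criterion localizes and globalizes for free; your route buys a slightly more general statement (an explicit direct-sum description of $\QD_2$ that does not rely on the $\al$-harmonic characterization and would apply to any countable disjoint union), at the cost of the extra approximation bookkeeping. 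Your remarks on the independence of (iii) and (iv) from the choice of center set and of strongly intrinsic path metric, and on the existence of such a metric via the degree path metric on each locally finite component, are appropriate and consistent with the paper's implicit reading of those conditions.
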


\begin{proof}
By Theorem~\ref{t:stability}, $\QD \neq \QN$ if and only if $\QD_2 \neq \QN_2$.
Now, as the weakly spherically symmetric graphs making up $(b_2,c_2)$ over 
$(X_2,m_2)$ are disjoint, it follows that $\QD_2 \neq \QN_2$ if and only if
$\QD_{2,i}\neq\QN_{2,i}$ for some $i \in I$ as can be seen by using $\al$-harmonic 
functions (Lemma~\ref{l:al-harmonic}). This gives the equivalence 
between (i) and (ii).
The equivalence between (ii), (iii), and (iv) follows from 
Theorems~\ref{t:form_uniqueness_weights}~and~\ref{t:capacity}.
\end{proof}

\begin{remark}
We contrast our result for graphs with symmetric ends to recent
work on quantum graphs in \cite{KMN22b}. There, the authors
consider topological ends of quantum graphs which, in the locally
finite case, can be thought
of either as equivalence classes of rays or as connected components
when removing finite sets in an exhaustion of the graph. For such
ends they show that form uniqueness is equivalent to the fact that
every end has infinite volume, see Corollary~3.13 in \cite{KMN22b}.
\end{remark}

\subsection{Examples}
In this final subsection we give a series of examples to illustrate
our results from the preceding sections.

We first give the simplest example to illustrate Theorem~\ref{t:ends}
on graphs with symmetric ends 
by gluing two birth--death chains together.
In particular, this allows us to create examples of graphs that 
do not satisfy form uniqueness but have infinite measure.
Note that most existing examples of non-form unique graphs are constructed
by taking a transient graph, transience being a property
that is independent of the vertex measure, and changing the vertex
measure to be finite. However, see Example~5.6 in \cite{HKMW13}
which contains a graph as in the example below.

\begin{example}[Bilateral birth--death chains]\label{ex:bilateral}
Let $X=\Z$ with $b(x,y)>0$ if and only if $|x-y|=1$ and $c=0$. 
Such graphs are sometimes
called bilateral birth--death chains. 

Let $X_1=\{0\}$ and note that, by Theorem~\ref{t:ends}, $\QD \neq \QN$ if and only if either
$$m(\N)<\infty \qquad \textup{and} \qquad \sum_{r=0}^\infty \frac{1}{b(r,r+1)}<\infty$$
or
$$m(-\N)<\infty \qquad \textup{and} \qquad \sum_{r=0}^\infty \frac{1}{b(-r,-r-1)}<\infty.$$

Thus, for example, choosing $m(\N)<\infty$, $\sum_r {1}/{b(r,r+1)}<\infty$
and $m(-\N)=\infty$ gives a graph with $\QD\neq \QN$ and $m(X)=\infty$.
\end{example}

Next, we give an example to illustrate our capacity
result, Theorem~\ref{t:capacity}. As mentioned previously, 
if the Cauchy boundary has infinite capacity, there exist examples
of graphs which both satisfy and do not satisfy form uniqueness,
see Examples~5.2~and~5.4 in \cite{HKMW13}. However, the example
of non-form uniqueness given in \cite{HKMW13} is a bilateral
birth--death chain for which the Cauchy boundary consists
of two points. Thus, one may ask if form uniqueness
always follows if the Cauchy boundary consists of one point
and has infinite capacity. We now show 
by using our stability theory that this is not the case.

\begin{figure}[htbp]
\centering
\begin{tikzpicture}[main_node/.style={circle,draw,minimum size=0.5em, fill, inner sep=2pt]}]
\node[main_node, label=above:0] (0) at (-5, 5) {};
\node[main_node, label=above:1] (1) at (-3.5, 5) {};
\node[main_node, label=above:2] (2) at (-2, 5) {};
\node[main_node, label=above:3] (3) at (-0.5, 5) {};
\node[main_node, label=below:$x_0$] (4) at (-5, 3.5) {};
\node[main_node, label=below:$x_1$] (5) at (-3.5, 3.5) {};
\node[main_node, label=below:$x_2$] (6) at (-2, 3.5) {};
\node[main_node, label=below:$x_3$] (7) at (-0.5, 3.5) {};

\node[right=0.8cm of 3] (8) {};

\draw node[fill,circle,minimum size=0.1em, inner sep=0.5pt] at (0.8, 4.25){};
\draw node[fill,circle,minimum size=0.1em, inner sep=0.5pt] at (1, 4.25){};
\draw node[fill,circle,minimum size=0.1em, inner sep=0.5pt] at (1.2, 4.25){};

 \path[draw, thick]
(0) edge node {} (4) 
(0) edge node {} (1) 
(1) edge node {} (5) 
(1) edge node {} (2) 
(2) edge node {} (6) 
(3) edge node {} (7) 
(2) edge node {} (3)
(3) edge node {} (8) {};


\end{tikzpicture}

\caption{The graph in Examples~\ref{e:Cauchy_boundary}~and~\ref{ex:instability}~(1).}
\label{fig:pendant}
\end{figure}
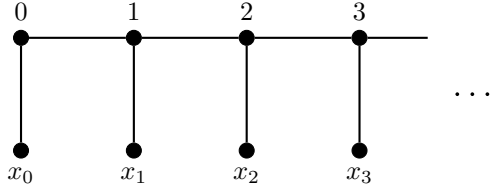

\begin{example}[One point Cauchy boundary with infinite capacity
but no form uniqueness]\label{e:Cauchy_boundary}
Let $X_1 = \N_0 =\{0,1,2, \ldots\}$ and $X_2 =\{x_0, x_1, x_2, \ldots\}$
with $X= X_1 \cup X_2$. Let $b$ be defined symmetrically so that 
\begin{itemize}
\item{\makebox[2.5cm]{$b(k,l)>0$ \hfill} if and only if  $|k-l|=1$
for $k, l \in \N_0$,}
\item{\makebox[2.5cm]{$b(k,x_k)>0$ \hfill} for $k \in \N_0,$}
\item{\makebox[2.5cm]{$b=0$ \hfill} otherwise,}
\end{itemize}
and let $c=0$. Thus, the graph may be visualized
as a birth--death chain where each vertex has an additional 
pendant vertex attached to it, see Figure~\ref{fig:pendant}.

Let $b$ and $m$ be defined on $X_1$ so that $\QD_1 \neq \QN_1$.
That is, by Theorem~\ref{t:form_uniqueness_weights}, let 
$$m(X_1)<\infty \qquad \textup{and} \qquad \sum_{k=0}^\infty \frac{1}{b(k,k+1)}<\infty.$$ 
Furthermore, let $m(x_k)=1$ and $b(k,x_k)=m(k)$ for all $k \in \N_0$.
Therefore, $\Deg_\pt(k)=b(k,x_k)/m(k)=1$ and $\Deg_\pt(x_k)= b(x_k,k) = m(k) \to 0$
as $k \to \infty$ as $m(X_1)<\infty$.
By Theorem~\ref{t:stability} it now follows that $\QD\neq \QN$.

Finally, we define a strongly intrinsic path metric and show
that the Cauchy boundary consists of a single point which has infinite
capacity. For this, we recall that a standard strongly intrinsic
path metric is given by choosing
$$\si(x,y) = \min\left\{\Deg(x)^{-1/2},\Deg(y)^{-1/2}\right\}$$
for $x \sim y$ where $\Deg$ is the weighted
vertex degree, see Example~\ref{ex:intrinsic}~(2) above.
We note that $\pt_\si X \neq \emptyset$ as if $\pt_\si X = \emptyset$, then $\QD=\QN$
since $\pt_\si X = \emptyset$ would imply the essential self-adjointness of $\De_c$, see \cite{HKMW13}.

By the assumptions that
$m(X_1)<\infty$ and $\sum_k{1}/{b(k,k+1)}<\infty,$
it is clear that $\Deg(k) \to \infty$ and thus 
$$\si(k,k+1) \to 0 \qquad \textup{and} \qquad \si(k,x_k) \to 0$$ 
as $k \to \infty$ so that
$\pt_\si X$ consists of a single accumulation point. Finally,
$\cp(\pt_\si X) =\infty$ since any open neighborhood
of the boundary will contain infinitely many of the vertices $x_k$
and $m(x_k)=1$ for every $k \in \N_0$.
\end{example}

Finally, we give a series of instability examples
which show that the boundedness assumption on $\Deg_\pt$ in the 
stability results in Theorems~\ref{t:stability}~and~\ref{t:ends}
is necessary. 

\begin{example}[Instability]\label{ex:instability}
The first two examples below fall under the general framework
of Theorem~\ref{t:stability} and involve starting
with a non-form unique birth--death chain to which
we attach either pendant vertices or an infinite star to make the entire 
graph satisfy form uniqueness.
In the third example, we illustrate Theorem~\ref{t:ends}
on graph with symmetric ends
by starting with a form unique core
birth--death chain and attaching two birth--death chains,
one of which is not form unique, in such a way that the entire
graph is form unique.
For all of the examples, the proof technique is the same: we
analyze strictly positive $\al$-harmonic functions for which every vertex
must have a neighboring vertex where the $\al$-harmonic function is
strictly bigger and showing that they cannot have finite energy
under some conditions on the graph.\\

(1) \textbf{Birth--death chain and pendant vertices:} 
The first example is a variant on Example~\ref{e:Cauchy_boundary}.
Let $X_1 = \N_0 =\{0,1,2, \ldots\}$ and $X_2 =\{x_0, x_1, x_2, \ldots\}$
with $X= X_1 \cup X_2$. Let $b$ be defined symmetrically so that
\begin{itemize}
\item{\makebox[2.5cm]{$b(k,l)>0$ \hfill} if and only if  $|k-l|=1$
for $k, l \in \N_0,$}
\item{\makebox[2.5cm]{$b(k,x_k)>0$ \hfill} for $k \in \N_0,$}
\item{\makebox[2.5cm]{$b=0$ \hfill} otherwise,}
\end{itemize}
and let $c=0$. Thus, the graph may be visualized
as a birth--death chain where each vertex has an additional 
pendant vertex attached to it, see Figure~\ref{fig:pendant}.

Let $b$ and $m$ be defined on $X_1$ so that $\QD_1 \neq \QN_1$.
That is, by Theorem~\ref{t:form_uniqueness_weights}, let 
$$m(X_1)<\infty \qquad \textup{and} \qquad \sum_{k=0}^\infty \frac{1}{b(k,k+1)}<\infty.$$ 

We now show that we can
extend $b$ and $m$ to $X_2$ so that $\QD=\QN$. 
For this, we let $u>0$ satisfy $(\De +1)u=0$
and show that we can choose $b$ and $m$ so that $u \not \in \D$.

From $(\De +1)u(x_k)=0$ we get
$$u(x_k)=\left(\frac{b(x_k,k)}{b(x_k,k)+ m(x_k)}\right)u(k) < u(k)$$
and thus
\begin{equation}\label{ex:energy}
b(x_k,k)(u(x_k)-u(k))^2 = \frac{ m^2(x_k)}{b(x_k,k)}u^2(x_k) = 
\frac{ b(x_k,k)m^2(x_k)}{\left(b(x_k,k)+ m(x_k)\right)^2}u^2(k).
\end{equation}
In particular, as $u(x_k)< u(k)$ for all $k \in \N_0$,
it follows, by using $(\De +1)u(k)=0$ and induction, that
$u(k) < u(k+1)$ for all $k \in \N_0$.
See Figure~\ref{fig:pendant_directed} for an illustration of
this argument.

Therefore, if
$$\sum_{k=0}^\infty \frac{b(x_k,k)m^2(x_k)}{\left(b(x_k,k)+ m(x_k)\right)^2}=\infty,$$
then $u \not \in \D$ by \eqref{ex:energy} and
thus $\QD=\QN$ by Lemma~\ref{l:al-harmonic}.
For a concrete example, let $b(x_k,k)=1=m(x_k)$ in \eqref{ex:energy}, 
to get 
$$(u(x_k)-u(k))^2 = \frac{u^2(k)}{4} >\frac{u^2(0)}{4} $$
for all $k \in \N$.

Note that, in this case, $X_2 = \{x_0, x_1, x_2, \ldots\}$ with $b_2=0$,
thus giving a disconnected graph. Furthermore, note that 
$\Deg_\pt(k)={b(x_k,k)}/{m(k)}$ and if this quantity
were bounded, then, from the assumption that $m(X_1)<\infty$, we would
obtain $\sum_k b(x_k,k)<\infty$. This implies
$$\sum_{k=0}^\infty \frac{b(x_k,k)m^2(x_k)}{\left(b(x_k,k)+m(x_k)\right)^2}
< \sum_{k=0}^\infty b(x_k,k)<\infty$$
which contradicts the assumption on $b$.
Thus, $\Deg_\pt(k)$ must be unbounded in this case
as follows by Theorem~\ref{t:stability}.

\begin{figure}[htbp]

\begin{tikzpicture}[main_node/.style={circle,draw,minimum size=0.5em, fill, inner sep=2pt]}]
\tikzset{edge/.style = {->,> = latex'}}

\node[main_node, label=above:0] (0) at (-5, 5) {};
\node[main_node, label=above:1] (1) at (-3.5, 5) {};
\node[main_node, label=above:2] (2) at (-2, 5) {};
\node[main_node, label=above:3] (3) at (-0.5, 5) {};
\node[main_node, label=below:$x_0$] (4) at (-5, 3.5) {};
\node[main_node, label=below:$x_1$] (5) at (-3.5, 3.5) {};
\node[main_node, label=below:$x_2$] (6) at (-2, 3.5) {};
\node[main_node, label=below:$x_3$] (7) at (-0.5, 3.5) {};

\draw node[fill,circle,minimum size=0.1em, inner sep=0.5pt] at (0.8, 4.25){};
\draw node[fill,circle,minimum size=0.1em, inner sep=0.5pt] at (1, 4.25){};
\draw node[fill,circle,minimum size=0.1em, inner sep=0.5pt] at (1.2, 4.25){};

\node[right=1cm of 3] (8) {};

\draw[thick, mid arrow] (4) to (0);
\draw[thick, mid arrow] (0) to (1);
\draw[thick, mid arrow] (5) to (1);
\draw[thick, mid arrow] (1) to (2);
\draw[thick, mid arrow] (6) to (2);
\draw[thick, mid arrow] (2) to (3);
\draw[thick, mid arrow] (7) to (3);
\draw[thick, mid arrow] (3) to (8);

\end{tikzpicture}
\caption{The graph in Example~\ref{ex:instability}~(1).
The arrows indicate the directions of increase for any positive $\al$-harmonic
function.}
\label{fig:pendant_directed}
\end{figure}
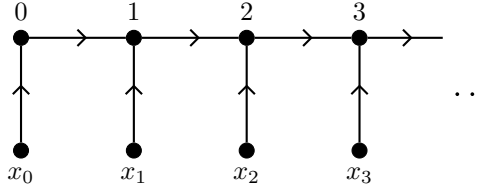

\bigskip
(2) \textbf{Birth--death chain and infinite star:} 
We now give an example, based on
the previous one, where $X_2$ is also a connected subgraph. For this, 
let $X_1 = \N_0 =\{0,1,2, \ldots\}$ and $X_2 =\{o, x_0, x_1, x_2, \ldots\}$
with $X= X_1 \cup X_2$. 
Let $b$ be defined symmetrically so that
\begin{itemize}
\item{\makebox[2.5cm]{$b(k,l)>0$ \hfill} if and only if  $|k-l|=1$
for $k, l \in \N_0$},
\item{\makebox[2.5cm]{$b(k,x_k)>0$ \hfill} for $k \in \N_0,$}
\item{\makebox[2.5cm]{$b(o, x_k)>0$ \hfill} for $k \in \N_0$}
with $\sum_k b(o,x_k)<\infty$,
\item{\makebox[2.5cm]{$b=0$ \hfill} otherwise,}
\end{itemize}
and let $c=0$. Thus, the graph may be visualized
as a birth--death chain where each vertex has an additional 
vertex attached to it, and all additional vertices are connected to one
vertex, see Figure~\ref{fig:star}. 
In particular, $b_2$ over $(X_2, m_2)$ is an infinite star graph and is thus connected.

As above, let 
$$m(X_1)<\infty \qquad \textup{and} \qquad \sum_{k=0}^\infty \frac{1}{b(k,k+1)}<\infty$$ 
so that $\QD_1 \neq \QN_1$. We 
will extend these choices to $b$ and $m$ so that $\QD=\QN$.
For this, it suffices that we take $b$ and $m$
so that 
$$\inf_{k \in \N_0} m(x_k)>0 \qquad \textup{and} \qquad \sum_{k=0}^\infty b(k, x_k) =\infty$$
as we will now show.

We let $u>0$ satisfy $(\De +\al)u=0$ for $\al>0$ and show
that $u \not \in \D \cap \ell^2(X,m)$.
If $u \in \ell^2(X,m)$, then, from the assumption
$\inf_{k \in \N_0} m(x_k)>0$, we obtain $u(x_k) \to 0$ as $k \to \infty$.
Therefore, $u(x_k) \leq u(o)$ for all $k$ large enough.
We will show that this allows us to push all the growth of the function
into the birth--death chain part of the graph which then gives a contradiction
to the finite energy of $u$ on the vertical edges.

\eat{
As an illustrative simple first case,
we now let $v(x_0) \geq v(x_k)$ for all $k \in \N$.
In this case, from $(\De+\al)v(x_0)=0$, we obtain that $v(x_0)<v(0)$
and then from $(\De+\al)v(0)=0$, we likewise get $v(0)<v(1)$.
Therefore, $v(1)>v(0)>v(x_0) \geq v(x_1)$ so that $v(1)>v(x_1)$.
Then, $(\De+\al)v(1)=0$ implies $v(1)<v(2)$ and by induction
we may prove that $v(k)<v(k+1)$ for all $k \in \N_0$.

Thus, since $v(k)$ is increasing while $v(x_0) \geq v(x_k)$, we now
get
$$b(k,x_k)(v(k)-v(x_k))^2 \geq b(k,x_k)(v(0)-v(x_0))^2$$
for all $k \in \N_0$ and from $\sum_k b(k,x_k)=\infty$, we get
a contradiction to $v \in \D$. Therefore, $v=0$ and thus $\QD=\QN$
in this case.}

Let $k_0\in \N_0$ be 
such that $u(x_k) \leq u(o)$ for all $k > k_0$ while
$u(x_{k_0})> u(o)$. 
Note that there exists at least one $k$ such that $u(x_k)> u(o)$ from
$(\De+\al)u(o)=0$.
Then, $u(x_{k_0}) < u(k_0)$ as follows from $(\De+\al)u(x_{k_0})=0$. 
Now, if $u(k_0) \geq u(k_0+1)$, then,
from $(\De+\al)u(k_0)=0$, we would get $u(k_0) < u(k_0 -1)$.
Thus, starting at the vertex $k_0$, the function $u$ could now only increase along a path
consisting of vertices in $\{0,1, \ldots, k_0, o, x_0, x_1, \ldots, x_{k_0}\}.$
As this is a finite set, we eventually reach a contradiction since
we run out of vertices along which the function may increase. Therefore,
$u(k_0) < u(k_0+1)$.

Now, 
$$u(x_{k_0+1}) \leq u(o) < u(x_{k_0})< u(k_0)<u(k_0+1)$$ 
and thus  $u(k_0+1) < u(k_0+2)$ from $(\De+\al)u(k_0+1)=0$.
Iterating the argument and using induction we get
$u(k)< u(k+1)$ for all $k \geq k_0$, see Figure~\ref{fig:star}
for a visualization of this argument.

Since $u(x_k) \leq u(o)$ and $u(k)\geq u(k_0)$ for all $k \geq k_0$, it now follows that
$$b(k,x_k)(u(k)-u(x_k))^2 \geq b(k,x_k)(u(k_0)-u(o))^2$$
for all $k \geq k_0$. Thus, $\sum_k b(k,x_k)=\infty$ gives $u \not \in \D$. Therefore, $\QD=\QN$.

Note that $\Deg_\pt (k)=b(k,x_k)/m(k)$
and if this were bounded, then, from the assumption $m(X_1)<\infty$, we would obtain
$\sum_k b(k,x_k)<\infty$ which contradicts the assumption on $b$. Thus,
$\Deg_\pt (k)$ must be unbounded as follows by Theorem~\ref{t:stability}.  \\

\begin{figure}[htbp]
\begin{tikzpicture}[main_node/.style={circle,draw,minimum size=0.5em, fill, inner sep=2pt]}]

\node[main_node, label=above:0] (0) at (-5, 5) {};
\node[main_node, label=above:1] (1) at (-3.5, 5) {};
\node[main_node, label=above:2] (2) at (-2, 5) {};
\node[main_node, label=right:$x_0$] (x_0) at (-5, 3.5) {};
\node[main_node, label=right:$x_1$] (x_1) at (-3.5, 3.5) {};
\node[main_node, label=right:$x_2$] (x_2) at (-2, 3.5) {};
\node[right=0.8cm of 2] (3) {};

\node[main_node, label=below:$o$] (o) at (1, 1) {};

\node[main_node, label=above:$k_0-1$] (k_0-1) at (1, 5) {};
\node[main_node, label=right:$x_{k_0-1}$] (x_{k_0-1}) at (1, 3.5) {};
\node[main_node, label=above:$k_0$] (k_0) at (2.5, 5) {};
\node[main_node, label=right:$x_{k_0}$] (x_{k_0}) at (2.5, 3.5) {};
\node[main_node, label=above:$k_0+1$] (k_0+1) at (4, 5) {};
\node[main_node, label=right:$x_{k_0+1}$] (x_{k_0+1}) at (4, 3.5) {};
\node[main_node, label=above:$k_0+2$] (k_0+2) at (5.5, 5) {};
\node[main_node, label=right:$x_{k_0+2}$] (x_{k_0+2}) at (5.5, 3.5) {};
\node[right=0.8cm of k_0+2] (k_0+3) {};
\node[left=0.8cm of k_0-1] (k_0-2) {};

\draw node[fill,circle,minimum size=0.1em, inner sep=0.5pt] at (-0.5, 4.25){};
\draw node[fill,circle,minimum size=0.1em, inner sep=0.5pt] at (-0.3, 4.25){};
\draw node[fill,circle,minimum size=0.1em, inner sep=0.5pt] at (-0.1, 4.25){};

\draw node[fill,circle,minimum size=0.1em, inner sep=0.5pt] at (6.8, 4.25){};
\draw node[fill,circle,minimum size=0.1em, inner sep=0.5pt] at (7, 4.25){};
\draw node[fill,circle,minimum size=0.1em, inner sep=0.5pt] at (7.2, 4.25){};

 \path[draw, thick]
(0) edge node {} (1) 
(1) edge node {} (2)  
(0) edge node {} (x_0) 
(1) edge node {} (x_1) 
(2) edge node {} (x_2)
(k_0-1) edge node {} (k_0)
(k_0-1) edge node {} (x_{k_0-1})
(o) edge node {} (x_0) 
(o) edge node {} (x_1)
(o) edge node {} (x_2) 
(o) edge node {} (x_{k_0-1})
(k_0-2) edge node {} (k_0-1);

\draw[thick] (2) to (3);
\draw[thick, mid arrow] (x_{k_0}) to (k_0);
\draw[thick, mid arrow] (k_0) to (k_0+1);
\draw[thick, mid arrow] (x_{k_0+1}) to (k_0+1);
\draw[thick, mid arrow] (k_0+1) to (k_0+2);
\draw[thick, mid arrow] (x_{k_0+2}) to (k_0+2);
\draw[thick, mid arrow] (k_0+2) to (k_0+3);
\draw[thick, mid arrow] (o) to (x_{k_0});
\draw[thick, mid arrow] (x_{k_0+1}) to (o);
\draw[thick, mid arrow] (x_{k_0+2}) to (o);

\end{tikzpicture}

\caption{The graph in Example~\ref{ex:instability}~(2).
The arrows indicate the directions of increase for $u$.}
\label{fig:star}
\end{figure}

\eat{
(3) \textbf{Infinite ladder:}
We next give an instability example constructed
out of two birth--death chains glued vertically, i.e., an infinite
ladder graph.
We let $X_1 = \N_0 =\{0,1,2, \ldots\}$ and $X_2 =\{x_0, x_1, x_2, \ldots\}$
with $X= X_1 \cup X_2$. 
Let $b$ be given by 
\begin{itemize}
\item $b(k,l)>0$ and $b(x_k, x_l)>0$ \qquad if and only if \qquad $|k-l|=1$
for $k, l \in \N_0$
\item $b(k,x_k)=b(x_k,k)>0$ \qquad for $k \in \N_0$
\item $b$ is zero otherwise
\end{itemize}
and let $c=0$. Thus, the graph may be visualized
as two birth--death chains connected vertically to form
an infinite ladder.

We let 
$$m(X_1)<\infty \qquad \textup{and} \qquad \sum_{r=0}^\infty \frac{1}{b(r,r+1)}<\infty$$ 
so that $\QD_1 \neq \QN_1$
and will extend these choices to $b$ and $m$ so that $\QD=\QN$.
For this, as we will show, it suffices that we take $b$ and $m$
so that 
$$m(X_2)=\infty \qquad \textup{and} \qquad \sum_{k=0}^\infty b(k, x_k) =\infty.$$

We let $u>0$ satisfy $(\De +\al)u=0$ for $\al>0$ and show
that $u \not \in \D \cap \ell^2(X,m)$.
If $u \in \ell^2(X,m)$ then, from the assumption
$m(X_2)=\sum_k m(x_k)=\infty$, we obtain $u(x_k) \to 0$ as $k \to \infty$.
Thus, let $k_0 \in \N_0$ be the largest index such that 
$$u(x_{k_0})= \max_{k \in \N_0}u(x_k).$$

Now, from $(\De+\al)u(x_{k_0})=0$ and the fact that $k_0$ is the largest
index where $u(x_k)$ achieves a maximum, we get
$u(x_{k_0})<u(k_0)$.
Therefore, $u(x_k) < u(k_0)$ for all $k \in \N_0$.

Now, if $u(k_0) \geq u(k_0 +1)$, then, from 
$(\De+\al)u(x_{k_0})=0$, we would get $u(k_0)< u(x_{k_0}-1)$
and thus, starting at $k_0$,
the function $u$ could now only increase along some path
consisting of vertices in the finite set 
$\{0,1, \ldots, k_0, x_0, x_1, \ldots, x_{k_0}\}$
which leads to a contradiction. Therefore,
$u(k_0) < u(k_0+1)$.

Now, $u(x_{k_0+1}) <u(k_0) <u(k_0+1) $ and thus
$(\De+\al)u(k_0+1)=0$ now gives
$u(k_0+1) < u(k_0+2)$.
Iterating this argument and using induction now implies
$u(k) < u(k+1)$
for all $k \geq k_0$.
Hence,
$$b(k,x_k)(u(k)-u(x_k))^2 \geq b(k,x_k)(u(k_0)-u(x_{k_0}))^2$$
for all $k \geq k_0$ and thus $\sum_k b(k,x_k)=\infty$ implies 
$u \not \in \D$. Therefore, $\QD=\QN$.

As in the first two examples, we note that $\Deg_\pt (k)=b(k,x_k)/m(k)$
and if this is bounded, then from $m(X_1)<\infty$ we would obtain
$\sum_k b(k,x_k)<\infty$ which contradicts our assumption. Thus,
$\Deg_\pt (k)$ must be unbounded. \\ }

(3) \textbf{Double infinite ladder:}
In this last example we illustrate Theorem~\ref{t:ends}
by connecting two birth--death chains, one of which is not form unique,
via a middle birth--death
chain which satisfies form uniqueness in such a way
that the entire graph is form unique.

Let $X_1=\{y_k\}_{k \in \N_0}$
and $X_2=\{x_k, z_k\}_{k \in \N_0}$. 
Let $b$ be defined symmetrically so that
\begin{itemize}
\item{\makebox[7cm]{$b(x_k,x_l)>0, \ b(y_k,y_l)>0, \ b(z_k,z_l)>0$ \hfill} iff  $|k-l|=1$
for $k, l \in \N_0,$}
\item{\makebox[7cm]{$b(x_k, y_k)>0, b(y_k, z_k) >0$ \hfill} for $k \in \N_0,$}
\item{\makebox[7cm]{$b=0$ \hfill} otherwise,}
\end{itemize}
and let $c=0$. Thus, the graph may be visualized
as an infinite double ladder, see Figure~\ref{fig:ladder}.

In order to fit the framework of Theorem~\ref{t:ends}, we choose
$b$ and $m$ so that $b_1$ over $(X_1,m_1)$ satisfies form uniqueness
by letting $\inf_{k \in \N_0} m(y_k)>0.$

We now choose $b$ and $m$ so that one of the birth--death chains
in $X_2$ is not form unique but so that the entire graph does
satisfy form uniqueness.
More specifically, let
$$\sum_{k=0}^\infty m(x_k)<\infty \qquad \textup{and} \qquad
\sum_{k=0}^\infty \frac{1}{b(x_k,x_{k+1})}<\infty$$
so that $b$ over $(\{x_k\}, m)$ is not form unique, let
$\inf_{k\in \N_0} m(z_k)>0$
so that $b$ over $(\{z_k\},m)$ is form unique, and
$\sum_{k=0}^\infty b(x_k,y_k)= \infty$
so that $b$ over $(X,m)$ will be form unique as we will now show.

We let $u>0$ satisfy $(\De +\al)u=0$ for $\al>0$ and show
that $u \not \in \D \cap \ell^2(X,m)$.
If $u \in \ell^2(X,m)$, then, from the assumptions
$\inf_{k\in \N_0} m(z_k)>0$ and $\inf_{k \in \N_0} m(y_k)>0,$ 
we obtain $u(z_k) \to 0$ and $u(y_k) \to 0$ as $k \to \infty$.
Let $k_0, k_1 \in \N_0$ be the largest indices such that 
$$u(z_{k_0})= \max_{k \in \N_0}u(z_k) \qquad \textup{and} \qquad
u(y_{k_1})= \max_{k \in \N_0}u(y_k).$$

Now from, $(\De+\al)u(z_{k_0})=0$ and the fact that $k_0$ is the largest
index where $u(z_k)$ achieves a maximum, we get
$u(z_{k_0})<u(y_{k_0})\leq u(y_{k_1})$. From this,
$$u(z_k) < u(y_{k_1})$$ 
for all $k \in \N_0$.
We repeat this argument at $y_{k_1}$, using $(\De+\al)u(y_{k_1})=0$
to obtain $u(y_{k_1})< u(x_{k_1})$ so that
$$u(y_k) < u(x_{k_1})$$ 
for all $ k \in \N_0$.

Now, if $u(x_{k_1}) \geq u(x_{k_1+1})$, then 
from $(\De+\al)u(x_{k_1})=0$ and $u(x_{k_1})>u(y_{k_1})$, we get
$u(x_{k_1}) < u(x_{k_1-1})$. Then, starting at $x_{k_1}$,
the function $u$ could now only increase along a path
consisting of vertices in the finite set 
$$\{x_0, x_1, \ldots x_{k_1}\}$$
which leads to a contradiction. Therefore,
$u(x_{k_1}) < u(x_{k_1+1})$.

Using $u(y_k) \leq u(y_{k_1}) < u(x_{k_1}) < u(x_{k_1+1})$
for all $k \in \N_0$ and induction we now get $u(x_{k}) < u(x_{k+1})$
for all $k \geq k_1$.
See Figure~\ref{fig:ladder} for a visualization of the argument.

Therefore, 
$$b(x_k, y_k)(u(x_k)-u(y_k))^2 \geq b(x_{k}, y_{k})(u(x_{k_1})-u(y_{k_1}))^2 $$
for all $k \geq k_1$ and $\sum_k b(x_k,y_k)=\infty$
now gives $u \not \in \D$. This shows $\QD=\QN$.

Note that $\Deg_\pt (x_k) = b(x_k, y_k)/m(x_k)$
and, if this were bounded, then the assumption $\sum_k m(x_k)<\infty$ implies
$\sum_k b(x_k,y_k)<\infty$ which contradicts the assumption on $b$. Thus,
$\Deg_\pt (x_k)$ must be unbounded, as follows from Theorem~\ref{t:ends}.

\eat{Finally, we note that there is nothing particular about the fact that we use
a double ladder in this example. We could, just as well, use a single ladder
or a ladder consisting of any finite number of birth--death chains. 
We choose two in order to illustrate the statement of Theorem~\ref{t:ends}.}

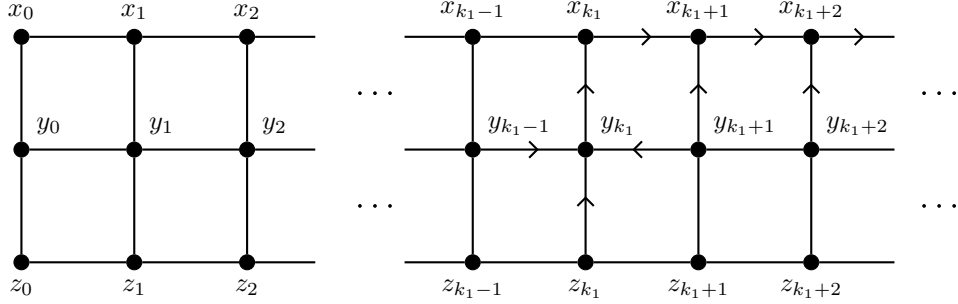
\begin{figure}[htbp] 
\begin{tikzpicture}[main_node/.style={circle,draw,minimum size=0.5em, fill, inner sep=2pt]}]

\node[main_node, label=above:$x_0$] (x_0) at (-5, 5) {};
\node[main_node, label=above:$x_1$] (x_1) at (-3.5, 5) {};
\node[main_node, label=above:$x_2$] (x_2) at (-2, 5) {};
\node[main_node, label=above right:$y_0$] (y_0) at (-5, 3.5) {};
\node[main_node, label=above right:$y_1$] (y_1) at (-3.5, 3.5) {};
\node[main_node, label=above right:$y_2$] (y_2) at (-2, 3.5) {};
\node[main_node, label=below:$z_0$] (z_0) at (-5, 2) {};
\node[main_node, label=below:$z_1$] (z_1) at (-3.5, 2) {};
\node[main_node, label=below:$z_2$] (z_2) at (-2, 2) {};
\node[right=0.8cm of x_2] (x_3) {};
\node[right=0.8cm of y_2] (y_3) {};
\node[right=0.8cm of z_2] (z_3) {};

\draw node[fill,circle,minimum size=0.1em, inner sep=0.5pt] at (-0.5, 4.25){};
\draw node[fill,circle,minimum size=0.1em, inner sep=0.5pt] at (-0.3, 4.25){};
\draw node[fill,circle,minimum size=0.1em, inner sep=0.5pt] at (-0.1, 4.25){};
\draw node[fill,circle,minimum size=0.1em, inner sep=0.5pt] at (-0.5, 2.75){};
\draw node[fill,circle,minimum size=0.1em, inner sep=0.5pt] at (-0.3, 2.75){};
\draw node[fill,circle,minimum size=0.1em, inner sep=0.5pt] at (-0.1, 2.75){};

\draw node[fill,circle,minimum size=0.1em, inner sep=0.5pt] at (7, 4.25){};
\draw node[fill,circle,minimum size=0.1em, inner sep=0.5pt] at (7.2, 4.25){};
\draw node[fill,circle,minimum size=0.1em, inner sep=0.5pt] at (7.4, 4.25){};
\draw node[fill,circle,minimum size=0.1em, inner sep=0.5pt] at (7, 2.75){};
\draw node[fill,circle,minimum size=0.1em, inner sep=0.5pt] at (7.2, 2.75){};
\draw node[fill,circle,minimum size=0.1em, inner sep=0.5pt] at (7.4, 2.75){};

\node[main_node, label=above:$x_{k_1-1}$] (x_{k-1}) at (1, 5) {};
\node[main_node, label=above:$x_{k_1}$] (x_k) at (2.5, 5) {};
\node[main_node, label=above:$x_{k_1+1}$] (x_{k+1}) at (4, 5) {};
\node[main_node, label=above:$x_{k_1+2}$] (x_{k+2}) at (5.5, 5) {};
\node[main_node, label=above right:$y_{k_1-1}$] (y_{k-1}) at (1, 3.5) {};
\node[main_node, label=above right:$y_{k_1}$] (y_k) at (2.5, 3.5) {};
\node[main_node, label=above right:$y_{k_1+1}$] (y_{k+1}) at (4, 3.5) {};
\node[main_node, label=above right:$y_{k_1+2}$] (y_{k+2}) at (5.5, 3.5) {};
\node[main_node, label=below:$z_{k_1-1}$] (z_{k-1}) at (1, 2) {};
\node[main_node, label=below:$z_{k_1}$] (z_k) at (2.5, 2) {};
\node[main_node, label=below:$z_{k_1+1}$] (z_{k+1}) at (4, 2) {};
\node[main_node, label=below:$z_{k_1+2}$] (z_{k+2}) at (5.5, 2) {};

\node[left=0.8cm of x_{k-1}] (x_{k-2}) {};
\node[left=0.8cm of y_{k-1}] (y_{k-2}) {};
\node[left=0.8cm of z_{k-1}] (z_{k-2}) {};
\node[right=1cm of x_{k+2}] (x_{k+3}) {};
\node[right=1cm of y_{k+2}] (y_{k+3}) {};
\node[right=1cm of z_{k+2}] (z_{k+3}) {};

 \path[draw, thick]
(x_0) edge node {} (x_1) 
(x_1) edge node {} (x_2)
(x_2) edge node {} (x_3)
(y_0) edge node {} (y_1) 
(y_1) edge node {} (y_2)  
(y_2) edge node {} (y_3) 
(z_0) edge node {} (z_1) 
(z_1) edge node {} (z_2) 
(z_2) edge node {} (z_3)
(x_0) edge node {} (y_0) 
(y_0) edge node {} (z_0) 
(x_1) edge node {} (y_1) 
(y_1) edge node {} (z_1)
(x_2) edge node {} (y_2) 
(y_2) edge node {} (z_2)   
(x_{k-1}) edge node {} (y_{k-1}) 
(y_{k-1}) edge node {} (z_{k-1}) 
(x_{k-1}) edge node {} (x_k) 
(z_{k-1}) edge node {} (z_k)
(z_k) edge node {} (z_{k+1})
(z_{k+1}) edge node {} (y_{k+1})
(z_{k+2}) edge node {} (y_{k+2})
(x_{k-2}) edge node {} (x_{k-1})
(y_{k-2}) edge node {} (y_{k-1})
(z_{k-2}) edge node {} (z_{k-1});

\draw[thick, mid arrow] (y_{k-1}) to (y_k);
\draw[thick, mid arrow] (z_k) to (y_k);
\draw[thick, mid arrow] (y_k) to (x_k);
\draw[thick, mid arrow] (x_k) to (x_{k+1});
\draw[thick, mid arrow] (y_{k+1}) to (y_k);
\draw[thick, mid arrow] (y_{k+1}) to (x_{k+1});
\draw[thick, mid arrow] (x_{k+1}) to (x_{k+2});
\draw[thick, mid arrow] (x_{k+2}) to (x_{k+3});
\draw[thick] (y_{k+1}) to (y_{k+2});
\draw[thick] (z_{k+1}) to (z_{k+2});
\draw[thick, mid arrow] (y_{k+2}) to (x_{k+2});
\draw[thick] (y_{k+2}) to (y_{k+3});
\draw[thick] (z_{k+2}) to (z_{k+3});

\end{tikzpicture}

\caption{The graph in Example~\ref{ex:instability}~(3).
The arrows indicate the directions of increase for $u$.}
\label{fig:ladder}
\end{figure}

\end{example}

\appendix

\section{Form uniqueness and capacity of the Cauchy boundary}\label{s:appendix}
In this appendix, we give more details on the connection
between capacity of the Cauchy boundary and form uniqueness
for locally finite graphs. This material
is used in Subsection~\ref{ss:capacity} and all 
definitions are given there.
In particular,
we extend some arguments given in \cite{HKMW13} for the case
of $c=0$ to that of a general $c$ and sketch a proof
of Theorem~\ref{t:capacity_general} which states
that if the capacity of the Cauchy boundary is finite,
then it is zero if and only if the graph satisfies form uniqueness,
see also Theorem~6.9 in \cite{Sch20b}.

We denote the closure of $X$ with respect to a path metric $d_\si$
as $\overline{X}^\si$ and let $\pt_\si X=\overline{X}^\si \setminus X$
denote the Cauchy boundary. We now restate the main result
for the convenience of the reader.

\begin{theorem}[Theorem~\ref{t:capacity_general} in Subsection~\ref{ss:capacity}]\label{at:capacity_general}
Let $(b,c)$ be a locally finite graph over $(X,m)$. Let $d_\si$
be a strongly intrinsic path metric with Cauchy
boundary $\pt_\si X$. If $\cp(\pt_\si X)<\infty$, then
the following statements are equivalent:
\begin{itemize}
\item[(i)] $\QD \neq \QN$.
\item[(ii)] $0< \cp(\pt_\si X)$.
\end{itemize}
\end{theorem}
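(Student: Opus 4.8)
The plan is to reduce the theorem to the single equivalence
$$\QD=\QN \quad\Longleftrightarrow\quad \cp(\pt_\si X)=0,$$
and to prove the two directions separately. The implication ``$\cp(\pt_\si X)=0\Rightarrow\QD=\QN$'' is the contrapositive of (i)$\Rightarrow$(ii) and, as it turns out, needs no finiteness assumption; the converse ``$0<\cp(\pt_\si X)<\infty\Rightarrow\QD\ne\QN$'' is where the hypothesis $\cp(\pt_\si X)<\infty$ enters. Throughout I would fix the decomposition $\Q_{b,c}=\Q_{b,0}+\Q_{0,c}$ and treat the killing piece $\Q_{0,c}(f)=\sum_{x\in X}c(x)f^2(x)$ as a bounded-below zeroth order term: since $\Q_{0,c}(\eta f)\le\|\eta\|_\infty^2\,\Q_{0,c}(f)$ and $\cp$, $D(\QN)$, $D(\QD)$ are all built from the full form norm, every cut-off manipulation in the argument of \cite{HKMW13} (written there for $c=0$) survives this bookkeeping; I would record it once and then suppress it.

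For ``$\cp(\pt_\si X)=0\Rightarrow\QD=\QN$'': vanishing capacity yields $v_n\in D(\QN)$ with $v_n\ge1$ on open neighbourhoods $U_n\supseteq\pt_\si X$ of the Cauchy boundary in $\ov{X}^\si$ and $\|v_n\|_\Q\to0$; truncating (a normal contraction, which decreases $\|\cdot\|_\Q$ since $\QN$ is a Dirichlet form) we may take $0\le v_n\le1$, so $v_n\to0$ pointwise and $\eta_n:=1-v_n$ satisfies $0\le\eta_n\le1$, $\eta_n=0$ on $U_n\cap X$, $\eta_n\to1$ pointwise and $\Q(\eta_n)=\Q(v_n)\to0$. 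For $f\in D(\QN)\cap\ell^\infty(X)$, a form core, one checks $\eta_nf\to f$ in $\|\cdot\|_\Q$: the $\ell^2$- and $\Q_{0,c}$-parts of $f-\eta_nf=v_nf$ are dominated by $\|f\|_\infty^2\|v_n\|_\Q^2\to0$, while for $\Q_{b,0}$ one expands $v_n(x)f(x)-v_n(y)f(y)=v_n(x)(f(x)-f(y))+f(y)(v_n(x)-v_n(y))$ and combines $\Q_{b,0}(v_n)\to0$ with dominated convergence (summable majorant from $f\in\D$, pointwise vanishing from $v_n\to0$). The remaining point, and the technical core, is that $\eta_nf$, whose support lies in $X\setminus U_n$ (a set closed in $\ov{X}^\si$ and disjoint from $\pt_\si X$), lies in $D(\QD)$: this is shown by approximation with finitely supported functions using cut-offs adapted to $d_\si$, the extra energy being estimated through strong intrinsicity $\sum_yb(x,y)\si^2(x,y)\le m(x)$, which crucially persists after the edges into $U_n$ are deleted, so that $d_\si$ is still intrinsic for the truncated subgraph. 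Letting $n\to\infty$ gives $D(\QN)\subseteq D(\QD)$, hence $\QD=\QN$.

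For ``$0<\cp(\pt_\si X)<\infty\Rightarrow\QD\ne\QN$'': finiteness of $\cp(\pt_\si X)$ furnishes an open $U\supseteq\pt_\si X$ with $A:=U\cap X$ of finite capacity, so the convex set $\{u\in D(\QN):u\ge1\textup{ on }A\}$ is nonempty and closed in the Hilbert space $D(\QN)$ and has a unique element $e$ of minimal norm; truncation and uniqueness force $0\le e\le1$, hence $e=1$ on $A$. I claim $e\notin D(\QD)$: if $\vp_n\in C_c(X)$ had $\vp_n\to e$ in $\|\cdot\|_\Q$, then each $\vp_n$ would vanish on an open neighbourhood of $\pt_\si X$ in $\ov{X}^\si$ (its support is finite, hence closed in $\ov{X}^\si$ and disjoint from $\pt_\si X$), so on the open set $V_n:=U\cap(\ov{X}^\si\setminus\supp\vp_n)\supseteq\pt_\si X$ one would have $e-\vp_n=1\ge1$, whence $\cp(\pt_\si X)\le\cp(V_n\cap X)\le\|e-\vp_n\|_\Q\to0$, contradicting (ii). Thus $e\in D(\QN)\setminus D(\QD)$ and $\QD\ne\QN$. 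This is the content of the auxiliary Lemma~\ref{l:capacity_alternative}, which I would state and prove first.

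The step I expect to be the main obstacle is $\eta_nf\in D(\QD)$ in the first implication. One is forced to cut off simultaneously near the Cauchy boundary (through the $v_n$) and, on the complementary set $X\setminus U_n$, near ``infinity'' with respect to $d_\si$, and to control the energy created by the second cut-off on a subgraph whose weighted degree need not be bounded, so that the clean stability criterion Theorem~\ref{t:stability} is unavailable. It is precisely to keep this control after excising a neighbourhood of $\pt_\si X$ that a \emph{strongly} intrinsic, rather than merely intrinsic, path metric is assumed, since only then does the metric remain intrinsic for the subgraph obtained by deleting those edges; this is the genuinely analytic heart of the proof and the one point where I would follow \cite{HKMW13} in detail.
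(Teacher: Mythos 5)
Your proposal is correct and follows essentially the same route as the paper's appendix: your second direction is exactly Lemma~\ref{l:capacity_alternative} (the equilibrium potential $e$ for a finite-capacity neighbourhood of $\pt_\si X$ cannot be approximated in $\|\cdot\|_\Q$ by finitely supported functions, since each $\vp_n$ misses a neighbourhood of the boundary), and your first direction is Lemma~\ref{l:polar_capacity} (kill $u$ near the boundary with near-minimizers of the capacity, then cut off on $Y=X\setminus U$ via $d_Y$-balls, with completeness of $(Y,d_Y)$ giving finite support and the strongly intrinsic condition, preserved under edge deletion, giving the $m(x)/r^2$ energy bound). The only cosmetic differences are that the paper uses equilibrium potentials rather than arbitrary truncated near-minimizers in the first step and handles disconnected $Y$ by a diagonal argument over components, neither of which changes the substance.
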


We start the proof by recalling some general theory.
If $U \subseteq \overline{X}^\si$ is open with $\cp(U)<\infty$, then there 
exists a unique $e \in D(\QN)$ with 
$$0 \le e \le 1, \qquad e = 1 \textup{ on } U \cap X\textup{,} \qquad
\cp (U)= \|e\|^2_{\Q}.$$
See Lemma~2.1.1 in \cite{FOT94} for the existence of such an $e$
for general Dirichlet forms and Lemma~4.1 in \cite{HKMW13}
for details on applying this to the graph case. We call $e$ the \emph{equilibrium
potential associated to $U$}. 

We next prove a general lemma that says that if a graph satisfies form 
uniqueness, then the capacity of the boundary is either zero or infinity.
We do not require local finiteness
or that the path metric satisfies any particular condition at this point.
\begin{lemma}\label{l:capacity_alternative}
Let $(b, c)$ be a graph over $(X, m)$. Let $d_\si$ be a path metric with Cauchy
boundary $\pt_\si X$. 
If $\QD=\QN$, then either $\cp(\pt_\si X)=0$ or $\cp(\pt_\si X)=\infty$.
\end{lemma}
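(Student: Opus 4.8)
The plan is to assume $\QD=\QN$ and rule out the intermediate case $0<\cp(\pt_\si X)<\infty$. If $\cp(\pt_\si X)=\infty$ there is nothing to prove, so suppose $\cp(\pt_\si X)<\infty$; I will show this forces $\cp(\pt_\si X)=0$. By the definition of capacity for subsets of $\ov{X}^\si$ as an infimum over open neighborhoods, there is an open $U\subseteq\ov{X}^\si$ with $\pt_\si X\subseteq U$ and $\cp(U)<\infty$ (indeed $\cp(U)\le\cp(U\cap X)<\infty$ for a suitable such $U$), and hence an associated equilibrium potential $e=e_U\in D(\QN)$ with $0\le e\le 1$ and $e=1$ on $U\cap X$.

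Now I use form uniqueness: since $\QD=\QN$, we have $e\in D(\QD)=\ov{C_c(X)}^{\|\cdot\|_\Q}$, so there is a sequence $\vp_j\in C_c(X)$ with $\|e-\vp_j\|_\Q\to 0$. The geometric input is that each $\supp\vp_j$, being a finite subset of the metric space $\ov{X}^\si$, is closed there, so $U_j:=\ov{X}^\si\setminus\supp\vp_j$ is an open neighborhood of $\pt_\si X$ on which $\vp_j$ vanishes. Hence on the open set $V_j:=U\cap U_j$, which still contains $\pt_\si X$, we have $e-\vp_j=1$ on $V_j\cap X$. Applying the normal contraction $t\mapsto (t\vee 0)\wedge 1$ --- which operates on the Dirichlet form $\QN$ and increases neither the energy nor the $\ell^2$-norm --- to $e-\vp_j\in D(\QN)$ produces $\psi_j\in D(\QN)$ with $0\le\psi_j\le 1$, $\psi_j=1$ on $V_j\cap X$, and $\|\psi_j\|_\Q\le\|e-\vp_j\|_\Q$. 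Thus $\psi_j$ is admissible in the definition of $\cp(V_j\cap X)$, and since $V_j$ is an open neighborhood of $\pt_\si X$,
$$\cp(\pt_\si X)\ \le\ \cp(V_j\cap X)\ \le\ \|\psi_j\|_\Q\ \le\ \|e-\vp_j\|_\Q\ \longrightarrow\ 0,$$
so $\cp(\pt_\si X)=0$, as desired. Note the argument uses neither local finiteness nor any special property of $d_\si$, matching the generality of the statement.

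The routine points are that a finite subset of $X$ is closed in $\ov{X}^\si$ (singletons are closed in a metric space, and finite unions of closed sets are closed) and that truncation does not increase $\|\cdot\|_\Q$, which is the Markov property of $\QN$ together with the pointwise bound $|\psi_j|\le|e-\vp_j|$. The one genuinely substantive step --- and the only place the hypothesis $\QD=\QN$ enters --- is the approximation of the equilibrium potential by finitely supported functions: this is exactly what lets one ``peel off'' an open neighborhood of the Cauchy boundary of arbitrarily small capacity. I expect this to be the conceptual crux; everything else is bookkeeping with Dirichlet-form properties and the definition of capacity.
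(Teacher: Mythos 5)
Your proof is correct and takes essentially the same route as the paper's: approximate the equilibrium potential $e$ of a finite-capacity open neighborhood $U$ of $\pt_\si X$ by $\vp_j \in C_c(X)$ (using $\QD=\QN$), observe that $e-\vp_j$ equals $1$ on the smaller open neighborhood $U\setminus \supp \vp_j$ of the boundary, and conclude $\cp(\pt_\si X)\le \|e-\vp_j\|_\Q \to 0$. The only difference is your extra normal-contraction (truncation) step, which is harmless but unnecessary, since the definition of capacity only requires an admissible function in $D(\QN)$ that is $\ge 1$ on the set, so $e-\vp_j$ itself already works.
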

\begin{proof}
Let $\QD=\QN$ and $\cp(\pt_\si X)<\infty$. 
Then, there exists an open set $U$ in $\overline{X}^\si$ such that
$\pt_\si X \subseteq U$ and $\cp(U)<\infty$.
Let $e \in D(\QN)$ be the equilibrium potential associated to $U$.
Since $\QD=\QN$, there exists a sequence $e_n \in C_c(X)$ 
such that $e_n \to e$ in $\|\cdot\|_{\Q}$.
Since $e - e_n \ge 1$ on $U_n \cap X$ for some neighborhood 
$U_n$ of $\partial_\sigma X$ in $\overline{X}^\sigma$, we have
\[
\cp (\pt_\si X) \le \lim_{n \to \infty} \|e-e_n\|_{\Q} =0.
\]
This completes the proof.
\end{proof}

Next, we show that capacity zero implies form uniqueness.
Recall that a path metric $d_\si$ is strongly intrinsic if
$$\sum_{y \in X}b(x,y) \si^2(x,y) \leq m(x)$$
for all $x \in X$. 
Note that we only require the metric
to be strongly intrinsic and the graph to be locally finite at the last step.

\begin{lemma}\label{l:polar_capacity}
Let $(b, c)$ be a locally finite graph over $(X, m)$.
Let $d_\si$ be a path metric with Cauchy
boundary $\pt_\si X$ such that $\cp(\partial_\sigma X) = 0$.  
Let $u \in D(\QN)$ and $\epsilon > 0$.  
Then there exists an open set $U \subseteq \overline{X}^\sigma$ such that $\partial_\sigma X \subseteq U$,
$\cp(U)<\infty$, and
\[
\| u -(1 - e) u \|_\Q < \epsilon
\]
where $e$ is the equilibrium potential associated to $U$.

If $d_\sigma$ is additionally a strongly intrinsic path metric,
then there exists $\varphi \in C_c(X)$ such that
\[
\| (1 - e) u -\varphi \|_\Q < \epsilon.
\]
In particular, $\QD=\QN$.
\end{lemma}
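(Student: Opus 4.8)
The plan is to obtain form uniqueness from the two–step approximation $u \rightsquigarrow (1-e)u \rightsquigarrow \varphi$: the first step strips off the part of $u$ sitting near the polar set $\partial_\sigma X$, and the second truncates to finite support using strong intrinsicity of $d_\sigma$.

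For the first step I would use $\cp(\partial_\sigma X)=0$ to pick, for each $n$, an open $U_n \supseteq \partial_\sigma X$ with $\cp(U_n)<1/n$, and let $e_n \in D(\QN)$ be the equilibrium potential of $U_n$ recalled above, so that $0\le e_n\le 1$, $e_n=1$ on $U_n\cap X$, and $\|e_n\|_\Q^2=\cp(U_n)\to 0$; in particular $e_n\to 0$ in $\ell^2(X,m)$ and hence pointwise. The target is $\|u-(1-e_n)u\|_\Q=\|e_nu\|_\Q\to 0$. The $\ell^2$ part is immediate by dominated convergence, since $e_n(x)^2u(x)^2\le u(x)^2$ and $e_n\to 0$ pointwise. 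For the Dirichlet part I would use the product rule $(e_nu)(x)-(e_nu)(y)=e_n(x)(u(x)-u(y))+u(y)(e_n(x)-e_n(y))$ together with the triangle inequality in $\ell^2(X\times X,b)$: the summand coming from the first term is dominated by $b(x,y)(u(x)-u(y))^2$ and tends to $0$ pointwise, so it vanishes by dominated convergence; for the second, $\sum_{x,y}b(x,y)u(y)^2(e_n(x)-e_n(y))^2$, I would first reduce to bounded $u$ — legitimate because $\QN$ is a Dirichlet form, so the truncations $u^{(M)}=(u\wedge M)\vee(-M)$ converge to $u$ in $\|\cdot\|_\Q$ (cf. \cite{FOT94, KLW21}) — and then bound it by $\|u\|_\infty^2\cdot 2\Q_b(e_n)\to 0$; the killing part $\sum_x c(x)e_n(x)^2u(x)^2$ again vanishes by dominated convergence. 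Taking $U=U_n$ for $n$ large gives the first assertion and, in particular, that $(1-e)u\in D(\QN)$.

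For the second step, write $v=(1-e)u\in D(\QN)$, which vanishes on $U\cap X$. Fix $o\in X$, and for $R>0$ let $\eta_R=0\vee\bigl(2-d_\sigma(\cdot,o)/R\bigr)\wedge 1$, the metric ``tent'': $\eta_R=1$ on $B_R(o)$, $\eta_R=0$ off $B_{2R}(o)$, and $|\eta_R(x)-\eta_R(y)|\le R^{-1}|d_\sigma(x,o)-d_\sigma(y,o)|\le R^{-1}\sigma(x,y)$ for $x\sim y$. Set $\varphi_R=\eta_R v$. Then $\operatorname{supp}\varphi_R\subseteq B_{2R}(o)\cap(X\setminus U)$, whose closure in $\overline X^\sigma$ is a bounded subset of $\overline X^\sigma$ disjoint from $\partial_\sigma X$ (as $\partial_\sigma X\subseteq U$), hence a finite subset of $X$ by the Hopf--Rinow type theory for locally finite graphs with path metrics (cf. \cite{HKMW13, KLW21}); thus $\varphi_R\in C_c(X)$. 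That $\varphi_R\to v$ in $\|\cdot\|_\Q$ follows by dominated convergence for the $\ell^2$ and killing parts, while for the Dirichlet part the product rule $((\eta_R-1)v)(x)-((\eta_R-1)v)(y)=(\eta_R(x)-1)(v(x)-v(y))+v(y)(\eta_R(x)-\eta_R(y))$ produces a first term handled by dominated convergence and a second term bounded, now using that $d_\sigma$ is strongly intrinsic, by $\sum_{x,y}b(x,y)v(y)^2(\eta_R(x)-\eta_R(y))^2\le R^{-2}\sum_{x,y}b(x,y)v(y)^2\sigma^2(x,y)\le R^{-2}\sum_y v(y)^2m(y)=R^{-2}\|v\|^2\to 0$. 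Choosing $R$ large gives $\varphi=\varphi_R\in C_c(X)$ with $\|v-\varphi\|_\Q<\epsilon$; combining the two steps via the triangle inequality yields $\varphi\in C_c(X)$ with $\|u-\varphi\|_\Q<\epsilon$, so $C_c(X)$ is $\|\cdot\|_\Q$-dense in $D(\QN)$, i.e.\ $\QD=\QN$.

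I expect the main obstacle to be the first step for unbounded $u$: one must know both that $(1-e)u$ actually lies in $D(\QN)$ and that $\|e_n u\|_\Q\to 0$, yet the naive product-rule bound on $\sum_{x,y}b(x,y)u(y)^2(e_n(x)-e_n(y))^2$ is not dominated by anything summable when both $u$ and the weighted degree are unbounded. The reduction to bounded $u$ through the convergence of truncations in Dirichlet forms is the device that removes this, but it must be threaded carefully through the $\epsilon$-bookkeeping — first choose a bounded approximant of $u$, then the neighborhood $U$, then the tent parameter $R$. A secondary technical point is the finiteness of $\operatorname{supp}\varphi_R$, which is precisely where local finiteness and the completeness/Hopf--Rinow structure of the Cauchy completion enter.
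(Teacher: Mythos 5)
Your first step is essentially the paper's own argument (reduce to bounded $u$ by density of $D(\QN)\cap\ell^\infty(X)$, take equilibrium potentials $e_n$ of open $U_n\supseteq\partial_\sigma X$ with $\cp(U_n)\to 0$, use the product rule and dominated convergence to get $\|e_n u\|_\Q\to 0$), and it is correct. The gap is in the second step, at the claim that $\operatorname{supp}\varphi_R\subseteq B_{2R}(o)\cap(X\setminus U)$ is finite because its closure in $\overline{X}^\sigma$ is bounded and disjoint from $\partial_\sigma X$, ``by Hopf--Rinow.'' The Hopf--Rinow type theorem for locally finite graphs with path metrics applies to \emph{complete} graphs, and $(X,d_\sigma)$ is exactly not complete in the situation of interest (if it were, $\partial_\sigma X=\emptyset$); closed bounded subsets of $\overline{X}^\sigma$ need not be compact, and a bounded infinite set of vertices need not accumulate at the Cauchy boundary. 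Concretely, take a ray $v_0\sim v_1\sim v_2\sim\cdots$ with $\sigma(v_n,v_{n+1})=2^{-n}$ and attach to each $v_n$ a pendant vertex $w_n$ with $\sigma(v_n,w_n)=1$ (choose $m$ large enough that $d_\sigma$ is strongly intrinsic). The Cauchy boundary is the single endpoint $p$ of the ray, and $d_\sigma(w_n,p)\geq 1$ for all $n$, so for any small neighborhood $U$ of $p$ the set $B_3(v_0)\cap(X\setminus U)$ contains all of the infinitely many, pairwise $2$-separated vertices $w_n$. Hence $\varphi_R=\eta_R v$ is in general not finitely supported, and the truncation step collapses.

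The paper's proof takes a detour precisely to avoid this: it passes to the induced subgraph on $Y=X\setminus U$ with the path metric $d_Y$ obtained by restricting $\sigma$ to edges inside $Y$ (this is where \emph{strong} intrinsicity is actually needed, so that the metric stays intrinsic after the edges meeting $U$ are removed). Since $d_Y\geq d_\sigma$ on $Y$ and any $d_Y$-Cauchy sequence in $Y$ would have its $d_\sigma$-limit in the topologically discrete set $X$ rather than in $\partial_\sigma X\subseteq U$, the space $(Y,d_Y)$ is complete; Hopf--Rinow then gives finiteness of $d_Y$-balls, so the cut-offs $\eta_r(x)=\left(\frac{2r-d_Y(x,x_0)}{r}\right)_+\wedge 1$ lie in $C_c(Y)$, and your energy estimate goes through with $d_Y$ in place of $d_\sigma$ (with the countably many connected components of $Y$ handled separately by a diagonal argument). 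A telltale sign of the gap in your version is that you never use more than plain intrinsicity of $d_\sigma$, whereas the lemma's hypothesis of a \emph{strongly} intrinsic path metric is exactly what the subgraph construction requires. To repair your proof, replace the $d_\sigma$-tent on $X$ by this $d_Y$-tent on $X\setminus U$.
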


\begin{proof} Since $D(\QN) \cap \ell^\infty(X)$ is dense in $D(\QN)$
with respect to $\|\cdot\|_\Q$, see Lemma~3.16 in \cite{KLW21}, 
we assume throughout that $u$ is bounded.

Let $e_n$ be the equilibrium potential associated to $U_n$ where
$U_n$ are open sets in $\overline{X}^\sigma$ such that $\partial_\sigma X \subseteq U_n$ and $\cp(U_n) = \| e_n\|_{\Q} \to 0$ as $n \to \infty$. 
We then estimate $\|u-(1-e_n)u\|_{\Q} = \|e_n u\|_{\Q}$ as follows:
\begin{align*}
\|e_n u\|_{\Q}^2 
\le & \sum_{x,y \in X} b(x,y)  e_n^2(x)(u(x)-u(y))^2 +
\|u\|^2_\infty \sum_{x,y \in X} b(x,y) (e_n(x)-e_n(y))^2 \ \\
& + \|u\|^2_\infty \sum_{x \in X} \left( c(x)+m(x) \right)e_n^2(x) \\
\leq &\sum_{x,y \in X} b(x,y) e_n^2(x)(u(x)-u(y))^2 + 2 \|u\|^2_\infty \cp (U_n).
\end{align*}
Since $e_n(x)\to0$ for all $x \in X$, the dominated convergence theorem now
implies $\|e_n u\|_\Q= \|u-(1-e_n)u\|_\Q \to0$ as $n \to \infty$.
This shows the first part of the lemma.

Now, assume that $d_\sigma$ is a strongly intrinsic path metric.
Let $Y = X \setminus U$ where $U$ is an open set in $\overline{X}^\si$
such that $\partial_\sigma X \subseteq U$
and consider the restriction
of $(b,c)$ to $Y$, denoted by $(b_Y,c_Y)$.
Assume that $(b_Y,c_Y)$ is connected and let 
$d_Y$ be the distance given by restricting $\sigma$ to edges
defined by $b_Y$.
We claim that $(Y, d_Y)$ is complete.

This follows as $d_Y \ge d_\si$ on $Y$ gives
that any Cauchy sequence $(x_n)$ 
of vertices in $Y$ 
with respect to $d_Y$ is also Cauchy with respect to $d_\si$.  
However, the limit of $(x_n)$ does not belong to $Y$ and eventually $(x_n)$ becomes constant since $d_Y$ is a discrete metric by local
finiteness and the use of path metrics, see Theorem~A.1 in \cite{HKMW13}. 
This completes the proof of the claim.

Now, fix $x_0 \in Y$ and define
\[
\eta_r(x) = \left( \frac{2r - d_Y(x, x_0)}{r} \right)_+ \wedge 1
\]
for $r \in \N_0$ and $x \in Y$.
Since $Y$ is complete, all balls defined with respect to $d_Y$
are finite, see Theorem~A.1 in \cite{HKMW13}. Thus, $\eta_r \in C_c(Y)$
as $\eta_r$ is supported on a ball of radius $2r$. 
Furthermore, $\eta_r \to 1$ as $r \to \infty$.
Finally, by the triangle inequality and the strong
intrinsic condition, we get
$$\sum_{y \in X} b(x,y)(\eta_r(x)-\eta_r(y))^2 \leq \frac{m(x)}{r^2}.$$

Let $u \in D(\QN) \cap \ell^\infty(X)$
and let $U \subseteq \overline{X}^\sigma$ be open with $\partial_\sigma X \subseteq U$
and $\cp(U)<\infty$.
Let $e$ be the equilibrium potential associated to $U$.
Let $v=(1-e)u$ and $\varphi_r = \eta_r v \in C_c(Y)$. 
We now show that $\varphi_r$ converges to $v$
 in $\|\cdot\|_\Q$ which will complete the proof.

Set $h_r = v - \varphi_r = (1-\eta_r)v = (1-\eta_r)(1-e)u$.
Taking into account that $h_r=0$ on $U \cap X$, we have
\[
\QN (h_r) = \frac{1}{2}\sum_{x,y \in Y} b(x,y)(h_r(x)-h_r(y))^2
+\sum_{x \in Y} \sum_{y \in U} b(x,y) h_r^2(x) + \sum_{x \in Y} c(x)h_r^2(x).
\]
For the first term, we use basic estimates and the strong
intrinsic condition to get:
\begin{align*}
& \frac{1}{2}\sum_{x,y \in Y} b(x,y)(h_r(x)-h_r(y))^2 \\
\le &
\sum_{x,y \in Y} b(x,y)(1-\eta_r (x))^2 (v(x)-v(y))^2 + 
\sum_{x \in Y} v^2(x){\sum_{y \in Y} b(x,y){(\eta_r(x) -\eta_r(y))^2}}\\
\leq
&
\sum_{x,y \in Y} b(x,y)(1-\eta_r (x))^2 (v(x)-v(y))^2 + \frac{\|v\|^2}{r^2} \to 0
\end{align*}
as $r \to \infty$ by the dominated convergence theorem
since $\eta_r \to 1$ as $r \to \infty$.
Recall that $e=1$ on $U \cap X$ and use this to bound
the remaining terms by
$$\|u\|^2_\infty \sum_{x \in Y} \sum_{y \in U}b(x,y) (1-\eta_r(x))^2(e(x)-e(y))^2 + \sum_{x \in Y} c(x) (1-\eta_r(x))^2 v^2(x)$$
which also tend to zero as $r \to \infty$, again by the dominated convergence theorem.
Therefore, $\|h_r\|_\Q \to 0$, and thus $\varphi_r \to v$ in $\|\cdot\|_Q$.
This completes the proof when $Y$ is connected.

If $Y$ is not connected, then there exist at most countably many connected components $\{Y_i\}_{i \in \N_0}$ of $Y$.  
For $v = (1 - e)u$, we then have
\[
\QN(v) = \sum_{i \in \N_0} \QN(v \cdot 1_{Y_i})
\]
where $1_{Y_i}$ is the characteristic function of $Y_i$.
Thus, the conclusion follows by applying the above argument to each $Y_i$ and 
by a diagonal sequence argument.
\end{proof}

\begin{proof}[Proof of Theorem~\ref{at:capacity_general}]
Lemmas~\ref{l:capacity_alternative}~and~\ref{l:polar_capacity} together 
prove the theorem.
\end{proof}

\subsection*{Acknowledgments}
J.~M.~acknowledges the generous hospitality of the Graduate
Center and York College of the City University of New York.
R.~K.~W.~acknowledges the generous hospitality of Tohoku University.
G.~R.~acknoweledges the generous hospitality of Cornell University and 
wishes to thank the organizers of the 8th Cornell Conference on Analysis, Probability, 
and Mathematical Physics on Fractals for the opportunity to present this work.

\begin{bibdiv}
\begin{biblist}

\bib{AS23}{article}{
   author={Adriani, Andrea},
   author={Setti, Alberto G.},
   title={The $ L^1 $-Liouville property on graphs},
   journal={J. Fourier Anal. Appl.},
   volume={29},
   date={2023},
   number={4},
   pages={Paper No. 44, 20},
   issn={1069-5869},
   review={\MR{4621022}},
   doi={10.1007/s00041-023-10025-3},
}

\bib{BKW15}{article}{
   author={Bauer, Frank},
   author={Keller, Matthias},
   author={Wojciechowski, Rados\l aw K.},
   title={Cheeger inequalities for unbounded graph Laplacians},
   journal={J. Eur. Math. Soc. (JEMS)},
   volume={17},
   date={2015},
   number={2},
   pages={259--271},
   issn={1435-9855},
   review={\MR{3317744}},
   doi={10.4171/JEMS/503},
}

\bib{BG15}{article}{
   author={Bonnefont, Michel},
   author={Gol\'{e}nia, Sylvain},
   title={Essential spectrum and Weyl asymptotics for discrete Laplacians},
   language={English, with English and French summaries},
   journal={Ann. Fac. Sci. Toulouse Math. (6)},
   volume={24},
   date={2015},
   number={3},
   pages={563--624},
   issn={0240-2963},
   review={\MR{3403733}},
   doi={10.5802/afst.1456},
}

\bib{BK13}{article}{
   author={Breuer, Jonathan},
   author={Keller, Matthias},
   title={Spectral analysis of certain spherically homogeneous graphs},
   journal={Oper. Matrices},
   volume={7},
   date={2013},
   number={4},
   pages={825--847},
   issn={1846-3886},
   review={\MR{3154573}},
   doi={10.7153/oam-07-46},
}

\bib{Che73}{article}{
   author={Chernoff, Paul R.},
   title={Essential self-adjointness of powers of generators of hyperbolic
   equations},
   journal={J. Functional Analysis},
   volume={12},
   date={1973},
   pages={401--414},
   issn={0022-1236},
   review={\MR{369890}},
   doi={10.1016/0022-1236(73)90003-7},
}

\bib{deTT11}{article}{
   author={Colin de Verdi\`ere, Yves},
   author={Torki-Hamza, Nabila},
   author={Truc, Fran\c{c}oise},
   title={Essential self-adjointness for combinatorial Schr\"{o}dinger operators
   II---metrically non complete graphs},
   journal={Math. Phys. Anal. Geom.},
   volume={14},
   date={2011},
   number={1},
   pages={21--38},
   issn={1385-0172},
   review={\MR{2782792}},
   doi={10.1007/s11040-010-9086-7},
}

\bib{CLMP20}{article}{
   author={Cushing, David},
   author={Liu, Shiping},
   author={M\"{u}nch, Florentin},
   author={Peyerimhoff, Norbert},
   title={Curvature calculations for antitrees},
   conference={
      title={Analysis and geometry on graphs and manifolds},
   },
   book={
      series={London Math. Soc. Lecture Note Ser.},
      volume={461},
      publisher={Cambridge Univ. Press, Cambridge},
   },
   date={2020},
   pages={21--54},
   review={\MR{4412968}},
}

\bib{DK03}{article}{
   author={Diestel, Reinhard},
   author={K\"{u}hn, Daniela},
   title={Graph-theoretical versus topological ends of graphs},
   note={Dedicated to Crispin St. J. A. Nash-Williams},
   journal={J. Combin. Theory Ser. B},
   volume={87},
   date={2003},
   number={1},
   pages={197--206},
   issn={0095-8956},
   review={\MR{1967888}},
   doi={10.1016/S0095-8956(02)00034-5},
}

\bib{DK87}{article}{
   author={Dodziuk, Jozef},
   author={Karp, Leon},
   title={Spectral and function theory for combinatorial Laplacians},
   conference={
      title={Geometry of random motion},
      address={Ithaca, N.Y.},
      date={1987},
   },
   book={
      series={Contemp. Math.},
      volume={73},
      publisher={Amer. Math. Soc., Providence, RI},
   },
   date={1988},
   pages={25--40},
   review={\MR{954626}},
   doi={10.1090/conm/073/954626},
}

\bib{FLW14}{article}{
   author={Frank, Rupert L.},
   author={Lenz, Daniel},
   author={Wingert, Daniel},
   title={Intrinsic metrics for non-local symmetric Dirichlet forms and
   applications to spectral theory},
   journal={J. Funct. Anal.},
   volume={266},
   date={2014},
   number={8},
   pages={4765--4808},
   issn={0022-1236},
   review={\MR{3177322}},
   doi={10.1016/j.jfa.2014.02.008},
}

\bib{Fre31}{article}{
   author={Freudenthal, Hans},
   title={\"{U}ber die Enden topologischer R\"{a}ume und Gruppen},
   language={German},
   journal={Math. Z.},
   volume={33},
   date={1931},
   number={1},
   pages={692--713},
   issn={0025-5874},
   review={\MR{1545233}},
   doi={10.1007/BF01174375},
}

\bib{FOT94}{book}{
   author={Fukushima, Masatoshi},
   author={\={O}shima, Yoichi},
   author={Takeda, Masayoshi},
   title={Dirichlet forms and symmetric Markov processes},
   series={De Gruyter Studies in Mathematics},
   volume={19},
   publisher={Walter de Gruyter \& Co., Berlin},
   date={1994},
   pages={x+392},
   isbn={3-11-011626-X},
   review={\MR{1303354}},
   doi={10.1515/9783110889741},
}

\bib{Gaf51}{article}{
   author={Gaffney, Matthew P.},
   title={The harmonic operator for exterior differential forms},
   journal={Proc. Nat. Acad. Sci. U.S.A.},
   volume={37},
   date={1951},
   pages={48--50},
   issn={0027-8424},
   review={\MR{48138}},
   doi={10.1073/pnas.37.1.48},
}

\bib{Gaf54}{article}{
   author={Gaffney, Matthew P.},
   title={A special Stokes's theorem for complete Riemannian manifolds},
   journal={Ann. of Math. (2)},
   volume={60},
   date={1954},
   pages={140--145},
   issn={0003-486X},
   review={\MR{62490}},
   doi={10.2307/1969703},
}

\bib{GHKLW15}{article}{
   author={Georgakopoulos, Agelos},
   author={Haeseler, Sebastian},
   author={Keller, Matthias},
   author={Lenz, Daniel},
   author={Wojciechowski, Rados\l aw K.},
   title={Graphs of finite measure},
   language={English, with English and French summaries},
   journal={J. Math. Pures Appl. (9)},
   volume={103},
   date={2015},
   number={5},
   pages={1093--1131},
   issn={0021-7824},
   review={\MR{3333051}},
   doi={10.1016/j.matpur.2014.10.006},
}

\bib{GS13}{article}{
   author={Gol\'{e}nia, Sylvain},
   author={Schumacher, Christoph},
   title={Comment on ``The problem of deficiency indices for discrete
   Schr\"{o}dinger operators on locally finite graphs'' [J. Math. Phys. 52,
   063512 (2011)] [MR2841768]},
   journal={J. Math. Phys.},
   volume={54},
   date={2013},
   number={6},
   pages={064101, 4},
   issn={0022-2488},
   review={\MR{3112558}},
   doi={10.1063/1.4803899},
}

\bib{Gri99}{article}{
   author={Grigor\cprime yan, Alexander},
   title={Analytic and geometric background of recurrence and non-explosion
   of the Brownian motion on Riemannian manifolds},
   journal={Bull. Amer. Math. Soc. (N.S.)},
   volume={36},
   date={1999},
   number={2},
   pages={135--249},
   issn={0273-0979},
   review={\MR{1659871}},
   doi={10.1090/S0273-0979-99-00776-4},
}

\bib{GM13}{article}{
   author={Grigor'yan, Alexander},
   author={Masamune, Jun},
   title={Parabolicity and stochastic completeness of manifolds in terms of
   the Green formula},
   language={English, with English and French summaries},
   journal={J. Math. Pures Appl. (9)},
   volume={100},
   date={2013},
   number={5},
   pages={607--632},
   issn={0021-7824},
   review={\MR{3115827}},
   doi={10.1016/j.matpur.2013.01.015},
}

\bib{GHM12}{article}{
   author={Grigor'yan, Alexander},
   author={Huang, Xueping},
   author={Masamune, Jun},
   title={On stochastic completeness of jump processes},
   journal={Math. Z.},
   volume={271},
   date={2012},
   number={3-4},
   pages={1211--1239},
   issn={0025-5874},
   review={\MR{2945605}},
   doi={10.1007/s00209-011-0911-x},
}

\bib{HKLMS17}{article}{
   author={Haeseler, Sebastian},
   author={Keller, Matthias},
   author={Lenz, Daniel},
   author={Masamune, Jun},
   author={Schmidt, Marcel},
   title={Global properties of Dirichlet forms in terms of Green's formula},
   journal={Calc. Var. Partial Differential Equations},
   volume={56},
   date={2017},
   number={5},
   pages={Paper No. 124, 43},
   issn={0944-2669},
   review={\MR{3682862}},
   doi={10.1007/s00526-017-1216-7},
}

\bib{HKLW12}{article}{
   author={Haeseler, Sebastian},
   author={Keller, Matthias},
   author={Lenz, Daniel},
   author={Wojciechowski, Rados\l aw},
   title={Laplacians on infinite graphs: Dirichlet and Neumann boundary
   conditions},
   journal={J. Spectr. Theory},
   volume={2},
   date={2012},
   number={4},
   pages={397--432},
   issn={1664-039X},
   review={\MR{2947294}},
   doi={10.4171/jst/35},
}

\bib{Hal64}{article}{
   author={Halin, R.},
   title={\"{U}ber unendliche Wege in Graphen},
   language={German},
   journal={Math. Ann.},
   volume={157},
   date={1964},
   pages={125--137},
   issn={0025-5831},
   review={\MR{170340}},
   doi={10.1007/BF01362670},
}

\bib{Ham20a}{article}{
   author={Hamburger, Hans},
   title={\"{U}ber eine Erweiterung des Stieltjesschen Momentenproblems},
   language={German},
   journal={Math. Ann.},
   volume={82},
   date={1920},
   number={1-2},
   pages={120--164},
   issn={0025-5831},
   review={\MR{1511978}},
   doi={10.1007/BF01457982},
}

\bib{Ham20b}{article}{
   author={Hamburger, Hans},
   title={\"{U}ber eine Erweiterung des Stieltjesschen Momentenproblems},
   language={German},
   journal={Math. Ann.},
   volume={81},
   date={1920},
   number={2-4},
   pages={235--319},
   issn={0025-5831},
   review={\MR{1511966}},
   doi={10.1007/BF01564869},
}

\bib{HMW21}{article}{
   author={Hua, Bobo},
   author={Masamune, Jun},
   author={Wojciechowski, Rados\l aw K.},
   title={Essential self-adjointness and the $L^2$-Liouville property},
   journal={J. Fourier Anal. Appl.},
   volume={27},
   date={2021},
   number={2},
   pages={Paper No. 26, 27},
   issn={1069-5869},
   review={\MR{4231682}},
   doi={10.1007/s00041-021-09833-2},
}

\bib{HM24}{article}{
   author={Hua, Bobo},
   author={M\"{u}nch, Florentin},
   title={Graphs with nonnegative curvature outside a finite subset,
   harmonic functions, and number of ends},
   journal={J. Lond. Math. Soc. (2)},
   volume={110},
   date={2024},
   number={6},
   pages={Paper No. e70034, 19},
   issn={0024-6107},
   review={\MR{4829680}},
   doi={10.1112/jlms.70034},
}

\bib{Hua11}{article}{
   author={Huang, Xueping},
   title={Stochastic incompleteness for graphs and weak Omori-Yau maximum
   principle},
   journal={J. Math. Anal. Appl.},
   volume={379},
   date={2011},
   number={2},
   pages={764--782},
   issn={0022-247X},
   review={\MR{2784357}},
   doi={10.1016/j.jmaa.2011.02.009},
}

\bib{Hua11b}{book}{
   author={Huang, Xueping},
   title={On stochastic completeness of weighted
graphs},
   note={Thesis (Ph.D.)--Bielefeld University},
   date={2011},
   pages={115},
   url={URL: http://pub.uni-bielefeld.de/publication/2316893},
}

\bib{HKMW13}{article}{
   author={Huang, Xueping},
   author={Keller, Matthias},
   author={Masamune, Jun},
   author={Wojciechowski, Rados\l aw K.},
   title={A note on self-adjoint extensions of the Laplacian on weighted
   graphs},
   journal={J. Funct. Anal.},
   volume={265},
   date={2013},
   number={8},
   pages={1556--1578},
   issn={0022-1236},
   review={\MR{3079229}},
   doi={10.1016/j.jfa.2013.06.004},
}

\bib{IKMW25}{article}{
   author={Inoue, Atsushi},
   author={Ku, Sean},
   author={Masamune, Jun},
   author={Wojciechowski, Rados\l aw K.},
   title={Essential self-adjointness of the Laplacian on weighted graphs:
   harmonic functions, stability, characterizations and capacity},
   journal={Math. Phys. Anal. Geom.},
   volume={28},
   date={2025},
   number={2},
   pages={Paper No. 12, 43},
   issn={1385-0172},
   review={\MR{4910968}},
   doi={10.1007/s11040-025-09498-z},
}

\bib{JP11}{article}{
   author={Jorgensen, Palle E. T.},
   author={Pearse, Erin P. J.},
   title={Spectral reciprocity and matrix representations of unbounded
   operators},
   journal={J. Funct. Anal.},
   volume={261},
   date={2011},
   number={3},
   pages={749--776},
   issn={0022-1236},
   review={\MR{2799579}},
   doi={10.1016/j.jfa.2011.01.016},
}

\bib{Kel15}{article}{
   author={Keller, Matthias},
   title={Intrinsic metrics on graphs: a survey},
   conference={
      title={Mathematical technology of networks},
   },
   book={
      series={Springer Proc. Math. Stat.},
      volume={128},
      publisher={Springer, Cham},
   },
   date={2015},
   pages={81--119},
   review={\MR{3375157}},
   doi={10.1007/978-3-319-16619-3\_7},
}

\bib{KL10}{article}{
   author={Keller, M.},
   author={Lenz, D.},
   title={Unbounded Laplacians on graphs: basic spectral properties and the
   heat equation},
   journal={Math. Model. Nat. Phenom.},
   volume={5},
   date={2010},
   number={4},
   pages={198--224},
   issn={0973-5348},
   review={\MR{2662456}},
   doi={10.1051/mmnp/20105409},
}

\bib{KL12}{article}{
   author={Keller, Matthias},
   author={Lenz, Daniel},
   title={Dirichlet forms and stochastic completeness of graphs and
   subgraphs},
   journal={J. Reine Angew. Math.},
   volume={666},
   date={2012},
   pages={189--223},
   issn={0075-4102},
   review={\MR{2920886}},
   doi={10.1515/CRELLE.2011.122},
}

\bib{KLW13}{article}{
   author={Keller, Matthias},
   author={Lenz, Daniel},
   author={Wojciechowski, Rados\l aw K.},
   title={Volume growth, spectrum and stochastic completeness of infinite
   graphs},
   journal={Math. Z.},
   volume={274},
   date={2013},
   number={3-4},
   pages={905--932},
   issn={0025-5874},
   review={\MR{3078252}},
   doi={10.1007/s00209-012-1101-1},
}

\bib{KLW21}{book}{
   author={Keller, Matthias},
   author={Lenz, Daniel},
   author={Wojciechowski, Rados\l aw K.},
   title={Graphs and discrete Dirichlet spaces},
   series={Grundlehren der mathematischen Wissenschaften [Fundamental
   Principles of Mathematical Sciences]},
   volume={358},
   publisher={Springer, Cham},
   date={[2021] \copyright 2021},
   pages={xv+668},
   isbn={978-3-030-81458-8},
   isbn={978-3-030-81459-5},
   review={\MR{4383783}},
   doi={10.1007/978-3-030-81459-5},
}

\bib{KM19}{article}{
   author={Keller, Matthias},
   author={M\"{u}nch, Florentin},
   title={A new discrete Hopf-Rinow theorem},
   journal={Discrete Math.},
   volume={342},
   date={2019},
   number={9},
   pages={2751--2757},
   issn={0012-365X},
   review={\MR{3975035}},
   doi={10.1016/j.disc.2019.03.014},
}

\bib{KMW25}{article}{
   author={Keller, Matthias},
   author={M\"{u}nch, Florentin},
   author={Wojciechowski, Rados\l aw K.},
   title={Neumann semigroup, subgraph convergence, form uniqueness,
   stochastic completeness and the Feller property},
   journal={J. Geom. Anal.},
   volume={35},
   date={2025},
   number={1},
   pages={Paper No. 14, 23},
   issn={1050-6926},
   review={\MR{4821965}},
   doi={10.1007/s12220-024-01838-9},
}

\bib{KR24}{article}{
   author={Keller, Matthias},
   author={Rose, Christian},
   title={Anchored heat kernel upper bounds on graphs with unbounded
   geometry and anti-trees},
   journal={Calc. Var. Partial Differential Equations},
   volume={63},
   date={2024},
   number={1},
   pages={Paper No. 20, 18},
   issn={0944-2669},
   review={\MR{4679979}},
   doi={10.1007/s00526-023-02622-3},
}

\bib{KMN22a}{article}{
   author={Kostenko, Aleksey},
   author={Malamud, Mark},
   author={Nicolussi, Noema},
   title={A Glazman-Povzner-Wienholtz theorem on graphs},
   journal={Adv. Math.},
   volume={395},
   date={2022},
   pages={Paper No. 108158, 30},
   issn={0001-8708},
   review={\MR{4356814}},
   doi={10.1016/j.aim.2021.108158},
}

\bib{KMN22b}{article}{
   author={Kostenko, Aleksey},
   author={Mugnolo, Delio},
   author={Nicolussi, Noema},
   title={Self-adjoint and Markovian extensions of infinite quantum graphs},
   journal={J. Lond. Math. Soc. (2)},
   volume={105},
   date={2022},
   number={2},
   pages={1262--1313},
   issn={0024-6107},
   review={\MR{4400947}},
   doi={10.1112/jlms.12539},
}

\bib{KN21}{article}{
   author={Kostenko, Aleksey},
   author={Nicolussi, Noema},
   title={A note on the Gaffney Laplacian on infinite metric graphs},
   journal={J. Funct. Anal.},
   volume={281},
   date={2021},
   number={10},
   pages={Paper No. 109216, 20},
   issn={0022-1236},
   review={\MR{4308056}},
   doi={10.1016/j.jfa.2021.109216},
}

\bib{KN21b}{article}{
   author={Kostenko, Aleksey},
   author={Nicolussi, Noema},
   title={Quantum graphs on radially symmetric antitrees},
   journal={J. Spectr. Theory},
   volume={11},
   date={2021},
   number={2},
   pages={411--460},
   issn={1664-039X},
   review={\MR{4293483}},
   doi={10.4171/jst/346},
}

\bib{KN22}{book}{
   author={Kostenko, Aleksey},
   author={Nicolussi, Noema},
   title={Laplacians on infinite graphs},
   series={Memoirs of the European Mathematical Society},
   volume={3},
   publisher={EMS Press, Berlin},
   date={2022 \copyright 2023},
   pages={viii+232},
   isbn={978-3-98547-025-9},
   isbn={978-3-98547-525-4},
   review={\MR{4600776}},
   doi={10.4171/mems/3},
}

\bib{KN23}{article}{
   author={Kostenko, Aleksey},
   author={Nicolussi, Noema},
   title={Laplacians on infinite graphs: discrete vs. continuous},
   conference={
      title={European Congress of Mathematics},
   },
   book={
      publisher={EMS Press, Berlin},
   },
   date={[2023] \copyright 2023},
   pages={295--323},
   review={\MR{4615747}},
}

\bib{KS15}{article}{
   author={Kuwae, Kazuhiro},
   author={Shiozawa, Yuichi},
   title={A remark on the uniqueness of Silverstein extensions of symmetric
   Dirichlet forms},
   journal={Math. Nachr.},
   volume={288},
   date={2015},
   number={4},
   pages={389--401},
   issn={0025-584X},
   review={\MR{3320454}},
   doi={10.1002/mana.201300322},
}

\bib{Ku}{article}{
   author={Ku, Sean},
   title={Essential self-adjointness of semi-bounded Schr{\"o}dinger operators on birth--death chains},
   eprint={https://doi.org/10.48550/arXiv.2405.19364},
}

\bib{LSS24}{article}{
   author={Lenz, Daniel},
   author={Schmidt, Marcel},
   author={Seifert, Felix},
   title={Note on intrinsic metrics on graphs},
   journal={Math. Nachr.},
   volume={297},
   date={2024},
   number={11},
   pages={4307--4321},
   issn={0025-584X},
   review={\MR{4827273}},
   doi={10.1002/mana.202400099},
}

\bib{LSW21}{article}{
   author={Lenz, Daniel},
   author={Schmidt, Marcel},
   author={Wirth, Melchior},
   title={Uniqueness of form extensions and domination of semigroups},
   journal={J. Funct. Anal.},
   volume={280},
   date={2021},
   number={6},
   pages={Paper No. 108848, 27},
   issn={0022-1236},
   review={\MR{4190584}},
   doi={10.1016/j.jfa.2020.108848},
}

\bib{Mas99}{article}{
   author={Masamune, Jun},
   title={Essential self-adjointness of Laplacians on Riemannian manifolds
   with fractal boundary},
   journal={Comm. Partial Differential Equations},
   volume={24},
   date={1999},
   number={3-4},
   pages={749--757},
   issn={0360-5302},
   review={\MR{1683058}},
   doi={10.1080/03605309908821442},
}

\bib{Mas05}{article}{
   author={Masamune, Jun},
   title={Analysis of the Laplacian of an incomplete manifold with almost
   polar boundary},
   journal={Rend. Mat. Appl. (7)},
   volume={25},
   date={2005},
   number={1},
   pages={109--126},
   issn={1120-7183},
   review={\MR{2142127}},
}

\bib{MU11}{article}{
   author={Masamune, Jun},
   author={Uemura, Toshihiro},
   title={Conservation property of symmetric jump processes},
   language={English, with English and French summaries},
   journal={Ann. Inst. Henri Poincar\'{e} Probab. Stat.},
   volume={47},
   date={2011},
   number={3},
   pages={650--662},
   issn={0246-0203},
   review={\MR{2841069}},
   doi={10.1214/09-AIHP368},
}

\bib{MUW12}{article}{
   author={Masamune, Jun},
   author={Uemura, Toshihiro},
   author={Wang, Jian},
   title={On the conservativeness and the recurrence of symmetric
   jump-diffusions},
   journal={J. Funct. Anal.},
   volume={263},
   date={2012},
   number={12},
   pages={3984--4008},
   issn={0022-1236},
   review={\MR{2990064}},
   doi={10.1016/j.jfa.2012.09.014},
}

\bib{Mil11}{article}{
   author={Milatovic, Ognjen},
   title={Essential self-adjointness of magnetic Schr\"{o}dinger operators on
   locally finite graphs},
   journal={Integral Equations Operator Theory},
   volume={71},
   date={2011},
   number={1},
   pages={13--27},
   issn={0378-620X},
   review={\MR{2822425}},
   doi={10.1007/s00020-011-1882-3},
}

\bib{PS12}{article}{
   author={Pigola, Stefano},
   author={Setti, Alberto G.},
   title={The Feller property on Riemannian manifolds},
   journal={J. Funct. Anal.},
   volume={262},
   date={2012},
   number={5},
   pages={2481--2515},
   issn={0022-1236},
   review={\MR{2876412}},
   doi={10.1016/j.jfa.2011.12.001},
}

\bib{RS1}{book}{
   author={Reed, Michael},
   author={Simon, Barry},
   title={Methods of modern mathematical physics. I},
   edition={2},
   note={Functional analysis},
   publisher={Academic Press, Inc. [Harcourt Brace Jovanovich, Publishers],
   New York},
   date={1980},
   pages={xv+400},
   isbn={0-12-585050-6},
   review={\MR{751959 (85e:46002)}},
}

\bib{RS2}{book}{
   author={Reed, Michael},
   author={Simon, Barry},
   title={Methods of modern mathematical physics. II. Fourier analysis,
   self-adjointness},
   publisher={Academic Press [Harcourt Brace Jovanovich, Publishers], New
   York-London},
   date={1975},
   pages={xv+361},
   review={\MR{0493420 (58 \#12429b)}},
}

\bib{Sad16}{article}{
   author={Sadel, Christian},
   title={Anderson transition at two-dimensional growth rate on antitrees
   and spectral theory for operators with one propagating channel},
   journal={Ann. Henri Poincar\'{e}},
   volume={17},
   date={2016},
   number={7},
   pages={1631--1675},
   issn={1424-0637},
   review={\MR{3510466}},
   doi={10.1007/s00023-015-0456-3},
}

\bib{Sch17}{article}{
   author={Schmidt, Marcel},
   title={Global properties of Dirichlet forms on discrete spaces},
   journal={Dissertationes Math.},
   volume={522},
   date={2017},
   pages={43},
   issn={0012-3862},
   review={\MR{3649359}},
   doi={10.4064/dm738-7-2016},
}

\bib{Sch20a}{article}{
   author={Schmidt, Marcel},
   title={A note on reflected Dirichlet forms},
   journal={Potential Anal.},
   volume={52},
   date={2020},
   number={2},
   pages={245--279},
   issn={0926-2601},
   review={\MR{4064320}},
   doi={10.1007/s11118-018-9745-z},
}

\bib{Sch20b}{article}{
   author={Schmidt, Marcel},
   title={On the existence and uniqueness of self-adjoint realizations of
   discrete (magnetic) Schr\"{o}dinger operators},
   conference={
      title={Analysis and geometry on graphs and manifolds},
   },
   book={
      series={London Math. Soc. Lecture Note Ser.},
      volume={461},
      publisher={Cambridge Univ. Press, Cambridge},
   },
   date={2020},
   pages={250--327},
   review={\MR{4412977}},
}

\bib{SW71}{book}{
   author={Stein, Elias M.},
   author={Weiss, Guido},
   title={Introduction to Fourier analysis on Euclidean spaces},
   series={Princeton Mathematical Series, No. 32},
   publisher={Princeton University Press, Princeton, NJ},
   date={1971},
   pages={x+297},
   review={\MR{304972}},
}

\bib{Str83}{article}{
   author={Strichartz, Robert S.},
   title={Analysis of the Laplacian on the complete Riemannian manifold},
   journal={J. Functional Analysis},
   volume={52},
   date={1983},
   number={1},
   pages={48--79},
   issn={0022-1236},
   review={\MR{705991}},
   doi={10.1016/0022-1236(83)90090-3},
}

\bib{Stu94}{article}{
   author={Sturm, Karl-Theodor},
   title={Analysis on local Dirichlet spaces. I. Recurrence,
   conservativeness and $L^p$-Liouville properties},
   journal={J. Reine Angew. Math.},
   volume={456},
   date={1994},
   pages={173--196},
   issn={0075-4102},
   review={\MR{1301456}},
   doi={10.1515/crll.1994.456.173},
}

\bib{Tor10}{article}{
   author={Torki-Hamza, Nabila},
   title={Laplaciens de graphes infinis (I-graphes) m\'{e}triquement complets},
   language={French, with English and French summaries},
   journal={Confluentes Math.},
   volume={2},
   date={2010},
   number={3},
   pages={333--350},
   issn={1793-7442},
   review={\MR{2740044}},
   doi={10.1142/S179374421000020X},
}

\bib{Web10}{article}{
   author={Weber, Andreas},
   title={Analysis of the physical Laplacian and the heat flow on a locally
   finite graph},
   journal={J. Math. Anal. Appl.},
   volume={370},
   date={2010},
   number={1},
   pages={146--158},
   issn={0022-247X},
   review={\MR{2651136}},
   doi={10.1016/j.jmaa.2010.04.044},
}

\bib{Woe00}{book}{
   author={Woess, Wolfgang},
   title={Random walks on infinite graphs and groups},
   series={Cambridge Tracts in Mathematics},
   volume={138},
   publisher={Cambridge University Press, Cambridge},
   date={2000},
   pages={xii+334},
   isbn={0-521-55292-3},
   review={\MR{1743100}},
   doi={10.1017/CBO9780511470967},
}

\bib{Woe09}{book}{
   author={Woess, Wolfgang},
   title={Denumerable Markov chains},
   series={EMS Textbooks in Mathematics},
   note={Generating functions, boundary theory, random walks on trees},
   publisher={European Mathematical Society (EMS), Z\"{u}rich},
   date={2009},
   pages={xviii+351},
   isbn={978-3-03719-071-5},
   review={\MR{2548569}},
   doi={10.4171/071},
}

\bib{Woj08}{book}{
   author={Wojciechowski, Radoslaw Krzysztof},
   title={Stochastic completeness of graphs},
   note={Thesis (Ph.D.)--City University of New York},
   publisher={ProQuest LLC, Ann Arbor, MI},
   date={2008},
   pages={87},
   isbn={978-0549-58579-4},
   review={\MR{2711706}},
}

\bib{Woj11}{article}{
   author={Wojciechowski, Radoslaw Krzysztof},
   title={Stochastically incomplete manifolds and graphs},
   conference={
      title={Random walks, boundaries and spectra},
   },
   book={
      series={Progr. Probab.},
      volume={64},
      publisher={Birkh\"{a}user/Springer Basel AG, Basel},
   },
   date={2011},
   pages={163--179},
   review={\MR{3051698}},
   doi={10.1007/978-3-0346-0244-0\_9},
}

\bib{Woj17}{article}{
   author={Wojciechowski, Rados\l aw K.},
   title={The Feller property for graphs},
   journal={Trans. Amer. Math. Soc.},
   volume={369},
   date={2017},
   number={6},
   pages={4415--4431},
   issn={0002-9947},
   review={\MR{3624415}},
   doi={10.1090/tran/6901},
}

\bib{Woj21}{article}{
   author={Wojciechowski, Rados\l aw K.},
   title={Stochastic completeness of graphs: bounded Laplacians, intrinsic
   metrics, volume growth and curvature},
   journal={J. Fourier Anal. Appl.},
   volume={27},
   date={2021},
   number={2},
   pages={Paper No. 30, 45},
   issn={1069-5869},
   review={\MR{4240786}},
   doi={10.1007/s00041-021-09821-6},
}

\end{biblist}	
\end{bibdiv}

\end{document}